%% file: arXiv2605.tex
\documentclass[a4paper,UKenglish,cleveref,autoref,thm-restate]{lipics-v2021}

\title{Proof Identity and Categorical Models of BV} 

\author{Matteo {Acclavio}}{FORM, University of Southern Denmark, DK \and University of Sussex, UK }{}{https://orcid.org/0000-0002-0425-2825}{}
\author{Lutz Stra\ss burger}{Inria \and LIX, Ecole Polytechnique, Institut Polytechnique Paris, France}{}{https://orcid.org/0000-0003-4661-6540}{Partially supported by the French Ministry for Europe and Foreign Affairs (MEAE), the Embassy of France in the UK, and the French Ministry of Higher Education and Research (MESR), via the PHC Sophie Germain project
``Using Formal Logic to Reduce Bias in Large Language Models''}
\author{Vladimir Zamdzhiev}{Université Paris-Saclay, CNRS, ENS Paris-Saclay, Inria, Laboratoire Méthodes Formelles, 91190, Gif-sur-Yvette, France}{}{https://orcid.org/0000-0002-9061-3921}{This work has been partially funded by the French
National Research Agency (ANR) within the framework of ``Plan France 2030'',
under the research projects EPIQ ANR-22-PETQ-0007, HQI-Acquisition
ANR-22-PNCQ-0001, and HQI-R\&D ANR-22-PNCQ-0002.}

\authorrunning{M.~Acclavio, L.~Stra\ss burger and V.~Zamdzhiev} 

\Copyright{Matteo Acclavio, Lutz Stra\ss burger and Vladimir Zamdzhiev} 

\ccsdesc[500]{Theory of computation~Logic}

\keywords{BV, Categorical Semantics, Denotational Semantics}

\category{} 

\relatedversion{} 

\acknowledgements{We thank the anonymous referees for their feedback which led to multiple improvements to the paper. We also thank James Hefford for discussions related to this paper.}

\nolinenumbers 

\EventEditors{Frank Pfenning}
\EventNoEds{1}
\EventLongTitle{11th International Conference on Formal Structures for Computation and Deduction (FSCD 2026)}
\EventShortTitle{FSCD 2026}
\EventAcronym{FSCD}
\EventYear{2026}
\EventDate{July 20--23, 2026}
\EventLocation{Lisbon, Portugal}
\EventLogo{}
\SeriesVolume{378}
\ArticleNo{22}
\theoremstyle{definition}
\newtheorem{notation}[theorem]{Notation}

\DeclareMathOperator{\hagertimes}{\ensuremath{\stackrel{\scriptscriptstyle h}{\otimes}}}%
\DeclareMathOperator{\projtens}{\ensuremath{\widehat{\otimes}}}%
\DeclareMathOperator{\ptimes}{\projtens}%
\DeclareMathOperator{\injtens}{\ensuremath{\widecheck{\otimes}}}%
\DeclareMathOperator{\itimes}{\injtens}%
\DeclareMathOperator{\ptimessk}{\underline{\projtens}}%

\makeatletter
\DeclareRobustCommand\widecheck[1]{{\mathpalette\@widecheck{#1}}}
\def\@widecheck#1#2{%
    \setbox\z@\hbox{\m@th$#1#2$}%
    \setbox\tw@\hbox{\m@th$#1%
       \widehat{%
          \vrule\@width\z@\@height\ht\z@
          \vrule\@height\z@\@width\wd\z@}$}%
    \dp\tw@-\ht\z@
    \@tempdima\ht\z@ \advance\@tempdima2\ht\tw@ \divide\@tempdima\thr@@
    \setbox\tw@\hbox{%
       \raise\@tempdima\hbox{\scalebox{1}[-1]{\lower\@tempdima\box
\tw@}}}%
    {\ooalign{\box\tw@ \cr \box\z@}}}
\makeatother

\newcommand{\lutz}[1]{{\color{blue}[[Lutz: #1]]}}

\usepackage{marginnote}

\usepackage[T1]{fontenc}
\usepackage{mathtools}
\usepackage{virginialake}
\usepackage{stmaryrd}
\vlnosmallleftlabels
\vlnostructuressyntax
\odframefalse
\odbackgroundtrue
\odbackgroundfirstfalse
\usepackage[lneg]{proofzilla}
\usepackage{newaf}
\usepackage{xspace}
\input{macros.tex}
\usepackage{tikz}
\usepackage{tikz-cd}
\usepackage{adjustbox}
\usepackage{physics}

\newcommand{\AFsymb}{\mathsf{AF}}
\newcommand{\AFof}[1]{\AFsymb\left(#1\right)}
\newcommand{\Yof}[1]{\mathsf{Y}\left(#1\right)}

\newcommand{\afD}[1][]{\phi_{#1}}

\def\iof#1{\mathsf{in}_{#1}}
\def\oof#1{\mathsf{out}_{#1}}
\def\i{\mathsf{i}}

\def\yanksto{\stackrel{\mathsf{Y}}{\rightsquigarrow}}

\newcommand{\vlun}{\mathbb{I}}
\renewcommand{\lunit}{\vlun}

\newcommand{\ccunit}{J} 

\newcommand{\CEto}{\stackrel{\mathsf{CE}}{\leadsto}}

\newcommand{\swap}{\sigma}
\newcommand{\id}{\ensuremath{\mathrm{id}}}%
\newcommand{\CC}{\ensuremath{\mathbf{C}}}%
\newcommand{\DD}{\ensuremath{\mathbf{D}}}%
\newcommand{\sem}[1]{{\left\llbracket #1 \right\rrbracket}}%
\newcommand{\afsem}[1]{{\llparenthesis #1 \rrparenthesis}}%
\newcommand{\Afsem}[1]{\left(\!\left|#1\right|\!\right)}

\newcommand{\FdVect}{\ensuremath{\mathbf{FdVect}}}%
\newcommand{\FdVectsk}{\ensuremath{\mathbf{FdVect_{sk}}}}%
\newcommand{\fdOS}{\ensuremath{\mathbf{FdOS}}}%
\newcommand{\FdOS}{\fdOS}%
\newcommand{\FdOSsk}{\mathbf{FdOS_{sk}}}%
\newcommand{\Caus}{{\ensuremath{\mathbf{Caus}}}}%

\newcommand\transpose{\cdot^{\mathrm{t}}}
\newcommand\catid{\mathsf{Id}}

\newcommand{\mix}{\mathsf{mix}}
\newcommand{\mixz}{\mathsf{mix_0}}

\newcommand{\catC}{\ensuremath{\mathbf{C}}}%
\newcommand{\catD}{\ensuremath{\mathbf{D}}}%

\renewcommand{\lpar}{\vlpa}

\begin{document}

\maketitle
\begin{abstract}
  \bvcats are a recent development that aims to give categorical semantics to proofs in the logic $\BV$. However, due to the absence of a coherence theorem on one side and a well-defined notion of proof identity for $\BV$ on the other side, the precise relation between \bvcats and the logic $\BV$ is still not clear. To improve on this situation, we define in this paper a notion of proof identity for $\BV$, based on the notion of atomic flows, which can be seen as a special form of string diagrams. 
  Based on this notion of proof identity, we then strengthen the existing notion of \bvcat and prove that it is sound with respect to the logic.
\end{abstract}

\newpage
\section{Introduction}

The logic $\BV$~\cite{gug:str:01,gug:SIS} is an extension of multiplicative linear logic ($\MLL$) with a self-dual non-commutative connective~$\vlse$, called \emph{seq}. It came to life because of an attempt to give a deductive proof system to \emph{pomset logic}~\cite{retore:phd,retore:99,ret:newPomset}, which is defined in terms of proof nets and shares with $\BV$ the same language for formulas. Only recently, it has been shown that the two logics are different~\cite{tito:lutz:csl22,tito:str:SIS-III}. 
Nonetheless, the logic $\BV$ has successfully been used to model sequentiality in the context of concurrency~\cite{bru:02,hor:tiu:19,hor:tiu:ama:cio:private} and causality in the context of quantum computing~\cite{blu:pan:slav:deep,blu:gug:iva:pan:str:quantum}.
On the one hand, the syntax and proof theory of $\BV$ is well-developed \cite{gug:SIS,tiu:SIS-II,gug:str:01}, and it has a wide range of applications \cite{acc:str:IBV,bru:02,hor:tiu:ama:cio:private},
but on the other hand, the semantics and possible models of $\BV$ are not so well investigated. 
This is currently changing through the increasing interest from the quantum computing community~\cite{sim:kiss:BV,simmons:phd,sim:kis:complete,qcs} and the development of the notion of \bvcat~\cite{blu:pan:slav:deep,bv-cats2}.
Some non-trivial examples of \bvcats are the \emph{higher-order causal theories} of~\cite{sim:kiss:BV} and the \emph{finite-dimensional operator spaces} of~\cite{qcs}.

A natural question to ask is whether the aforementioned \bvcats are \emph{sound} models of $\BV$.
This question is less silly than it seems, as the logic
$\BV$ and the notion of \bvcat have been investigated by essentially
disjoint communities. 
The precise relation between \bvcats and the logic $\BV$ is not known and neither is soundness.

In this paper we want to improve on this situation and bring the two communities together by proving a soundness result: whenever two proofs are equal in $\BV$, then they have the same interpretation in the categorical model.  
But for that, we need to answer another question: When are two $\BV$-proofs equal? 
One of the most widely accepted notions of proof identity is via proof normalization, i.e., cut elimination. 
However, $\BV$ does not have a sequent calculus proof system~\cite{tiu:SIS-II,tito:str:SIS-III}, but is presented in a deep inference setting. This means that cut elimination for $\BV$ is not proved by the standard methods, but via a \emph{splitting lemma}~\cite{gug:SIS,tubella:phd}, which is hard to mimic in a categorical setting.
Furthermore, even in the cut-free case, it is not clear when two $\BV$-proofs should be the same, as deep inference allows for more rule permutations than a sequent system.

For this reason, we introduce in this paper a notion of \emph{atomic flows}~\cite{gug:gun:flows,gug:gun:str:LICS10} for $\BV$, that we use to identify proofs, i.e., two $\BV$ proofs are the same if they are mapped to the same atomic flow. In that respect, atomic flows share properties of \emph{proof nets}~\cite{gir:ll,girard:96:PN} and \emph{string diagrams}~\cite{lafont:boolean,polybook,string:diagrams:RTI,string:diagrams:RTII}.

The next step, after having established a notion of proof identity for $\BV$,
is to find the right notion for a categorical model of $\BV$. However, it is not clear if the existing proposals for \bvcats~\cite{blu:pan:slav:deep,bv-cats2} are invariant with respect to cut elimination
and they also do not seem to match well with the notion of proof identity that we introduce here.
In order to provide a categorical semantics for $\BV$ that is sound in the sense discussed above, we propose a new notion of \bvcat that we call \emph{strong \bvcat}.
Indeed, a strong \bvcat is also a \bvcat in the sense of~\cite{blu:pan:slav:deep,bv-cats2} (Theorem \ref{thm:strong-bv-category}), but the converse implications are unlikely to hold, in general.
The key idea behind the definition of a strong \bvcat (Definition \ref{def:bv-model}) is the following: our atomic flows are string diagrams that are known to be the internal language of \emph{strict compact closed categories}~\cite{kelly:laplaza:80,selinger:dagger}, and since the notion of proof identity is based on atomic flows, we may simply lift the coherence properties of strict compact closed categories (which are already known \cite{kelly:laplaza:80}) to the relevant diagrams of the \bvcat in which we interpret our proofs by assuming the existence of a suitable functor.

In summary, our paper makes the following contributions:
\begin{enumerate}
  \item We introduce atomic flows for $\BV$ and use them to define proof equivalence. We show that the \emph{splitting lemma}~\cite{gug:SIS,tubella:phd}, which is at the core of the cut elimination procedure, preserves the atomic flows, and therefore cut elimination in $\BV$ corresponds to \emph{yanking} in atomic flows:
  \begin{equation}\label{eq:yankingIntro}
  \hfill
  \begin{AF}
    \naf{-1,-.5}{\nafv{.75}}
    \naf{0,1}{\nafhdx{}{}{}{}{0.75}}
    \naf{2,-.5}{\nafhux{}{}{}{}{0.75}}
    \naf{3,1}{\nafv{.75}}
  \end{AF}
  \quad\yanksto\quad
  \MAF{\nafv{1.5}}
  \qqquand
  \begin{AF}
    \naf{3,-.5}{\nafv{.75}}
    \naf{2,1}{\nafhdx{}{}{}{}{0.75}}
    \naf{0,-.5}{\nafhux{}{}{}{}{0.75}}
    \naf{-1,1}{\nafv{.75}}
  \end{AF}
  \quad\yanksto\quad
  \MAF{\nafv{1.5}}
  \hfill
  \end{equation}
  Therefore, if we impose~\eqref{eq:yankingIntro} as an equality on atomic
  flows, then atomic flows are preserved under cut elimination.
  Note that this is not trivial: whereas in linear logic yanking corresponds to cut elimination \cite{kelly:maclane:71,hughes:freestar}, this is not the case in classical logic \cite{gug:gun:str:LICS10}.

  \item 
  We introduce a notion of \emph{strong \bvcat} which has a suitable functor into a strict compact closed category as part of the definition. This gives us a sufficient amount of coherence properties which we use to prove our main semantic result: if two $\BV$ proofs are equivalent, then they have the same interpretation in the model. We also show that the strictified versions of the aforementioned examples of higher-order causal theories~\cite{sim:kiss:BV} and finite-dimensional operator spaces~\cite{qcs} form strong \bvcats in our sense.
\end{enumerate}
\subparagraph*{Outline of the paper.} In Section~\ref{sec:prelims}, we recall the logic $\BV$, then, in Section~\ref{sec:AF}, we introduce atomic flows, and in Section~\ref{sec:splitting} we prove that they are preserved under cut elimination in $\BV$. Then, in Section~\ref{sec:BVcats} we introduce our \emph{strong \bvcats} and prove their soundness for~$\BV$ in \Cref{sec:semantics}. Finally, in \Cref{sec:concrete} we discuss some concrete examples of strong \bvcats.

\begin{figure}
  \def\myskip{\qqquad}
  \centering
  $\begin{array}{c@{\myskip}c@{\myskip}c@{\myskip}c}
    &
    &
    \nodn{\lunit}{\aidr}{a\lpar \lneg a}{}
    &
    \nodn{(A\lpar B)\lseq (C\lpar D)}{\qdr}{(A\lseq C)\lpar(B\lseq D)}{}
    \\
    \smash{\nodt{\;\;A\;\;}{\feq}{B}{\text{\scriptsize if $A\feq B$}}}
    &
    \smash{\nodn{A\ltens (B\lpar C)}{\swir}{(A\ltens B)\lpar C}{}}
    &
    \\
    &&
    \nodn{a\ltens \lneg a}{\aiur}{\lunit}{}
    &
    \nodn{(A\lseq B)\ltens(C\lseq D)}{\qur}{(A\ltens C)\lseq(B\ltens D)}{}
    \\&&&\\[-1.5ex]
    \text{\defn{equivalence}} & \text{\defn{switch}} & \text{\defn{atomic interaction}} & \text{\defn{sequentiality}}
  \end{array}$
  \caption{The inference rules of the proof systems $\BV=\set{\feq,\aiD,\sw,\qD}$ and $\SBV=\BV\cup\set{\aiU,\qU}$.}
  \label{fig:BV}
\end{figure}

\section{Preliminaries on BV}\label{sec:prelims}

The set $\Fms=\set{A,B,C,\ldots}$ of formulas is generated by a countable set of \defn{atoms} $\atomset$ and a \defn{unit} $\lunit$ using the following grammar:
\begin{equation}\label{eq:formulas}
  \hfill
  A ,B \coloneqq \lunit \mid a \mid \lneg a \mid (A \lpar B) \mid (A \lseq B) \mid (A \ltens B)
  \qquad
  \text{ with } a \in \atomset
  \hfill
\end{equation}
using the binary connectives $\lpar$ (\defn{par}), $\lseq$ (\defn{seq}), and $\ltens$ (\defn{tensor}). The \defn{negation} $\lneg{(\cdot)}$ can be extended to all formulas via the following \emph{De Morgan laws}:
\begin{equation}\label{eq:lneg}
  \lneg\lunit=\lunit
  \qomma
  \lneg{(\lneg a)}=a
  \qomma
  \lneg{(A\lpar B)}=\lneg A\ltens\lneg B
  \qomma
  \lneg{(A\lseq B)}=\lneg A\lseq\lneg B
  \qomma
  \lneg{(A\ltens B)}=\lneg A\lpar\lneg B
\end{equation}
It follows that $\lneg{(\lneg A)}=A$ for all~$A$. A \defn{context} is a formula with a unique occurrence of the \defn{hole} $\conhole$ occurring at the place of an atom:
$$
\def\mymid{\;\mid\;}
\cC                 \;\coloneqq\;
\ctx                \mymid
B\lpar \cC          \mymid
\cC\lpar B          \mymid
B\lseq \cC          \mymid
\cC\lseq B          \mymid
B\ltens \cC         \mymid
\cC\ltens B
$$
Then $\cC[A]$ stands for the formula obtained from the context $\cC$ by replacing the hole $\ctx$ by the formula $A$. 
We also employ a \defn{formula equivalence} $\feq$ to be the smallest congruence relation generated by:
\begin{equation}\label{eq:feq}
  \hfill
  \begin{array}{c@{\qqquad}c@{\qqquad}r@{\qqquad}}
    (A\ltens B)\ltens C   \feq    A\ltens(B\ltens C)
  &
    A\ltens B             \feq    B\ltens A
  &
    A\ltens\lunit         \feq    A
  \\
    (A\lpar B)\lpar C     \feq    A\lpar(B\lpar C)
  &
    A\lpar B             \feq    B\lpar A
  &
    A\lpar\lunit          \feq    A
  \\
    (A\lseq B)\lseq C     \feq    A\lseq(B\lseq C)
  &&
    A\lseq\lunit          \feq    A \rlap{$\;\feq \lunit\lseq A$}
  \end{array}
  \hfill
\end{equation}
An \defn{inference rule} is a scheme $\vlupsmash{\vlinf{\rR}{}{B}{A}}$ where $\rR$ is its name, $A$ its \defn{premise} and $B$ its \defn{conclusion}. A \defn{proof system} is a set of inference rules, and the rules we use in this paper are shown in \Cref{fig:BV}. For the inference rule $\feq$, we have the side condition that $A\feq B$ via one of the equations in~\eqref{eq:feq} above. We write it with a dotted line to ease readability of longer derivations, and we also often condense several such steps into a single one. 
We call $\BV$ the proof system $\set{\feq,\aiD,\sw,\qD}$ and the proof system $\SBV=\BV\cup\set{\aiU,\qU}$.

To build \emph{derivations} we use the syntax of \emph{open deduction}~\cite{gug:gun:par:RTA10}, which allows to use the same primitives that are used to construct formulas also for constructing derivations.

\begin{definition}
   We define the set $\Deris=\set{\dD,\dDb,\ldots}$ of \defn{derivations} together with two functions $\premf,\concf\colon\Deris\to\Fms$ (called \defn{premise} and \defn{conclusion}, respectively) inductively as follows:
  \begin{itemize}
  \item For every formula $A$, we have that $\nodf{A}$ is a derivation and $\prem{\nodf{A}}=\conc{\nodf{A}}=A$. 
  \item If $\dD$ and $\dDb$ are derivations and $\odot \in \set{\vlpa, \vlse, \vlte}$, then 
    $\nodf{\noD\odot \noDb}$ is a derivation and $\prem{\nodf{\noD\odot \noDb}}=\prem{\noD}\odot\prem{\noDb}$ and $\conc{\nodf{\noD\odot \noDb}}=\conc{\noD}\odot\conc{\noDb}$. 
  \item If $\noD$ and $\noDb$ are derivations and $\vldownsmash{\vlinf{\rR}{}{B}{A}}$ is an instance of an inference rule with $\conc\noD=A$ and $\prem\noDb=B$ then $\vldownsmash{\dDc={\odn{\dD}{\rR}{\dDb}{}}}$ is a derivation and $\prem{\dDc}=\prem{\dD}$ and $\conc{\dDc}=\conc{\dDb}$. 
  \end{itemize}
\end{definition}
If $\sysX$ is a proof system, we often write $\nosmash{\odv{A}{\dD}{B}{\sysX}}$ to denote a derivation $\dD$ with premise $\prem\dD=A$ and conclusion $\conc\dD=B$ where all inference rules used in $\dD$ are in the proof system~$\sysX$.
We may denote by $\nosmash{\odr{\dD}{B}{\sysX}}$ a derivation in~$\sysX$ with premise $\vlun$, and by $\downsmash{\odN{A}{\rR}{B}{}}$ a derivation using only instances of the rule~$\rR$.
We also write $A\,\proves[\sysX] B$ \resp{$\proves[\sysX]B$} if there is a derivation in the system~$\sysX$ with conclusion~$B$ and premise~$A$ \resp{premise~$\lunit$}.

Now consider the following two rules, called \defn{axiom} and \defn{cut}.
\begin{equation}\label{eq:cut}
\hfill
  \vlinf{\iD}{}{\lneg A \vlpa A}{\vlun}
  \qquad\qquand\qquad
  \vlinf{\iU}{}{\vlun}{A \vlte \lneg A}
\hfill
\end{equation}
They are the generalized form of the two inference rules $\aiD$ and $\aiU$ respectively. 
The cut elimination in $\BV$ is the following statement, whose proof will be discussed in~\Cref{sec:splitting}.  
\begin{theorem}\label{thm:cutelim}
  Let $A$ be a formula. If\/ $\proves[\BV\cup\set{\iU}] A$ then\/ $\proves[\BV] A$.
\end{theorem}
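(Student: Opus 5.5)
The plan is the standard splitting-based argument for $\BV$, arranged so it can later be checked for preservation of atomic flows.

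\emph{Step 1: atomic cuts suffice.} First I reduce general cuts $\iU$ to atomic cuts $\aiU$ together with $\qU$. This is done by induction on $A$, replacing an instance of $\iU$ on $A\ltens\lneg A$ by smaller instances:
\begin{itemize}
\item For $A=B\lpar C$, we have $(B\lpar C)\ltens(\lneg B\ltens\lneg C)$. Two switches give $((B\ltens\lneg B)\lpar C)\ltens\lneg C$ and then $(B\ltens\lneg B)\lpar(C\ltens\lneg C)$, and we cut on $B$ and on $C$.
\item For $A=B\lseq C$, we have $(B\lseq C)\ltens(\lneg B\lseq\lneg C)$. One $\qU$ yields $(B\ltens\lneg B)\lseq(C\ltens\lneg C)$, and again we cut on $B$ and on $C$.
\item The case $A=\lunit$ is trivial.
\end{itemize}

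\emph{Step 2: context reduction.} I would restrict attention to a single topmost atomic cut instance, so it suffices to remove one instance at a time. Take a proof of the form $\vlun \proves[\BV] \cC[a\ltens\lneg a]$ followed by $\aiU$ and then a $\BV$ derivation down to $A$. It is enough to show that $\proves[\BV]\cC[a\ltens\lneg a]$ implies $\proves[\BV]\cC[\lunit]$. By the \emph{context reduction} lemma of \cite{gug:SIS}, a provable $\cC[X]$ reduces to finding a formula $K$ with:
\begin{itemize}
\item $\proves[\BV] K\lpar X$, and
\item a derivation $K\lpar Z\proves[\BV]\cC[Z]$ for every $Z$ (possibly after introducing $\lseq$-structure).
\end{itemize}
With this in hand, the claim reduces to showing that $\proves[\BV] K\lpar(a\ltens\lneg a)$ implies $\proves[\BV] K$.

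\emph{Step 3: splitting, the main step.} I expect the \emph{splitting lemma} to be the main obstacle. It states that if $\proves[\BV] K\lpar(B\ltens C)$, then there are $K_1,K_2$ with $K_1\lpar K_2\proves[\BV]K$, $\proves[\BV]K_1\lpar B$ and $\proves[\BV]K_2\lpar C$. There is an analogous statement for $\lseq$, and one for atoms: $\proves[\BV]K\lpar a$ implies $\lneg a\proves[\BV]K$.

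Its proof is a delicate induction on the pair (size of the formula, length of the proof). It needs a case analysis on the bottommost rule instance, in particular on how $\sw$ and $\qD$ instances interact with the principal tensor. I would take it from \cite{gug:SIS,tubella:phd} rather than reprove it here.

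Given splitting, the conclusion follows. Applying it to $K\lpar(a\ltens\lneg a)$ gives $K_1,K_2$ with $\proves[\BV] K_1\lpar a$ and $\proves[\BV] K_2\lpar\lneg a$. Atomic splitting then gives $\lneg a\proves[\BV] K_1$ and $a\proves[\BV] K_2$. We then derive
\[
\lunit \;\to\; a\lpar\lneg a \;\to\; K_2\lpar K_1 \;\to\; K,
\]
using $\aiD$ for the first step, the two splitting derivations for the second, and the derivation from splitting for the last. Plugging this into the context-reduction data gives $\proves[\BV]\cC[\lunit]$. Iterating over all cut instances completes the proof of \Cref{thm:cutelim}.
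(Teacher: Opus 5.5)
Your route is the paper's: reduce $\iur$ to up-rules as in \Cref{prop:i}.2, then remove up-rule instances one at a time, topmost first, using context reduction and splitting, as \Cref{lem:aiur} does for the atomic cut. But there is a genuine gap. Step~1 introduces instances of $\qur$ (one for every cut on a formula $B\lseq C$), and nothing in Steps~2--3 removes them. Since $\qur\notin\BV$, this breaks the argument in two places.
First, in your translation of a cut on $B\lseq C$, the $\qur$ sits \emph{above} the atomic cuts it produces. So the derivation above your ``topmost atomic cut'' is in general not a $\BV$ proof of $\cC[a\ltens\lneg a]$. Context reduction and splitting are statements about $\BV$ proofs and cannot be applied to it.
Second, even granting the atomic step, iterating it only yields a proof in $\BV\cup\set{\qur}$, not in $\BV$. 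You state splitting for $\lseq$ but never use it, and that is exactly the tool the missing case needs.

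The missing step is \Cref{lem:qur}: always treat the topmost up-rule instance, whichever rule it is. Suppose it is $\qur$ applied to $(E\lseq F)\ltens(G\lseq H)$ in a context $\cC$.
\begin{itemize}
\item Context reduction gives $K$ with $\proves[\BV]((E\lseq F)\ltens(G\lseq H))\lpar K$ and $X\lpar K\proves[\BV]\cC[X]$ for all $X$.
\item $\ltens$-splitting gives $K_l\lpar K_r\proves[\BV]K$ together with proofs of $(E\lseq F)\lpar K_l$ and $(G\lseq H)\lpar K_r$.
\item $\lseq$-splitting gives $K_E\lseq K_F\proves[\BV]K_l$ and $K_G\lseq K_H\proves[\BV]K_r$, together with proofs of $E\lpar K_E$, $F\lpar K_F$, $G\lpar K_G$ and $H\lpar K_H$.
\end{itemize}
Now assemble these pieces:
\begin{itemize}
\item Start from $\vlun\feq(\vlun\ltens\vlun)\lseq(\vlun\ltens\vlun)$ and plug in the four proofs to reach $((E\lpar K_E)\ltens(G\lpar K_G))\lseq((F\lpar K_F)\ltens(H\lpar K_H))$.
\item Two $\swir$ on each side give $((E\ltens G)\lpar K_E\lpar K_G)\lseq((F\ltens H)\lpar K_F\lpar K_H)$.
\item Two $\qdr$ give $((E\ltens G)\lseq(F\ltens H))\lpar(K_E\lseq K_F)\lpar(K_G\lseq K_H)$.
\item Pass through $K_l\lpar K_r$ and $K$, and use context reduction to reach $\cC[(E\ltens G)\lseq(F\ltens H)]$.
\end{itemize}
Each such step removes one up-rule instance without creating new ones, so induction on the number of up-rule instances concludes.

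Alternatively, you could avoid $\qur$ altogether by eliminating $\iur$ directly by induction on the cut formula. For $B\lseq C$, combine $\lseq$-splitting with the induction hypothesis on $B$ and $C$ to obtain proofs of $K_B\lpar K'_B$ and $K_C\lpar K'_C$, then glue them with one $\qdr$. As written, however, your Step~1 commits you to $\qur$.
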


We conclude this section by recalling some basic and easy to prove properties of $\BV$.

\begin{proposition}\label{prop:i}
  The following hold:
  \begin{enumerate}
    \item The rule $\iD$ is derivable in $\set{\feq,\aiD,\swir,\qD}=\BV$.
    \item The rule $\iU$ is derivable in $\set{\feq,\aiU,\swir,\qU}=\SBV\setminus\set{\aiD,\qD}$.
    \item Every rule $\rU$ is derivable in $\set{\feq,\iD,\iU,\sw,\rD}$.
  \end{enumerate}
\end{proposition}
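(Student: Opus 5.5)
We prove the three items separately. Items 1 and 2 go by structural induction on $A$; item 3 is a direct construction that uses item-independent facts about duality of rules.

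For item 1, we build a derivation of $\lneg A\lpar A$ from $\lunit$ by induction on $A$.
\begin{itemize}
\item If $A=\lunit$, then $\lneg A\lpar A=\lunit\lpar\lunit\feq\lunit$, so a single $\feq$ step suffices.
\item If $A=a$ or $A=\lneg a$, this is an instance of $\aiD$, possibly after commuting the par by $\feq$.
\item If $A=B\lpar C$, we write $\lunit\feq\lunit\ltens\lunit$. We apply the induction hypothesis inside each factor (open deduction lets us derive in parallel under $\ltens$) and obtain $(\lneg B\lpar B)\ltens(\lneg C\lpar C)$. Two instances of $\swir$, interleaved with commutativity and associativity steps, then give $(\lneg B\ltens\lneg C)\lpar B\lpar C$. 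This is $\lneg A\lpar A$ by the De Morgan laws~\eqref{eq:lneg}.
\item If $A=B\ltens C$, the case is the same as the par case, with the roles of $A$ and $\lneg A$ exchanged by commutativity.
\item If $A=B\lseq C$, we write $\lunit\feq\lunit\lseq\lunit$ and apply the induction hypothesis in both components, obtaining $(\lneg B\lpar B)\lseq(\lneg C\lpar C)$. One $\qD$ step yields $(\lneg B\lseq\lneg C)\lpar(B\lseq C)=\lneg A\lpar A$.
\end{itemize}

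Item 2 is the exact dual of item 1: we read every derivation upside down and negate it. In the base case we use $\aiU$ instead of $\aiD$. The seq case uses $\qU$: from $(B\lseq C)\ltens(\lneg B\lseq\lneg C)$ it gives $(B\ltens\lneg B)\lseq(C\ltens\lneg C)$. We then apply the induction hypothesis in both components and finish with $\lunit\lseq\lunit\feq\lunit$. In the tensor/par case we first apply two switches, taking $(B\ltens C)\ltens(\lneg B\lpar\lneg C)$ to $(B\ltens\lneg B)\lpar(C\ltens\lneg C)$, and then apply the induction hypothesis twice. Switch is self-dual up to commutativity, so only $\swir$ is needed.

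For item 3, let $\rU$ have premise $P$ and conclusion $Q$. One checks directly from Figure~\ref{fig:BV} and~\eqref{eq:lneg} that $\rD$ then has premise $\lneg Q$ and conclusion $\lneg P$; this holds for both $\aiU/\aiD$ and $\qU/\qD$. We then derive $Q$ from $P$ by the standard chain
\[
P\;\feq\;P\ltens\lunit
\;\xrightarrow{\iD}\;P\ltens(\lneg Q\lpar Q)
\;\xrightarrow{\swir}\;(P\ltens\lneg Q)\lpar Q
\;\xrightarrow{\rD}\;(P\ltens\lneg P)\lpar Q
\;\xrightarrow{\iU}\;\lunit\lpar Q\;\feq\;Q .
\]
Here the $\rD$ step is applied deep inside the context $(P\ltens\{\,\})\lpar Q$.

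None of the steps is genuinely hard. The only point requiring care is the bookkeeping with $\feq$ in the par/tensor cases. These cases need the commutativity and associativity steps in the right places so that the two switches apply. They also need the De Morgan identity $\lneg{(B\lseq C)}=\lneg B\lseq\lneg C$, which is what makes the seq case go through with a single $\qD$.
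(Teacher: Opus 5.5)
Your proof is correct and follows the paper's approach. Items 1 and 2 go by structural induction on $A$; your case analysis, using $\qdr$ (resp.\ $\qur$) for seq and two switches for par/tensor, is exactly what that induction requires. Item 3 uses the same $\idr$--$\swir$--$\rD$--$\iur$ derivation shown on the left of Figure~\ref{fig:dersProofs}, except that the paper applies $\rD$ before the switch rather than after it; for $\aiU/\aiD$ your claim that $\rD$ has conclusion $\lneg P$ holds only up to commutativity of $\lpar$, which your $\feq$ steps absorb.
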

\begin{proof}
  The first two statements are proved by induction on $A$, and the third statement by the derivation
  on the left of \Cref{fig:dersProofs}.
\end{proof}

\begin{figure}
  \small\footnotesize
  \centering
  $\begin{array}{c@{\qquad\vrule\qquad}c@{\qquad\vrule\qquad}c}
    \od{
      \odo{
        \odi{
          \odo{\odh{A}}{}{
            A \vlte \odnP{\vlun}{\iD}{\odnP{\lneg B}{\rD}{\lneg A}{} \vlpa B}{}
          }{}
        }{\sw}{
          \odnP{ A \vlte \lneg A}{\iU}{\vlun}{}
          \vlpa 
          B
        }{}
      }{}{
        B
      }{}
    }
  &
    \odn{\vlun}{\idr}{
      \lneg A \vlpa \odv{A}{\dD}{B}{}
    }{}
  &
    \od{\odo{\odi{\odo{\odh{A}}{}{
      A\vlte \odv{\vlun}{\dDp}{\lneg A \vlpa B}{}
    }{}}{\swir}{
      \odnP{A\vlte \lneg A}{\iur}{\vlun}{}\vlpa B
    }{}}{}{
      B
    }{}}
  \end{array}$
  \caption{Derivations to prove \Cref{prop:i}.3 and the two directions of \Cref{prop:SBV}.2}
  \label{fig:dersProofs}
\end{figure}

\begin{proposition}\label{prop:SBV}
  Let $A$ and $B$ be formulas.
  Then,
  \begin{enumerate}
    \item $\proves[\SBV] A$ iff\/ $\proves[\BV\cup\set{\iU}] A$, and
    \item $A\proves[\SBV] B$ iff\/ $\proves[\SBV] \lneg A \vlpa B$.
  \end{enumerate}
\end{proposition}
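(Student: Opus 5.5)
Both items follow from \Cref{prop:i} together with the derivations in \Cref{fig:dersProofs}. I treat the two directions of each item separately, starting with the first item.

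\emph{Item 1, right-to-left.} Assume a proof of $A$ in $\BV\cup\set{\iU}$. Every rule of $\BV$ is already in $\SBV$. By \Cref{prop:i}.2, each instance of $\iU$ can be replaced by a derivation in $\set{\feq,\aiU,\swir,\qU}\subseteq\SBV$. Doing this replacement instance by instance turns the proof into one in $\SBV$ with the same premise $\lunit$ and conclusion $A$.

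\emph{Item 1, left-to-right.} Assume a proof of $A$ in $\SBV$. The rules $\feq,\aiD,\sw,\qD$ belong to $\BV$, and $\aiU$ is an instance of $\iU$. The only remaining rule is $\qU$. By \Cref{prop:i}.3 it is derivable in $\set{\feq,\iD,\iU,\sw,\qD}$. By \Cref{prop:i}.1, $\iD$ is derivable in $\BV$. Hence every $\qU$ instance can be replaced by a derivation in $\BV\cup\set{\iU}$. Open deduction lets us substitute a derivation for a rule instance inside any context, so these local replacements compose into a proof of $A$ in $\BV\cup\set{\iU}$.

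\emph{Item 2, left-to-right.} Given $\dD$ from $A$ to $B$ in $\SBV$, we use the middle derivation of \Cref{fig:dersProofs}. We apply $\iD$ to obtain $\lneg A\lpar A$ from $\lunit$, and then place $\dD$ in the right-hand context of the par. By \Cref{prop:i}.1 the $\iD$ step expands to $\SBV$ rules, so the whole is a derivation of $\lneg A\lpar B$ from $\lunit$ in $\SBV$.

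\emph{Item 2, right-to-left.} Given $\dDp$ from $\lunit$ to $\lneg A\lpar B$ in $\SBV$, we use the right-hand derivation of \Cref{fig:dersProofs}. We rewrite $A$ as $A\ltens\lunit$ by $\feq$ and plug $\dDp$ into the unit, giving $A\ltens(\lneg A\lpar B)$. A switch step yields $(A\ltens\lneg A)\lpar B$. Applying $\iU$ gives $\lunit\lpar B$, and one more $\feq$ step gives $B$. By \Cref{prop:i}.2, the $\iU$ step expands into $\set{\feq,\aiU,\swir,\qU}\subseteq\SBV$, so this is a derivation of $B$ from $A$ in $\SBV$.

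The only point needing care is that replacing a rule instance by a derivation with the same premise and conclusion is legitimate inside an arbitrary context and within a sequential composition. This holds by the inductive definition of derivations, since derivations can be formed under $\lpar$, $\lseq$ and $\ltens$ and composed vertically. No real obstacle remains.
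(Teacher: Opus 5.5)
Your proof is correct and follows the paper's argument. For item~1 you spell out what the paper calls immediate from \Cref{prop:i}: $\aiU$ is an instance of $\iU$, $\qU$ is derivable via \Cref{prop:i}.3 and \Cref{prop:i}.1, and $\iU$ is derivable via \Cref{prop:i}.2. For item~2 you use exactly the center and right derivations of \Cref{fig:dersProofs}.
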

\begin{proof}
  The first statement follows immediately from \Cref{prop:i}.
  To prove left-to-right \resp{right-to-left} implication in the second statement, we assume the existence of a derivation $\dD$ with premise $A$ and conclusion $B$ \resp{a derivation $\dDp$ with premise $\vlun$ and conclusion $\lneg A \vlpa B$}  in $\SBV$, and we construct the derivations in the center \resp{right} of \Cref{fig:dersProofs}.
\end{proof}

\section{Atomic Flows}\label{sec:AF}

\emph{Atomic flows} were originally introduced in \cite{gug:gun:flows,gug:gun:str:LICS10} as a graphical formalism to study normalization properties of proofs for classical logic. 
Because of their ``graphical nature'', atomic flows can be seen as relatives of Girard's \emph{proof nets}~\cite{gir:ll,lau:reg:89} and Buss' \emph{logical flow graphs}~\cite{buss:91,carbone:99}. Atomic flows have never been considered for linear logic because of the existence of proof nets. However, for $\BV$, we do not have proof nets,\footnote{The proof nets of \emph{pomset logic}~\cite{retore:97,retore:99} do not have a suitable correctness criterion for~$\BV$~\cite{tito:lutz:csl22,tito:str:SIS-III}.} and therefore, atomic flows for $\BV$ can also be seen as a first step towards the development of proof nets for $\BV$.  

The basic idea of atomic flows is to track atom occurrences in a derivation, abstracting away from the logical structure of the formulas (i.e., the connectives and their arrangement). In that respect, they can also be seen as a special case of \emph{string diagrams}~\cite{polybook,string:diagrams:RTI,string:diagrams:RTII}, or simply \emph{two-dimensional diagrams}~\cite{lafont:boolean}.

We write $\lneg\Atms=\set{\lneg a,\lneg b,\ldots}$ for the set of negated atoms and we let $\AFTypes=\set{p,q,s,t,\ldots}$ be the set of finite strings over the set $\Atms\cup\lneg\Atms$ of \defn{literals}. 
We write $\emptystring$ for the \defn{empty string} and $p\scomp q$ for \defn{string concatenation}.

\begin{figure}[t]
  \def\myha{1}
  \def\myhb{.75}
  \def\myhc{-1.5ex}
  \def\myequ{\;=\;}
  $$
  \begin{array}{c|c}
    \scalebox{.8}{$
      \begin{array}{c}
        \begin{tikzpicture}[nafZ]
          \nafvu{\myha}{}{a}
        \end{tikzpicture}
        \\\\[\myhc]
        \begin{tikzpicture}[nafC]
          \nafljl{a}{}{}{a}1{\myha}
          \nafljr{b}{}{}{b}1{\myha}
        \end{tikzpicture}
        \\\\[\myhc]
        \begin{tikzpicture}[nafC]
          \nafhdx{a}{}{}{\lneg a}{\myhb}
        \end{tikzpicture}
        \\\\[\myhc]
        \begin{tikzpicture}[nafC]
          \nafhux{a}{}{}{\lneg a}{\myhb}
        \end{tikzpicture}    
      \end{array}
      $}
  \quad\strut&\quad
    \scalebox{.9}{$
    \begin{array}{c@{\quad\qquad}c@{\quad\qquad}c}
      \begin{tikzpicture}[nafC]
        \naf{0,0}{\nafex11}
        \naf{0,2}{\nafex11}
      \end{tikzpicture}
    \myequ
      \begin{tikzpicture}[nafC]
        \naf{0,0}{\nafv2}
        \naf{2,0}{\nafv2}
      \end{tikzpicture}
    &
      \begin{tikzpicture}[nafC]
        \naf{0,0}{\nafhdx{\lneg a}{}{}{a}{.5}}
        \naf{0,-2}{\nafex11}
      \end{tikzpicture}
    \myequ
      \begin{tikzpicture}[nafC]
        \naf{0,0}{\nafhdx{a}{}{}{\lneg a}{1.5}}
      \end{tikzpicture}
    &
      \begin{tikzpicture}[nafC]
        \naf{0,0}{\nafhdx{a}{}{}{\lneg a}{.5}}
        \naf{-1,-2}{\nafv1}
        \naf{2,-2}{\nafex11}
        \naf{3,0}{\nafv1}
      \end{tikzpicture}
    \myequ
      \begin{tikzpicture}[nafC]
        \naf{3,0}{\nafhdx{a}{}{}{\lneg a}{.5}}
        \naf{4,-2}{\nafv1}
        \naf{1,-2}{\nafex11}
        \naf{0,0}{\nafv1}
      \end{tikzpicture}
    \\\\[-5pt]
      \begin{tikzpicture}[nafC]
        \naf{0,0}{\nafex11}
        \naf{3,0}{\nafv1}
        \naf{-1,-2}{\nafv1}
        \naf{2,-2}{\nafex11}
        \naf{0,-4}{\nafex11}
        \naf{3,-4}{\nafv1}
      \end{tikzpicture}
    \myequ
      \begin{tikzpicture}[nafC]
        \naf{3,0}{\nafex11}
        \naf{0,0}{\nafv1}
        \naf{4,-2}{\nafv1}
        \naf{1,-2}{\nafex11}
        \naf{3,-4}{\nafex11}
        \naf{0,-4}{\nafv1}
      \end{tikzpicture}
    &
      \begin{tikzpicture}[nafC]
        \naf{0,0}{\nafhux{\lneg a}{}{}{a}{.5}}
        \naf{0,2}{\nafex11}
      \end{tikzpicture}
    \myequ
      \begin{tikzpicture}[nafC]
        \naf{0,0}{\nafhux{a}{}{}{\lneg a}{1.5}}
      \end{tikzpicture}
    &
      \begin{tikzpicture}[nafC]
        \naf{0,0}{\nafhux{a}{}{}{\lneg a}{.5}}
        \naf{-1,2}{\nafv1}
        \naf{2,2}{\nafex11}
        \naf{3,0}{\nafv1}
      \end{tikzpicture}
    \myequ
      \begin{tikzpicture}[nafC]
        \naf{3,0}{\nafhux{a}{}{}{\lneg a}{.5}}
        \naf{4,2}{\nafv1}
        \naf{1,2}{\nafex11}
        \naf{0,0}{\nafv1}
      \end{tikzpicture}
    \end{array}
    $}
  \end{array}
  $$
  \caption{Generators (left) and equivalences (right) for atomic flows, with $a$ and $b$ literals.}
  \label{fig:af}
\end{figure}
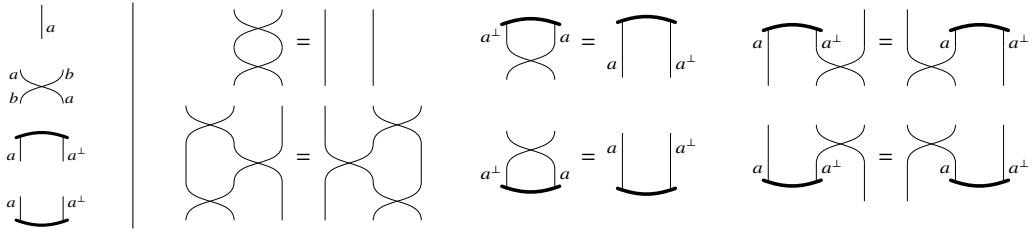

\begin{definition}
  An \defn{atomic flow} $\afD\colon p\to q$ is a two-dimensional diagram written as~~
  \upsmash{\raise1ex\hbox{\scalebox{.95}{%
  \begin{tikzpicture}[nafZ]
    \naf{0,0}{\naffplainbox{\phi}{a_1}{a_n}{b_1}{b_m}}
  \end{tikzpicture}
  }}}
  where $p=\iof{\afD}=a_1\scomp\cdots\scomp a_n$ is its \defn{input} and $q=\oof{\afD}=b_1\scomp\cdots\scomp b_m$ its \defn{output}, with $a_1,\ldots,a_n,b_1,\ldots,b_m\in\Atms\cup\lneg\Atms$.

  The set of atomic flows is generated from the basic flows shown on the left of \Cref{fig:af} 
  (the identity flow, the simple crossing, and the cap and the cups making interact dual literals) 
  via following two operations:
  \begin{itemize}
    \item \defn{vertical composition:} if $\phi\colon p\to q$ and $\psi\colon q\to s$ are atomic flows, then so is $\theta=\phi\vcomp\psi\colon p\to s$, depicted on the left of \Cref{eq:afcomp} below, and
  
    \item \defn{horizontal composition:} if $\phi\colon p\to q$ and $\psi\colon s\to t$ are atomic flows, then so is $\theta=\phi\hcomp\psi\colon p\scomp s\to q\scomp t$, depicted on the right of \Cref{eq:afcomp} below,
    \begin{equation}\label{eq:afcomp}
      \phi\vcomp\psi\;=\;
      \begin{tikzpicture}[nafC]
        \nafscale{.9}{.9}{
          \naf{0,0}{\naffplainbox{\phi}{}{}{}{}}
          \naf{0,-3}{\naffplainbox{\psi}{}{}{}{}}
          }
      \end{tikzpicture}
      \quad\text{if $\oof{\phi}=\iof{\psi}$}
    \qqquand
      \phi\hcomp\psi\;=\;
      \begin{tikzpicture}[nafC]
        \nafscale{.9}{.9}{
          \naf{0,0}{\naffplainbox{\phi}{}{}{}{}}
          \naf{3.5,0}{\naffplainbox{\psi}{}{}{}{}}
          }
      \end{tikzpicture}
    \end{equation}
  \end{itemize}
  quotiented by the equivalence relation generated by $(\phi\vcomp\phi')\hcomp(\psi\vcomp\psi')=(\phi\hcomp\psi)\vcomp(\phi'\hcomp\psi')$ and the equations shown on the right of \Cref{fig:af}.
  We denote the \defn{empty atomic flow} by $\emptyflow\colon\emptystring\to\emptystring$.
\end{definition}

We are now going to show how $\SBV$-derivations are translated into atomic flows. For this we associate to each formula $A$ it string $\stringof{A}$ of atom occurrences, which is formally defined as follows:
\begin{equation}
  \stringof{\vlun}=\emptystring
  \qquad
  \stringof{a}=a
  \qquad
  \stringof{\lneg a}=\lneg a
  \qquad
  \stringof{A\vlpa B}=\stringof{A\vlse B}=\stringof{A\vlte B}=\stringof{A}\stringcat\stringof{B}
\end{equation}
For each formula $A$, we may denote the \defn{identity flow} $\idF[A]\colon\stringof A\to\stringof A$ as $\vert\cdots\vert$ or as \raise.6ex\hbox{$\MAF{\nafVd{.65}{}{A}}$}~. 
Moreover, we may also use the abbreviations in \Cref{fig:afAbb} when writing atomic flows.

\begin{figure}[t]
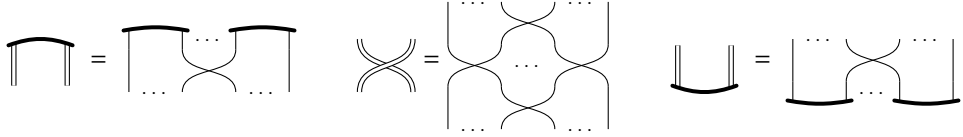

  \centering
  $\begin{array}{c}
    \MAF{\nafHd{}{}{}{}}
    =
    \begin{AF}
      \naf{-2,2}{\nafhdshort{}{}{}{}}
      \naf{2,2}{\nafhdshort{}{}{}{}}
      \naflabel{0,1.5}{\cdots}
      \naf{0,0}{\nafex11}
      \naf{-3,0}{\nafv1}
      \naf{3,0}{\nafv1}
      \naflabel{-2,-1}{\cdots}
      \naflabel{2,-1}{\cdots}
    \end{AF}
  \qquad
    \MAF{\nafEXl1{1.25}}
    =
    \begin{AF}
      \naflabel{-2,3}{\cdots}
      \naflabel{2,3}{\cdots}
      \naf{0,2}{\nafex11}
      \naf{-3,2}{\nafv1}
      \naf{3,2}{\nafv1}
      \naf{-2,0}{\nafex11}
      \naflabel{0,0}{\cdots}
      \naf{2,0}{\nafex11}
      \naf{0,-2}{\nafex11}
      \naf{-3,-2}{\nafv1}
      \naf{3,-2}{\nafv1}
      \naflabel{-2,-3}{\cdots}
      \naflabel{2,-3}{\cdots}
    \end{AF}
  \qquad
    \MAF{\nafHu{}{}{}{}}
    =
    \begin{AF}
      \naf{-2,-2}{\nafhushort{}{}{}{}}
      \naf{2,-2}{\nafhushort{}{}{}{}}
      \naflabel{0,-1.5}{\cdots}
      \naf{0,0}{\nafex11}
      \naf{-3,0}{\nafv1}
      \naf{3,0}{\nafv1}
      \naflabel{-2,1}{\cdots}
      \naflabel{2,1}{\cdots}
    \end{AF}
  \\[-10pt]
  \end{array}%
  $

  \caption{Abbreviations for atomic flows.}
  \label{fig:afAbb}
\end{figure}

We associate to each inference rule in $\SBV$ an atomic flow as shown below:
\def\hgt{.75}
\begin{equation}\label{eq:basicflows}
  \begin{array}{r@{\;=\quad}l@{\qquad\qquad}r@{\;=\;}l}
    \AFof{\aiD}
    &
    \begin{AFz}
      \naf{0,1}{\nafhdx{a}{}{}{\lneg a}{.5}}
    \end{AFz}
    &
    \AFof{\swir}
    &
    \begin{AFz}
      \naf{0,.5}{\nafVu{1}{}{A}}
      \naf{1.5,.5}{\nafVu{1}{}{B}}
      \naf{3,.5}{\nafVu{1}{}{C}}
    \end{AFz}
    \\[1ex]
    \AFof{\aiU}
    &
    \begin{AFz}
      \naf{0,0}{\nafhux{a}{}{}{\lneg a}{.5}}
    \end{AFz}
    &
    \AFof{\qD}=
    \AFof{\qU}
    &
    \begin{AFz}
      \naf{0,.5}{\nafVu{1}{}{A}}
      \naf{3,.5}{\naflJl{}{}{}{B}{1}{1}\naflJr{}{~C}{}{}{1}{1}}
      \naf{6,.5}{\nafVu{1}{}{D}}
    \end{AFz}
  \end{array}
\end{equation}
where $\AFof{\feq}$ is 
$\begin{AFz}
    \naf{0,.5}{\nafVd{\hgt}{}{A}}
    \naf{1.5,.5}{\nafVd{\hgt}{}{B}}
    \naf{3,.5}{\nafVd{\hgt}{}{C}}
\end{AFz}$ , or
$\begin{AFz}\naf{0,.5}{\naflJl{A}{}{}{}{1}{\hgt}\naflJr{}{}{}{B}{1}{\hgt}}\end{AFz}$ , 
or
$\begin{AFz}\naf{0,.5}{\nafVd{\hgt}{}{A}}\end{AFz}$~,
for the first, second, and last columns in~\eqref{eq:feq}, respectively.

We associate to each derivation $\Phi$ in $\SBV$ its atomic flow $\AFof\Phi$ inductively as follows:
\begin{itemize}
  \item For a formula $A$, we have $\AFof{A}=\idF[A]$, in particular, $\AFof{\vlun}=\emptyflow$;
  \item for $\odot\in\set{\vlpa,\vlse,\vlte}$, we define $\AFof{\Phi\odot\Psi}=\AFof{\Phi}\hcomp\AFof{\Psi}$.
  \item For $\rR\in\SBV$, and $\dDc=\vldownsmash{\odn{\dD}{\rR}{\dDb}{}}$ then $\AFof{\dDc}=\AFof{\dD}\vcomp\AFof{\rR}\vcomp\AFof{\dDb}$, where $\AFof{\rR}$ is defined as above.
\end{itemize}

\begin{proposition}\label{prop:AF-ID}
  The derivations constructed in the proof of \Cref{prop:i}.1 and \Cref{prop:i}.2 have the following atomic flows:
  \begin{equation}\label{eq:AF-i}
    \hfill
    \AFof{\odn{\vlun}{\iD}{A\vlpa \lneg A}{}}\; =\quad  \MAF{\nafHd{A}{}{}{\lneg A}}
    \hskip4em
    \AFof{\odn{A \vlte \lneg A}{\iU}{\vlun}{}}\; =\quad  \MAF{\nafHu{A}{}{}{\lneg A}}
    \hfill
  \end{equation}
\end{proposition}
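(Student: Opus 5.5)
We argue by structural induction on $A$, treating the two statements together, since for the $\iU$ flow at $A$ we will need the $\iD$ flow for the immediate subformulas and vice versa through negation. Before starting we need the derivations of \Cref{prop:i}.1 in explicit form. For $A=\lunit$ the derivation is the trivial one on $\lunit$, with the empty flow. For $A=a$ or $A=\lneg a$ it is a single $\aiD$, up to a commutativity $\feq$ step. For $A=B\lpar C$ we use
\[
\vlun \;\feq\; \vlun\vlte\vlun \;\to\; (B\vlpa\lneg B)\vlte(C\vlpa\lneg C) \;\xrightarrow{\swir}\; \cdots \;\xrightarrow{\swir}\; (B\vlpa C)\vlpa(\lneg B\vlte\lneg C),
\]
interleaved with $\feq$ steps. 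For $A=B\lseq C$ we use
\[
\vlun\feq\vlun\vlse\vlun \;\to\; (B\vlpa\lneg B)\vlse(C\vlpa\lneg C) \;\xrightarrow{\qD}\; (B\vlse C)\vlpa(\lneg B\vlse\lneg C).
\]
The case $A=B\ltens C$ reduces to the par case after commuting, because $\lneg A=\lneg B\lpar\lneg C$. The derivations for $\iU$ are the vertically dual ones, built from $\aiU$, $\swir$ and $\qU$.

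The computation then uses the inductive definition of $\AFof{\cdot}$ together with the induction hypothesis. In the seq case the flow is $(\text{cap}_B\hcomp\text{cap}_C)$ followed by $\AFof{\qD}$. We use $\AFof{\qD}=\idF\hcomp\text{crossing}_{\lneg B,C}\hcomp\idF$, where the crossing is read blockwise via the abbreviation of \Cref{fig:afAbb}. This gives two nested families of caps: $B$ with $\lneg B$ on the inside, and $C$ with $\lneg C$ on the outside after the strand swap. We must show this equals the abbreviated cap $\MAF{\nafHd{A}{}{}{\lneg A}}$ on $\stringof{B}\stringcat\stringof{C}$ and $\stringof{\lneg B}\stringcat\stringof{\lneg C}$. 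This follows from the defining unfolding of that abbreviation, which matches literal $i$ of $A$ with literal $i$ of $\lneg A$ through a block of crossings. To verify it, we slide caps through crossings using the cap/crossing equations on the right of \Cref{fig:af}, and rearrange crossings using the Yang–Baxter and involutivity equations, which amounts to coherence for symmetries. The par/tensor case is the same: $\AFof{\swir}$ and $\AFof{\feq}$ for associativity are identities, and the commutativity $\feq$ steps contribute block crossings.

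The main obstacle is bookkeeping of string orders. We must check that the order of $\stringof{\lneg A}$ produced by De Morgan ($\lneg{(B\lpar C)}=\lneg B\ltens\lneg C$, so $\stringof{\lneg A}=\stringof{\lneg B}\stringof{\lneg C}$) matches the abbreviation's convention, and that the symmetric crossings introduced by the $\feq$ commutativity steps cancel via the crossing-squared equation. We handle this with the standard fact that in the free symmetric strict monoidal structure on these generators, any two composites of crossings inducing the same permutation are equal, and any cap configuration is determined by its induced pairing of endpoints. Given this, it suffices to check that both sides induce the same pairing of the literals of $A$ with those of $\lneg A$, namely occurrence $i$ with its dual occurrence $i$. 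The $\iU$ statement follows by the vertically mirrored argument, using the cup equations.
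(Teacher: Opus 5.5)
Your proposal is correct and is essentially the paper's argument: the paper's proof is just ``the same induction on $A$'' that builds the derivations of \Cref{prop:i}. Your reduction to comparing the induced pairings of literal occurrences spells out the diagrammatic bookkeeping the paper leaves implicit, using that composites of crossings are determined by their permutation and that caps and cups slide through crossings. A few small slips do no harm: the $\iur$ case needs only $\iur$ at subformulas, not $\idr$; the caps in the seq case end up interleaved rather than nested; and if you route the tensor case through $\lneg B\vlpa\lneg C$, you should induct on size rather than on immediate subformulas.
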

\begin{proof}
  By the same induction as in the proof of \Cref{prop:i}.
\end{proof}

\begin{definition}
  We call \defn{yanking} the rewrite relation $\yanksto$ on atomic flows that is generated by the following two rewrite rules:
  \begin{equation}\label{eq:yanking}
  \hfill
  \begin{AF}
    \naf{-1,-.5}{\nafv{.75}}
    \naf{0,1}{\nafhdx{}{}{}{}{0.75}}
    \naf{2,-.5}{\nafhux{}{}{}{}{0.75}}
    \naf{3,1}{\nafv{.75}}
  \end{AF}
  \quad\yanksto\quad
  \MAF{\nafv{1.5}}
  \qquand
  \begin{AF}
    \naf{3,-.5}{\nafv{.75}}
    \naf{2,1}{\nafhdx{}{}{}{}{0.75}}
    \naf{0,-.5}{\nafhux{}{}{}{}{0.75}}
    \naf{-1,1}{\nafv{.75}}
  \end{AF}
  \quad\yanksto\quad
  \MAF{\nafv{1.5}}
  \hfill
  \end{equation}
\end{definition}

\begin{proposition}
  Yanking is terminating and confluent.
\end{proposition}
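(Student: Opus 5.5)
Termination comes from a size measure. For an atomic flow $\phi$, let $\|\phi\|$ be the number of cap and cup generators in any representative diagram. This is well defined on equivalence classes: every equation on the right of \Cref{fig:af}, and the interchange law, has the same number of caps and cups on both sides (crossings may be created or removed, but caps and cups are only slid through them). Each yanking step in~\eqref{eq:yanking} removes exactly one cap and one cup and inserts an identity. So $\|\cdot\|$ drops by $2$ at each step, and every rewrite sequence from $\phi$ has length at most $\|\phi\|/2$.

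For confluence, termination lets us apply Newman's lemma, so it suffices to prove local confluence. I would work with a canonical representative instead of diagrams modulo equations. Up to the equations of \Cref{fig:af} and interchange, an atomic flow $\phi\colon p\to q$ is determined by its underlying \emph{connection data}: a fixed-point-free involution on the multiset of endpoints (input occurrences, output occurrences, and the two ends of each cap and cup). Connected endpoints must carry dual literals when both are inputs or both are outputs, and equal literals otherwise. We also need the number of closed loops, if any arise, although yanking never creates loops from non-loops. The equations on the right of \Cref{fig:af} are exactly the naturality of caps and cups with respect to crossings and the involutivity and Yang--Baxter laws of the symmetry, so this is the standard coherence for the free (strict) compact closed category on the set of literals~\cite{kelly:laplaza:80}. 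In these terms, a yanking redex is a path input/output $\to$ cap $\to$ cup $\to$ input/output along which a cap end is connected to a cup end. The rewrite contracts such a zig-zag segment, deleting the cap and the cup and reconnecting the two outer endpoints directly.

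Local confluence then becomes a statement about contracting segments of paths in a graph whose vertices have degree at most $2$. There are two cases for a pair of redexes. If they involve disjoint cap/cup pairs, the contractions commute and both orders give the same connection data. If they share a cap or a cup, then that element lies on a single path that alternates between caps and cups. Contracting either adjacent pair yields a graph in which the remaining structure again forms a zig-zag, and one further step from each side joins them. This is the usual argument that in a chain cap--cup--cap one can yank either pair, and the results coincide after the second yank, since both produce a single wire. Equivalently, both reducts have the same normal form, namely the flow obtained by erasing all cap and cup nodes along every path and connecting its two endpoints.

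I expect the main obstacle to be well-definedness: showing that yanking is a relation on equivalence classes and that the redexes are correctly identified modulo the equations. A redex may only appear after sliding caps through crossings, and the picture in~\eqref{eq:yanking} with wire orientations may need the mirrored variants via the crossing equations. I would handle this by proving the coherence statement first, namely that two flows are equal exactly when they have the same connection data. I would then define yanking directly on connection data, which makes the rest routine. Paths of the form cap--cup--cap$\cdots$ that close into loops contain no input or output ends and never arise from yanking, so they do not interfere.
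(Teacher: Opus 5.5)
Your skeleton is the paper's: termination because each step deletes one $\aiD$-node and one $\aiU$-node, and confluence by Newman's lemma, with the only genuine overlaps being a shared node in an $\aiD$--$\aiU$--$\aiD$ or $\aiU$--$\aiD$--$\aiU$ chain. Your check that the node count is invariant under the equations of \Cref{fig:af} is a welcome addition. But the steps where you go beyond the paper are wrong as stated.

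First, you invoke the wrong coherence theorem. Kelly--Laplaza~\cite{kelly:laplaza:80} describes the free compact closed category, where the snake equations hold. There the two sides of each rule in~\eqref{eq:yanking} are equal, and the number of $\aiD$/$\aiU$-nodes is not invariant. Every yanking step would relate a flow to itself, which contradicts your own termination argument. \Cref{fig:af} contains no snake equations. The quotient is therefore the free strict symmetric monoidal category on the generators $\aiD$, $\aiU$ (one per literal), modulo the two swap equations. The fact you need is that flows are port graphs up to isomorphism, whose nodes are the $\aiD$/$\aiU$-instances, each an unordered pair of ends. This is symmetric monoidal (Joyal--Street) coherence, not compact closed coherence. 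In that picture every wire runs from an input or an $\aiD$-end down to an output or an $\aiU$-end, carrying the same literal; your dual-literal rule belongs to the compact closed picture. A redex is any wire from an $\aiD$-end to an $\aiU$-end whose two other ends are not wired to each other, whether or not those outer ends lie on the boundary.

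Second, your treatment of overlapping redexes is wrong. An $\aiD$--$\aiU$--$\aiD$ chain has only one $\aiU$-node, so after either yank no further yank exists. The two reducts are already equal: both are a single $\aiD$-node, not a wire. This is exactly the paper's critical pair, and it closes in zero steps. In general, suppose two redexes share a node $x$ with neighbours $y$ and $z$. Contracting $\{y,x\}$ or $\{x,z\}$ gives isomorphic port graphs (rename $z$ to $y$), hence equal flows. Cycles are also not inert. A cycle with $2n\ge 4$ nodes contains redexes and rewrites to a cycle with $2n-2$ nodes. In a $4$-cycle, two node-disjoint redexes do not commute as you claim: after contracting one, the other pair has become a circle, which is not a redex. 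Both one-step reducts are then the same circle, so local confluence survives, but this case must be checked rather than dismissed. With these repairs your argument becomes a more careful version of the paper's sketch.
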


\begin{proof}
  Termination is immediate because the atomic flow gets smaller at each step, and confluence follows because the only critical pairs are given by the two ways of reducing
  $\begin{AF}
      \naf{0,1}{\nafhdshort{}{}{}{}}
      \naf{2,0}{\nafhushort{}{}{}{}}
      \naf{4,1}{\nafhdshort{}{}{}{}}
    \end{AF}
  $
  to
  $\begin{AF}
     \naf{0,1}{\nafhdshort{}{}{}{}}
  \end{AF}
  $
  and the two ways of reducing
  $\begin{AF}
      \naf{0,-1}{\nafhushort{}{}{}{}}
      \naf{2,0}{\nafhdshort{}{}{}{}}
      \naf{4,-1}{\nafhushort{}{}{}{}}
    \end{AF}
  $
  to
  $\begin{AF}
    \naf{0,1}{\nafhushort{}{}{}{}}
  \end{AF}
  $
  .
\end{proof}

This means that each atomic flow $\phi$ has a unique normal form under $\yanksto$, and we denote this normal form by $\Yof\phi$  and call it the \defn{yanking of $\phi$}.

\section{Cut Elimination is Yanking. Also in BV.}\label{sec:splitting}

Cut elimination in a deep inference system usually means that the \emph{up-fragment} (i.e., the rules with the $\uparrow$ in the name) can be eliminated. Propositions~\ref{prop:i} and~\ref{prop:SBV} above say that $\BV$ also follows this pattern. Cut elimination in $\BV$ is proved via a \emph{splitting lemma} that allows to decompose a derivation of the premise of an up-rule into a derivation of the conclusion. This \emph{splitting} (proved for $\BV$ in~\cite{gug:SIS}; see also~\cite{SIS-V,hor:tiu:ama:cio:private} for alternative presentations) is a global property of a derivation and provides, \emph{a priori}, no information about the atomic flows. But in order to establish the relation between cut elimination and yanking in $\BV$, which is a local rewrite rule, we need to strengthen the splitting lemma to also talk about the atomic flows.

\begin{restatable}[Splitting, improved formulation]{lemma}{lemSplit}\label{lem:splitting}
  Let $A$, $B$, and $K$ be formulas, and let $a$ be an atom.
  \begin{enumerate}
  \item If there is a derivation $\smash{\odr{\dD}{(A\vlte B)\vlpa K}{\BV}}$, then there are formulas $K_A$ and $K_B$ and derivations
    $$
    \odv{K_A\vlpa K_B}{\dD[K]}{K}{\BV}
    \text{ and }
    \odr{\dD[A]}{A\vlpa K_A}{\BV}
    \text{ and }
    \odr{\dD[B]}{B\vlpa K_B}{\BV}
    \text{, s.t. }
    \AFof{
      \odN{
        \odrP{\dD[A]}{A\vlpa K_A}{} \vlte \odrP{\dD[B]}{B\vlpa K_B}{}
        }{\swir}{
          (A\vlte B)\vlpa\odvP{K_A\vlpa K_B}{\dD[K]}{K}{}
        }{}
    }=\AFof{\dD}$$
  \item If there is a derivation $\smash{\odr{\dD}{(A\vlse B)\vlpa K}{\BV}}$, then there are formulas $K_A$ and $K_B$ and derivations
    $$
    \odv{K_A\vlse K_B}{\dD[K]}{K}{\BV}
    \text{ and }
    \odr{\dD[A]}{A\vlpa K_A}{\BV}
    \text{ and }
    \odr{\dD[B]}{B\vlpa K_B}{\BV}
    \text{, s.t. }
    \AFof{
      \odn{
        \odrP{\dD[A]}{A\vlpa K_A}{} \vlse \odrP{\dD[B]}{B\vlpa K_B}{}
        }{\qdr}{
          (A\vlse B)\vlpa\odvP{K_A\vlse K_B}{\dD[K]}{K}{}}
        {}
    }
    =\AFof{\dD}
    $$
  \item If there is a derivation $\smash{\odr{\dD}{a\vlpa K}{\BV}}$, then there is a formula $K$ and a derivation 
    $$
    \odv{\lneg a}{\dD[a]}{K}{\BV}
    \text{, s.t. }
    \AFof{
      \odn{\vlun}{\aidr}{a\vlpa\odvP{\lneg a}{\dD[a]}{K}{}}{}
    }=\AFof{\dD}$$
  \end{enumerate}
\end{restatable}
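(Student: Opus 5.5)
We will prove all three statements simultaneously, following the standard splitting proof of~\cite{gug:SIS}, with two refinements: we keep the flow invariant in the induction hypothesis, and we compute flows with the equations of \Cref{fig:af}. The induction is on the lexicographic pair (size of the conclusion, length of~$\dD$). Size counts atom occurrences and connectives, modulo the units removed by~$\feq$.

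In each case we look at the bottommost non-$\feq$ rule instance $\rho$ of $\dD$. We write $\dD$ as $\dD'$ followed by $\rho$, where $\dD'$ is a shorter $\BV$-derivation of some formula $P$. The case analysis is the usual one for splitting: either $\rho$ acts entirely inside $K$, or entirely inside $A$ or $B$, or it involves the principal structure $A\vlte B$ (resp.\ $A\vlse B$, resp.\ $a$) together with part of $K$.

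\begin{itemize}
  \item If $\rho$ is inside $K$, we apply the induction hypothesis to $\dD'$. We then append $\rho$ to $\dD[K]$; this is possible because $\rho$ rewrites $K'$ to $K$.
  \item If $\rho$ is inside $A$ or $B$, we append it to $\dD[A]$ or $\dD[B]$.
  \item If $\rho$ is a $\sw$ or $\qD$ moving part of $K$ into the principal structure, we do the following. In the tensor case with $P=((A\vlpa K_1)\vlte B)\vlpa K_2$, for instance, we first split on the tensor, obtaining $K_2 \leftarrow K'_A\vlpa K_B$. We then apply the induction hypothesis (1, 2 or 3, depending on $A$) to $A\vlpa K_1\vlpa K'_A$, which has a smaller conclusion. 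The $\qD$ case is handled analogously with nested seq-splittings.
\end{itemize}

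The atom case (statement~3) reduces to this scheme: once the context $K$ has been split down to a single $\lneg a$, the only remaining rule creating $a$ is an $\aiD$ instance.

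For the flow equations, each case needs an identity of flows. Vertical and horizontal composition of flows correspond to the open-deduction compositions. Moreover, $\AFof{\sw}$ and $\AFof{\feq}$ for associativity and units are identities, and the flow of $\qD$ is a braid of identity wires. Hence in the inner-$K$ and inner-$A$/$B$ cases the equality follows directly from the interchange law $(\phi\vcomp\phi')\hcomp(\psi\vcomp\psi')=(\phi\hcomp\psi)\vcomp(\phi'\hcomp\psi')$ and the induction hypothesis. In the principal cases we stack two applications of the induction hypothesis and must show that the resulting flow equals $\AFof{\dD}$. By the hypotheses, both flows are built from the same cups ($\aiD$ instances) and crossings, only permuted. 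Atomic flows with only cups and crossings are determined, modulo the equations of \Cref{fig:af} (naturality of crossings with cups and the symmetric-group relations), by the induced matching on output literals. So it suffices to check that both flows pair the same occurrences. This holds because each construction step only reassociates or permutes occurrences and never changes which $\aiD$ created which pair.

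We expect the main obstacle to be this last point: proving, or cleanly invoking, the fact that cut-free flows (cups plus crossings) are equal exactly when they induce the same matching of output positions, and tracking occurrences precisely through the commutativity steps of $\feq$ and through $\qD$, whose flow contains a crossing. We would first establish this as a separate normal-form lemma for cup/crossing flows, proved by pushing all crossings below the cups using the equations on the right of \Cref{fig:af}. We would then argue for the case analysis purely at the level of occurrence matchings, which avoids a diagrammatic calculation in each of the many cases of the splitting proof.
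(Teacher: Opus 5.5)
Your proposal is correct in outline and uses the same skeleton as the paper: the standard splitting induction, with the flow equation carried along in the induction hypothesis. The genuine difference is in how you discharge the flow equations.

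The paper checks, case by case, that the reconstructed derivation has literally the same flow. It uses only the interchange law and the sliding equations of \Cref{fig:af}, and it draws this for the cases it treats explicitly: rule inside $K$, rule inside $A$ or $B$, a $\feq$ step regrouping the tensor, a switch absorbing $K$, and the two subcases of statement~3.

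You instead isolate one coherence fact: a flow with empty input built only from cups and crossings is determined by the matching it induces on output occurrences. This is a special case of coherence for strict compact closed categories, or can be shown directly by sliding all cups to the top and using the Coxeter presentation of the symmetric group. You then reduce every case to checking that the reconstruction links the same pairs of occurrences as $\Phi$.

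What this buys is uniformity. The crossings from commutativity steps and from $\qdr$ need no diagrammatic bookkeeping, which is why you can jump straight to the bottommost non-$\feq$ rule instead of treating $\feq$ separately. The same argument also applies verbatim to cases the paper does not draw. The price is the normal-form lemma plus a precise, compositional notion of the linking of a derivation, including the input-to-output bijection for open pieces such as $\Phi_K$. You rightly flag this as the real work.

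Your lexicographic measure (size of conclusion, then length) is also the right one, more so than the paper's ``induction on the size of $\Phi$''. Some applications of the hypothesis are to derivations produced by an earlier application, not to subderivations of $\Phi$.

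There is one slip in your illustrative case. For $P=((A\vlpa K_1)\vlte B)\vlpa K_2$ no second application of the hypothesis is needed: simply take $K_A:=K_1\vlpa K'_A$. As written, ``1, 2 or 3, depending on $A$'' would also break when $A$ is a par or a unit.

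The smaller-conclusion application is genuinely needed in two situations:
\begin{itemize}
\item When the rule regroups factors. For example, with $A=A_1\vlte A_2$, $B=B_1\vlte B_2$ and premise $((A_1\vlte B_1)\vlte((A_2\vlte B_2)\vlpa K_1))\vlpa K_2$, you split each part again with statement~1.
\item In statement~3, when a switch or $\qdr$ moves $a$ into a tensor or seq of $K$. For example, with $R\vlte(T\vlpa a)$ above $(R\vlte T)\vlpa a$, you split and then apply statement~3 to the smaller $a\vlpa T\vlpa K_T$.
\end{itemize}
Your one-line treatment of statement~3 (``split $K$ down to a single $\lneg a$'') should be replaced by this case analysis.
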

\begin{proof}
  The proof is the same as for the standard splitting lemma for $\BV$ \cite{gug:SIS,SIS-IV}, which proceeds by induction on the size of $\Phi$ and a case analysis on the bottommost rule instance in~$\Phi$. The only difference here is the additional observation that each transformation preserves the atomic flow. More details are in Appendix~\ref{app:1}.
\end{proof}

In the next step, the context $\conhole\vlpa K$ of the splitting lemma is generalized to an arbitrary context $\cC$.
\begin{restatable}[Context Reduction, improved formulation]{lemma}{lemCR}\label{lem:context}
  If there is a derivation $\smash{\odr{\dD}{\cC[A]}{\BV}}$, then there is a formula $K$ 
  and derivations
  $$
  \odr{\dD[A]}{A\vlpa K}{\BV}
  \quand
  \odv{X\vlpa K}{\dD[{\cC[X]}]}{\cC[X]}{\BV}
  \text{ for every formula $X$, such that\/ }
  \AFof{
    \od{\odd{\odp{\dD[A]}{A\vlpa K}{\BV}}{\dD[{\cC[A]}]}{\cC[A]}{\BV}}
  }=\AFof{\dD}
  $$
\end{restatable}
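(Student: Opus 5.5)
We argue by induction on the structure of the context $\cC$, using the splitting lemma (Lemma~\ref{lem:splitting}) at each step. Up to $\feq$ (whose flows are only identities and crossings, hence harmless), every context has one of the shapes $\conhole$, $\cC'[\conhole]\vlpa B$ with $\cC'$ not a par-context at its root, $(\cC'[\conhole]\vlte B)\vlpa K'$, or $(\cC'[\conhole]\vlse B)\vlpa K'$. Here $B$ and $K'$ are possibly $\vlun$, since $\cC\feq\cC\vlpa\vlun$.

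\emph{Base case.} If $\cC=\conhole$, we take $K=\vlun$. Then $\dD[A]=\dD$ up to a $\feq$ step, and $\dD[{\cC[X]}]$ is the trivial $\feq$-derivation from $X\vlpa\vlun$ to $X$. The flows agree because the $\feq$ steps contribute only identities.

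\emph{Tensor case.} Let $\cC=(\cC'\vlte B)\vlpa K'$. Apply Lemma~\ref{lem:splitting}.1 to $\dD$ with the formulas $\cC'[A]$, $B$ and $K'$. This yields derivations of $\cC'[A]\vlpa K_1$ and of $B\vlpa K_2$, together with $K_1\vlpa K_2\to K'$, and the original flow factors as displayed there. Now apply the induction hypothesis to the derivation of $\cC'[A]\vlpa K_1$, viewed with context $\cC'\vlpa K_1$, which has smaller depth. This gives $K$, a proof $\dD[A]$ of $A\vlpa K$, and for every $X$ a derivation $X\vlpa K\to \cC'[X]\vlpa K_1$, with matching flow.

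We then define $\dD[{\cC[X]}]$ as follows. Start from $X\vlpa K$. Put the proof of $B\vlpa K_2$ in parallel with it via a tensor with $\vlun$, reaching $(\cC'[X]\vlpa K_1)\vlte(B\vlpa K_2)$. Apply two switches to obtain $(\cC'[X]\vlte B)\vlpa K_1\vlpa K_2$, and finally apply $K_1\vlpa K_2\to K'$. Switch instances have identity flow, and the flow equations, together with the interchange law for horizontal and vertical composition, let us reassociate the pieces. Hence plugging $\dD[A]$ on top recovers exactly the factorisation given by the splitting lemma, and thus $\AFof{\dD}$.

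\emph{Seq case.} Let $\cC=(\cC'\vlse B)\vlpa K'$. Use Lemma~\ref{lem:splitting}.2 in the same way, but replace the switches by a $\qD$ step from $(\cC'[X]\vlpa K_1)\vlse(B\vlpa K_2)$ to $(\cC'[X]\vlse B)\vlpa(K_1\vlse K_2)$. Since $\AFof{\qD}$ is just a crossing of the middle wires, the flow equation holds by the same rearrangement. The symmetric positions $B\vlte\cC'$ and $B\vlse\cC'$ are handled symmetrically.

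\emph{Par case.} Let $\cC=\cC'\vlpa B$ with $\cC'$ not a par-context at its root. Either $\cC'$ is the hole, and we take $K=B$; or we absorb $B$ into $K'$ and fall into one of the cases above.

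\emph{Main obstacle.} The delicate point is that $\dD[{\cC[X]}]$ must be uniform in $X$, while the flow equality is required only at $X=A$. Uniformity holds because the induction hypothesis supplies $X\vlpa K\to\cC'[X]\vlpa K_1$ for every $X$, and the constructions above never inspect $X$. The real work is then bookkeeping: checking that the flow of the assembled derivation equals the flow given by the splitting lemma. I would state a small auxiliary fact that the flow of an open-deduction derivation is invariant under composing the same components in different but interchange-equivalent orders. With that fact, each case reduces to matching the picture in Lemma~\ref{lem:splitting}.
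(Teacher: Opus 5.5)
Your proof is correct and follows the paper's route. Like the paper's sketch, you do induction on the context: each step removes one $\vlte$ or $\vlse$ layer using the improved splitting lemma and applies the induction hypothesis to $\mathcal{C}'\vlpa K_1$, until the base case $\conhole\vlpa K$ is reached. The flow equality then follows from the interchange law, since $\swir$, $\qdr$ and $\feq$ contribute only identities and crossings.

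One point to make explicit: the decreasing measure is the number of $\vlte$/$\vlse$ nodes on the path to the hole. This drops by one at each step regardless of how large $K_1$ is.
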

\begin{proof}
  As in the previous lemma, the proof is a straightforward adaptation of the standard context reduction lemma for $\BV$ \cite{gug:SIS,SIS-IV}, which proceeds by induction on the structure of $\cC$, by repeatedly applying splitting, until the base case $\conhole\vlpa K$ is reached. It follows immediately that each transformation preserves the atomic flow.
\end{proof}

These two lemmas are enough to eliminate the rules $\qU$ and $\aiU$ from a derivation. Again, this has already been shown in \cite{gug:SIS,SIS-IV}, and we only need to observe what happens to the atomic flows under this transformation.

\begin{restatable}{lemma}{lemQuElim}\label{lem:qur}
  If there is a derivation 
  $\dD= {\odr{\dD'}{\cC[\odn{(A\vlse B)\vlte(C\vlse D)}{\qur}{(A\vlte C)\vlse(B\vlte D)}{}]}{\BV}}$ 
  then there is a derivation 
  $\upsmash{\odr{\dDast}{\cC[(A\vlte C)\vlse(B\vlte D)]}{\BV}}$
  such that $\AFof{\dDast}=\AFof{\dD}$.
\end{restatable}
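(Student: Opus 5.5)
The plan is to follow the standard elimination of $\qU$ from \cite{gug:SIS} and check at each step that the atomic flow does not change. Write $\dD'$ as a derivation of $\cC[(A\vlse B)\vlte(C\vlse D)]$ from $\vlun$, followed by the $\qU$ instance in the context $\cC$.

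First, we apply \Cref{lem:context} to $\dD'$ with the subformula $(A\vlse B)\vlte(C\vlse D)$ in the hole. This gives a formula $K$, a derivation $\dD[1]$ of $((A\vlse B)\vlte(C\vlse D))\vlpa K$ from $\vlun$, and, for every $X$, a derivation $\dD[{\cC[X]}]$ from $X\vlpa K$ to $\cC[X]$. Composing them yields a derivation whose flow equals $\AFof{\dD'}$. We will use the second family with $X=(A\vlte C)\vlse(B\vlte D)$.

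Second, we apply splitting for the tensor (\Cref{lem:splitting}.1) to $\dD[1]$. This produces $K_1,K_2$, a derivation from $K_1\vlpa K_2$ to $K$, and proofs of $(A\vlse B)\vlpa K_1$ and $(C\vlse D)\vlpa K_2$. The flow is preserved once these pieces are recombined by a $\sw$. Next we apply splitting for seq (\Cref{lem:splitting}.2) to each of the two proofs. This gives:
\begin{itemize}
\item proofs of $A\vlpa K_A$ and $B\vlpa K_B$, together with a derivation from $K_A\vlse K_B$ to $K_1$;
\item proofs of $C\vlpa K_C$ and $D\vlpa K_D$, together with a derivation from $K_C\vlse K_D$ to $K_2$.
\end{itemize}
Again each recombination, now via a $\qD$, preserves the flow.

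Third, we build $\dDast$ as follows. From $\vlun\feq(\vlun\vlte\vlun)\vlse(\vlun\vlte\vlun)$, place the four proofs in the positions $((A\vlpa K_A)\vlte(C\vlpa K_C))\vlse((B\vlpa K_B)\vlte(D\vlpa K_D))$. Apply two $\sw$ steps in each seq component to reach $((A\vlte C)\vlpa K_A\vlpa K_C)\vlse((B\vlte D)\vlpa K_B\vlpa K_D)$. Then one $\qD$ gives $((A\vlte C)\vlse(B\vlte D))\vlpa((K_A\vlpa K_C)\vlse(K_B\vlpa K_D))$, and a second $\qD$ on the $K$-part gives $(K_A\vlse K_B)\vlpa(K_C\vlse K_D)$. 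From there we reach $K_1\vlpa K_2$ and then $K$ with the derivations obtained above, and finally apply $\dD[{\cC[X]}]$. All rules are in $\BV$.

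The main obstacle is showing $\AFof{\dDast}=\AFof{\dD}$. The strategy is to write both flows as vertical and horizontal composites of the same pieces: the flows of the four proofs, of the $K$-derivations, and of $\dD[{\cC[X]}]$, joined only by permutation flows coming from $\sw$, $\qD$, $\qU$ and $\feq$. These rules contribute only identities and crossings. On the $\dD$ side, the inner structure supplies one crossing, from the $\qD$ recombinations that place $A,B$ (resp.\ $C,D$) next to their $K$'s, and the $\qU$ instance supplies the crossing of $B$ with $C$. On the $\dDast$ side, the two $\qD$ steps and the $\feq$ rearrangements supply the crossings. Both sides are built from the same input flows, and the permutation parts connect each atom occurrence to the same endpoint. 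By the symmetric monoidal equations of \Cref{fig:af} (the crossing is an involution satisfying the naturality and Yang--Baxter laws), two permutation flows with the same underlying bijection are equal. Naturality of crossings with respect to the cup flows lets us slide the proofs' flows past crossings, which reduces the comparison to checking that the two induced bijections on literal occurrences agree. That check is routine bookkeeping once we track the positions of $A,B,C,D,K_A,\ldots,K_D$ in both strings.
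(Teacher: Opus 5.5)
Your proposal is correct and follows the paper's own proof essentially step for step. The steps are: context reduction, then splitting for tensor followed by splitting for seq on both halves, then reassembly of $\dDast$ via $\swir$ steps and two $\qdr$ instances, and finally a comparison of the flows using the equations of \Cref{fig:af}. Neither you nor the paper states that $\dD[{\cC[X]}]$ is built uniformly in $X$, so that its $X$-wires pass through as a block; this is what lets the crossing from the $\qur$ instance slide up through $\AFof{\dD[{\cC[X]}]}$.
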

\begin{proof}
  By the previous two lemmas, we have derivations
  $$
  \odv{((A\vlte C)\vlse(B\vlte D))\vlpa(K_A\vlse K_B)\vlpa(K_C\vlse K_D)}
      {\dD[{\cC[(A\vlte C)\vlse(B\vlte D)]}]}{\cC[(A\vlte C)\vlse(B\vlte D)]}{\BV}
  \qomma
  \odr{\dD[A]}{A\vlpa K_A}{\BV}
  \qomma
  \odr{\dD[B]}{B\vlpa K_B}{\BV}
  \qomma
  \odr{\dD[C]}{C\vlpa K_C}{\BV}
  \qomma
  \odr{\dD[D]}{D\vlpa K_D}{\BV}
  $$
  Which can be put together to get a derivation $\dDast$. 
  The fact that $\AFof{\dDast}=\AFof{\dD}$ follows from the fact that atomic flows are equivalent modulo the identities in \Cref{fig:af}. More details are in Appendix~\ref{app:1}.
\end{proof}

\begin{lemma}\label{lem:aiur}
  If there is a derivation 
  $\dD=\vldownsmash{\odr{\dD'}{\cC[\odn{a\vlte \lneg a}{\aiur}{\vlun}{}]}{\BV}}$ 
  then there is a derivation 
  $\vldownsmash{\odr{\dDast}{\cC[\vlun]}{\BV}}$ 
  such that 
  $\AFof{\dD}\yanksto\AFof{\dDast}$.
\end{lemma}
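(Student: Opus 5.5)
We follow the pattern of the proof of \Cref{lem:qur}, replacing the sequentiality-specific splitting by the atomic case of \Cref{lem:splitting}. The derivation $\dD$ consists of a $\BV$-derivation $\dD'$ with premise $\vlun$ and conclusion $\cC[a\vlte\lneg a]$, followed by the instance of $\aiur$ inside $\cC$. Accordingly, $\AFof{\dD}$ is $\AFof{\dD'}$ composed with a cup on the two wires $a,\lneg a$ of the hole, and identities on the wires of $\cC$.

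First we apply \Cref{lem:context} to $\dD'$ with hole filled by $A=a\vlte\lneg a$. This gives a formula $K$, a derivation $\dD[A]$ of $(a\vlte\lneg a)\vlpa K$ from $\vlun$, and a family $\dD[{\cC[X]}]\colon X\vlpa K\to\cC[X]$ whose composite with $\dD[A]$ has flow $\AFof{\dD'}$. Next we apply \Cref{lem:splitting}.1 to $\dD[A]$, obtaining $K_a,K_{\lneg a}$, derivations $\dD[K]\colon K_a\vlpa K_{\lneg a}\to K$, $\dD[a]$ of $a\vlpa K_a$ and $\dD[\lneg a]$ of $\lneg a\vlpa K_{\lneg a}$, with flows preserved. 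We then apply \Cref{lem:splitting}.3 to $\dD[a]$ and $\dD[\lneg a]$; for the negated atom we use the symmetric version, by renaming atoms or treating $\lneg a$ as an atom. This yields derivations $\psi\colon\lneg a\to K_a$ and $\chi\colon a\to K_{\lneg a}$, each preceded by an $\aidr$, again with flows preserved.

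The target derivation $\dDast$ is then built as follows:
\[
\vlun\;\xrightarrow{\aidr}\;\lneg a\vlpa a\;\xrightarrow{\psi\vlpa\chi}\;K_a\vlpa K_{\lneg a}\;\xrightarrow{\dD[K]}\;K\feq\vlun\vlpa K\;\xrightarrow{\dD[{\cC[\vlun]}]}\;\cC[\vlun].
\]
The first step produces $a\vlpa\lneg a$ and uses $\feq$ to reorder. Here we instantiate the context-reduction family at $X=\vlun$; this is exactly why \Cref{lem:context} is stated for every $X$. All rules used are in $\BV$.

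It remains to compare flows, which is the main obstacle. By the three flow equalities, $\AFof{\dD}$ equals the following composite: two caps, from the two $\aidr$ in $\dD[a]$ and $\dD[\lneg a]$, giving wires $a,\lneg a$ (first cap) and $\lneg a,a$ (second cap); the flows of $\psi\hcomp\chi$ on the inner wires; the flow of the $\swir$ (identities); then $\dD[K]$ and $\dD[{\cC[a\vlte\lneg a]}]$; and finally the cup on the hole wires. In $\dD[{\cC[X]}]$, the wires of $X$ pass through as identities, since the derivation is uniform in $X$. So we must argue that its flow is $\idF[X]\hcomp\AFof{\dD[{\cC[\vlun]}]}$ up to crossings. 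We would justify this by inspecting the construction in \Cref{lem:context}, where $X$ is only moved by switches and sequentialities, never touched by interactions. Granting this, the cup meets one wire of each cap: the $a$-wire of the first cap and the $\lneg a$-wire of the second. Using the crossing equations of \Cref{fig:af} to slide crossings past the cup, we obtain a cap--cup--cap zigzag. One yanking step then turns it into a single cap connecting the remaining $\lneg a$ and $a$ wires. That cap is precisely $\AFof{\aidr}$ at the top of $\dDast$, and everything below it matches. Hence $\AFof{\dD}\yanksto\AFof{\dDast}$. The delicate bookkeeping lies in the uniformity of $\dD[{\cC[X]}]$ in $X$ and in the crossings; we would state the uniformity as a strengthening of \Cref{lem:context}, whose inductive proof visibly supports it.
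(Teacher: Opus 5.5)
Your proposal is correct and follows essentially the paper's proof. Both use context reduction, then splitting on $a\vlte\lneg a$ and on each literal, then rebuild $\Phi^\ast$ from a single $\aidr$ followed by the context derivation instantiated at $X=\vlun$, and conclude with one yanking step on the cap--cup--cap zigzag. The differences are minor: the paper folds $\Phi_K$ into the context derivation, and it uses the uniformity of the context derivation in $X$ without comment (``by rewriting via the equalities in \Cref{fig:af}''), whereas you rightly make that uniformity explicit.
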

\begin{proof}
  By context reduction and splitting, we get two formulas $K_a$ and $K_{\lneg a}$ and derivations
  $$
  \odv{(a\ltens \lneg a)\lpar K_a\lpar K_{\lneg a}}{\dD[{\cC[a\ltens\lneg a]}]}{\cC[a\ltens\lneg a]}{\BV}
  \quand
  \odv{\vlun\lpar K_a\lpar K_{\lneg a}}{\dD[{\cC[\vlun]}]}{\cC[\vlun]}{\BV}
  \quand
  \odv{\lneg a}{\dD[{a}]}{K_{a}}{\BV}
  \quand
  \odv{a}{\dD[{\lneg a}]}{K_{\lneg a}}{\BV}
  $$
  We can then build the derivations $\dDp$ and $\dDast$ shown below.
  $$
    \dDp
  =
    \od{
      \odd{
        \odI{
          \odh{
            \odnP{\vlun}{\aidr}{a\vlpa \oD[a]}{}
            \vlte
            \odnP{\vlun}{\aidr}{\lneg a \vlpa \oD[{\lneg a}]}{}
          }
        }{\swir,\feq}{
          (a\vlte \lneg a) \vlpa K_a\vlpa K_{\lneg a}
        }{}
      }{
        \dD[{\cC[a\vlte \lneg a]}]
      }{
        \cC[\odn{a\vlte \lneg a}{\aiur}{\vlun}{}]
      }{}
    }
  \qquad
    \dDast
  =
    \od{
      \odo{
        \odi{
          \odh{\vlun}}{
          \aiD}{
          \odvP{\lneg a}{\dD[{a}]}{K_{a}}{\BV}
          \vlpa 
          \odvP{a}{\dD[{\lneg a}]}{K_{\lneg a}}{\BV}
        }{}}{
      \feq}{
        \odv{\vlun \vlpa K_a\lpar K_{\lneg a}}{
          \dDp[{\cC[\vlun]}]
        }{
          \cC[\vlun]
        }{}}{
      }
    }
  $$
  We have $\AFof{\dD}\;=\;\AFof{\dDp}$ and 
  $
  \AFof\dDast
  =
  \begin{AF}
    \naf{4.5,2}{\nafhdshort[2.5]{}{}{}{}}
    \naf{2,0}{\nafdiagram{\AFof{\dD[K_a]}}{}{,.,}21}
    \naf{7,0}{\nafdiagram{\AFof{\dD[K_{\lneg a}]}}{}{,.,}21}
    \naf{4.5,-2.5}{\nafdiagram{\AFof{\dD[{\cC[\vlun]}]}}{}{,.,}{4.5}1}
  \end{AF}
  $
  by \Cref{lem:splitting,lem:context}, and
  $$
  \AFof{
    \odv{
      (a\vlte \lneg a) \vlpa K_a\vlpa K_{\lneg a}
    }{
      \dD[{\cC[a\vlte \lneg a]}]
    }{
      \cC[\odn{a\vlte \lneg a}{\aiur}{\vlun}{}]
    }{}
  }
  \;=\;
  \AFof{
    \od{
      \odd{
        \odh{
          \odnP{a\vlte \lneg a}{\aiur}{\vlun}{} \vlpa K_a\vlpa K_{\lneg a}
        }
      }{
        \dD[{\cC[\vlun]}]
      }{
        \cC[\vlun]
      }{}
    }
  }
  $$
  by rewriting via the equalities in \Cref{fig:af}. Therefore 
  $\AFof\dD$ is as shown on the left below. This rewrites with a single $\yanksto$ step, into the atomic flow on the right below, which is equal to $\AFof{\dDast}$.  
  $$\adjustbox{max width=.9\textwidth}{$
  \AFof{\Phi}
  =
  \MAF{
    \naf{2,1.8}{\nafdiagram{\AFof{\dD[{\cC[\vlun]}]}}{}{,.,}{8}{1.2}}
    \naf{0,4.5}{\nafdiagram{\AFof{\dD[a]}}{}{,.,}{3}{1}}
    \naf{7,4.5}{\nafdiagram{\AFof{\dD[\lneg a]}}{}{,.,}{3}{1}}
    \naf{0,9}{\nafhdx{a}{}{}{\lneg a}{.5}}
    \naf{1,7}{\nafv{1.5}}
    \naf{7,9}{\nafhdx{\lneg a}{}{}a{.5}}
    \naf{8,7}{\nafv{1.5}}
    \naf{-5,4}{\nafhux{}{}{}{}{.5}}
    \naf{1,6.5}{\nafjr5{1.5}}
    \naf{-3.5,6.5}{\nafjr{2.5}{1.5}}
  }
  \quad\yanksto\quad
  \MAF{
    \naf{2,1.8}{\nafdiagram{\AFof{\dD[{\cC[\vlun]}]}}{}{,.,}{8}{1.2}}
    \naf{0,4.5}{\nafdiagram{\AFof{\dD[a]}}{}{,.,}{3}{1}}
    \naf{7,4.5}{\nafdiagram{\AFof{\dD[{\lneg a}]}}{}{,.,}{3}{1}}
    \naf{7,9}{\nafhdx{\lneg a}{}{}a{.5}}
    \naf{8,7}{\nafv{1.5}}
    \naf{3.5,6.75}{\nafjr{2.5}{1.25}}
  }
  =\AFof{\dDast}
  $}$$
\end{proof}

Now \Cref{thm:cutelim} follows immediately from the previous two lemmas and \Cref{prop:SBV}.
However, with atomic flows, we have a stronger result.

\begin{notation}
  Let $\dD$ be an $\SBV$ derivation with premise $\vlun$. We write $\dD\CEto\dDp$ if $\dDp$ is a $\BV$ derivation that is obtained from $\dD$ by eliminating the $\qU$- and $\aiU$-rule instances according to the proofs of Lemmas~\ref{lem:qur} and~\ref{lem:aiur}.
  \end{notation}

We can now state and prove  our main result.
\begin{theorem}\label{thm:YAF}
  If $\dD\CEto\dDp$ then $\AFof{\dDp}=\Yof{\AFof{\dD}}$.
\end{theorem}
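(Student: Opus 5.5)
The proof goes by induction on the number of up-rule instances ($\qU$ and $\aiU$) in $\dD$, following the elimination procedure that defines $\CEto$. A single elimination step replaces $\dD$ by a derivation $\dD_1$ with one fewer up-rule instance. If the eliminated instance is a $\qU$, then \Cref{lem:qur} gives $\AFof{\dD_1}=\AFof{\dD}$. If it is an $\aiU$, then \Cref{lem:aiur} gives $\AFof{\dD}\yanksto\AFof{\dD_1}$ in exactly one step. Composing these along the whole sequence $\dD=\dD_0\CEto\cdots\CEto\dD_n=\dDp$ yields $\AFof{\dD}\yanksto^{*}\AFof{\dDp}$.

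To apply the lemmas, I must make sure each step is in the form they require. The lemmas are stated for derivations whose only up-rule is the bottommost one, with the part above it in $\BV$. I therefore eliminate the up-rules from the top down: I always pick a topmost up-rule instance $\rho$, so the sub-derivation $\dD'$ above $\rho$ is a pure $\BV$ proof of $\cC[\text{premise of }\rho]$. The lemma then replaces $\dD'$ together with $\rho$ by a $\BV$ proof of $\cC[\text{conclusion of }\rho]$. The rest of $\dD$ below $\rho$ is left unchanged.

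Since atomic flows are built compositionally ($\AFof{\dD_1\text{ then }\dDb}=\AFof{\dD_1}\vcomp\AFof{\dDb}$), the equality or yanking step transfers to the whole derivation. This requires $\yanksto$ to be closed under vertical and horizontal composition, which holds because it is a rewrite relation generated by local rules on diagrams taken modulo the equations of \Cref{fig:af}. This also fixes the convention of which up-rule is eliminated first; the notation $\CEto$ refers to this procedure.

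The remaining step is to show that $\AFof{\dDp}$ is already in yanking normal form, so that by confluence and termination (established above) it equals $\Yof{\AFof{\dD}}$. Here $\dDp$ is a $\BV$ derivation, so its flow is built only from identities, crossings and caps $\AFof{\aiD}$, with no cups. Every yanking redex contains a cup, so no rewrite applies. I must check that this holds modulo the equivalences: each equation in \Cref{fig:af} preserves the number of cups, so a cup-free flow has no cup-containing representative.

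I expect the main obstacle to be precisely this well-definedness of rewriting modulo the diagram equations. Yanking has to be understood on equivalence classes, and one must confirm both that it is compatible with contexts and that confluence and normal forms are taken in this quotient. I would argue via cup counting for normality, and rely on the stated confluence proposition for uniqueness.
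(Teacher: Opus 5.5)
Your proposal is correct and follows the paper's proof. You iterate Lemmas~\ref{lem:qur} and~\ref{lem:aiur}, top-down, so that each application sees a pure $\BV$ derivation above the eliminated up-rule, and obtain $\AFof{\dD}\yanksto^{*}\AFof{\dDp}$; since $\dDp$ has no $\aiU$-instance its flow has no yanking redex, and uniqueness of yanking normal forms finishes the argument, with you only spelling out details the paper leaves implicit (closure of $\yanksto$ under composition, and that the equations of \Cref{fig:af} preserve the number of cups).
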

\begin{proof}
  By Lemmas~\ref{lem:qur} and~\ref{lem:aiur}, we have $\AFof{\dD}\yanksto\AFof{\dDp}$. Since $\dDp$ has no $\aiU$-instance (and no $\iU$-instance), we have that $\AFof{\dDp}$ has no yanking redex. Hence $\AFof{\dDp}=\Yof{\AFof{\dD}}$.
\end{proof}

\begin{corollary}\label{prop:SBV-AF}
  Let $\dD$ be a derivation of $A$ in $\BV\cup\set{\iU}$.
  Then, there is a derivation $\dDp$ of $A$ in $\BV$ such that $\AFof{\dDp}=\Yof{\AFof{\dD}}$.
\end{corollary}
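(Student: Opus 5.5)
The plan is to reduce the corollary to \Cref{thm:YAF} by first turning the $\BV\cup\set{\iU}$ derivation $\dD$ into an $\SBV$ derivation with the same atomic flow.

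\textbf{Step 1: replacing the general cuts.} For each instance of $\iU$ in $\dD$, say with premise $B\vlte\lneg B$, I substitute the derivation of $\iU$ in $\set{\feq,\aiU,\swir,\qU}$ given by \Cref{prop:i}.2. Open deduction is compositional: a rule instance sitting inside a context $\cC$ can be replaced by a derivation with the same premise and conclusion, placed in the same position. Doing this for every instance yields an $\SBV$ derivation $\dD_0$ with premise $\vlun$ and conclusion $A$.

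\textbf{Step 2: the flow is unchanged.} I argue $\AFof{\dD_0}=\AFof{\dD}$. By \Cref{prop:AF-ID}, the flow of the derivation replacing each $\iU$ instance is the generalized cup on $\stringof{B}$ from \Cref{fig:afAbb}. This is the flow that $\iU$ naturally contributes, and it is presumably how $\AFof{\iU}$ is intended to be read for derivations in $\BV\cup\set{\iU}$, since \eqref{eq:basicflows} only assigns flows to the rules of $\SBV$.

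If instead $\AFof{\dD}$ is only defined through this expansion, the equality holds by definition. In either reading, $\AFof{\cdot}$ is defined compositionally: horizontal composition for connectives and vertical composition for rule instances. So replacing a subderivation by one with an equal flow leaves the total flow unchanged, by an easy induction on $\dD$.

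\textbf{Step 3: eliminating the up-rules.} Apply the cut elimination of Lemmas~\ref{lem:qur} and~\ref{lem:aiur} to $\dD_0$. I repeatedly pick an up-rule instance and eliminate it. At this stage the derivation above that instance must be in $\BV$, so I should always choose a topmost up-rule instance: write the derivation as a $\BV$ derivation $\dD'$ of $\cC[\text{premise}]$, followed by the rule and the remaining part.

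Applying the lemma to the prefix gives a $\BV$ derivation of $\cC[\text{conclusion}]$, and I reattach the remainder. Each step either preserves the atomic flow (\Cref{lem:qur}) or performs one yanking step on it (\Cref{lem:aiur}). Because vertical composition with the untouched remainder commutes with these equalities and rewrites, the whole flow is preserved or yanked accordingly.

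The number of up-rule instances strictly decreases at each step, so the process terminates in a $\BV$ derivation $\dDp$ of $A$ with $\dD_0\CEto\dDp$.

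\textbf{Step 4: conclusion.} \Cref{thm:YAF} then gives $\AFof{\dDp}=\Yof{\AFof{\dD_0}}=\Yof{\AFof{\dD}}$.

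The main obstacle is Step 2. It requires pinning down how the flow of an $\iU$ instance is defined and checking that the flow equivalences and yanking are congruences under composition with context flows, which holds because flows are quotiented by a congruence.
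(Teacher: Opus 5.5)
Your proof is correct and follows the paper's argument. You replace each $\iur$ instance by the derivation from Proposition~\ref{prop:i}.2, use Proposition~\ref{prop:AF-ID} to see that the flow is unchanged, and conclude with Theorem~\ref{thm:YAF}; your Step~3, which builds the $\CEto$ reduction by repeatedly eliminating a topmost up-rule via Lemmas~\ref{lem:qur} and~\ref{lem:aiur}, just spells out the existence of the reduction that the paper takes for granted when it invokes Theorem~\ref{thm:YAF}.
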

\begin{proof}
  By \Cref{prop:i}, we can construct a derivation $\dDp$ by replacing each $\iU$-instance in $\dD$ by a derivation in $\set{\feq,\aiU,\swir,\qU}$.
  By \Cref{prop:AF-ID} the atomic flow of $\dDp$ is the same as the one of $\dD$.
  Finally, by \Cref{thm:YAF}, we can eliminate the up-rules from $\dDp$ obtaining a derivation $\dDs$ in $\BV$ such that $\AFof{\dDs}=\Yof{\AFof{\dDp}}=\Yof{\AFof{\dD}}$.
\end{proof}

We can now formally define proof identity in $\BV$.

\begin{definition}
  Let $\Phi$ and $\Psi$ be two derivations of $A$ in $\SBV$.
  We say that $\Phi$ and $\Psi$ are \defn{equivalent} (denoted $\Phi \preq \Psi$) if there are derivations $\Phi'$ and $\Psi'$ such that $\Phi \CEto \Phi'$ and\/ $\Psi \CEto \Psi'$ and\/ $\AFof{\Phi'} = \AFof{\Psi'}$~.
\end{definition}

\section{What is a BV-Category?}\label{sec:BVcats}

It is commonly expected from a categorical model that it identifies proofs that
are considered equal by the logic. A typical example are $*$-autonomous
categories that identify proofs of multiplicative linear logic ($\MLL$) up to
rule permutations in the sequent calculus~\cite{blute:93,lam:str:06:freestar,hughes:freestar}.

Recall that if $\tuple{\CC, \otimes, \lone,\lneg{\cdot}}$ is a $*$-autonomous category, then
we can define another symmetric monoidal structure $A \vlpa B \triangleq \lneg{(\lneg{A} \otimes \lneg B)}$ with unit $\lbot \triangleq \lneg\lone$ by duality.
As $\BV$ is a conservative extension of $\MLL$ (with $\mix$ and
$\mixz$)\footnote{$\mix$ is $\lbot\limp\lone$ and $\mixz$ is $\lone\limp\lbot$. See \cite{fle:ret:mix,abramsky:jagadeesan:94}.}, it immediately follows that any categorical
model of $\BV$ should be a $*$-autonomous category that is also isomix (this allows us to identify the two units $\lone$ and $\lbot$). Following the notation of the previous sections, we denote
this unit by~$\lunit$. Furthermore, since $\BV$ has an additional
non-commutative binary connective $\vlse$ (with unit $\lunit$), we also use an
additional monoidal structure $\tuple{\vlse,\lunit}$.

The first proposal for what a \bvcat should be was made in
\cite{blu:pan:slav:deep} followed by another one in \cite{bv-cats2}. We invite
the reader to consult Appendix \ref{app:BVcat} for the definition of a
\emph{\bvcat with negation} in the sense of \cite{blu:pan:slav:deep}.
The definition of a \bvcat with negation in \cite{blu:pan:slav:deep}
is equivalent to the definition of a \bvcat in \cite{bv-cats2}\footnote{James Hefford, personal communication.} and henceforth
we often refer to both of these (equivalent) concepts as \emph{\bvcat} for simplicity.
The presentation of \bvcats in \cite{blu:pan:slav:deep} is based on linearly distributive categories whereas
the presentation in \cite{bv-cats2} is based on $*$-autonomous categories.

Unfortunately, soundness has not been demonstrated for the two aforementioned
proposals for \bvcats. Furthermore, we conjecture that the notion of
proof identity, based on atomic flows, that we consider in this paper does not
match well with these proposals. Note that atomic flows in the logic $\BV$ correspond
to the string diagrams of a strict compact closed category (see previous
section). The \bvcats of
\cite{blu:pan:slav:deep,bv-cats2} are not built around compact closed
categories and we conjecture that, in general, arbitrary \bvcats in the sense of \cite{blu:pan:slav:deep,bv-cats2} cannot be embedded
in a compact closed category in the way we need~(Definition~\ref{def:bv-model}).

Because of this, we believe that our notion of proof
equivalence does not match well with the \bvcats of
\cite{blu:pan:slav:deep,bv-cats2}. This is our justification for introducing a
new notion of \bvcat (Definition \ref{def:bv-model}) for which we prove
soundness (Theorem \ref{thm:soundness}) and which behaves well with respect to
our notion of proof equivalence based on atomic flows
(\cref{cor:proof-equivalence}). We give our proposal the name \emph{strong
\bvcat}, because every strong \bvcat is also a \bvcat  in the sense of \cite{blu:pan:slav:deep,bv-cats2} (Theorem
\ref{thm:strong-bv-category}), but we conjecture that the converse implication
does not hold.

Before we formulate our main definition, recall that every compact closed category $(\DD,\boxtimes,\ccunit,\lneg{\cdot})$ is $*$-autonomous with $A \vlpa B  \cong A \boxtimes B.$
In this paper we work with \emph{strict} compact closed categories. By this we mean that:
(1) the symmetric monoidal structure is strict so that $A \boxtimes (B \boxtimes C) = (A \boxtimes B) \boxtimes C$ and $\ccunit\boxtimes A = A = A \boxtimes \ccunit$;
(2) the canonical De Morgan isomorphism is the identity so that $(A \boxtimes B)^{\perp} = A^{\perp} \boxtimes B^{\perp}$;
(3)~the~double dual isomorphism is the identity so that $A = A^{\perp\perp}.$
We also recall that a strict $*$-autonomous functor $U \colon \CC \to \DD$
is a functor that preserves all of the $*$-autonomous structure up to equality, i.e., $U(A \otimes B) = UA \otimes UB$, $U(\lneg{A}) = \lneg{(UA)}$,
$U(f \otimes g) = Uf \otimes Ug$, etc.

\begin{definition}\label{def:bv-model}
  A \textbf{strong \bvcat} is a tuple $\tuple{\CC,\vlte,\vlse,\vlun,\lneg\cdot,U}$ such that:
  \begin{enumerate}
    \item\label{it:units} $\tuple{\CC,\vlte,\vlun,\lneg\cdot}$ is a $*$-autonomous category together with an isomix isomorphism $\lneg\vlun \cong \vlun$;
    \item $\tuple{\CC,\vlse,\vlun}$ is a monoidal category;
    \item there is a natural isomorphism $\kappa_{AB} \colon \lneg{A} \vlse \lneg{B} \cong \lneg{(A \vlse B)}$;
    \item there is a natural transformation $q_{ABCD} \colon (A \vlpa B) \vlse (C \vlpa D) \to (A \vlse C) \vlpa (B \vlse D)$;
    \item\label{it:coherence} $U \colon \CC \hookrightarrow \DD$ is a faithful strict $*$-autonomous functor into a strict compact closed category $(\DD, \boxtimes,\lneg\cdot, \ccunit)$ such that:
      \begin{itemize}
        \item $U$ is strict monoidal with respect to $\vlse$;
        \item $U(\kappa_{AB}) = \id_X$, where  $X = \lneg{(UA)} \boxtimes \lneg{(UB)} = \lneg{(UA \boxtimes UB)} $ in $\DD$;
        \item the following diagram commutes for any choice of objects $A,B,C,D$:
          \begin{equation}\label{eq:q-coherence}
            \hfill
            \begin{tikzcd}
              U((A \vlpa B) \vlse (C \vlpa D)) \arrow[d, "Uq"'] \arrow[equal]{r} & UA \boxtimes UB \boxtimes UC \boxtimes UD \arrow[d, "\id \boxtimes \swap \boxtimes \id"]\\
              U((A \vlse C) \vlpa (B \vlse D)) \arrow[equal]{r} & UA \boxtimes UC \boxtimes UB \boxtimes UD
            \end{tikzcd}
          \hfill
          \end{equation}
          where $\sigma$ stands for the monoidal symmetry.
      \end{itemize}
  \end{enumerate}
\end{definition}

By using the canonical De Morgan isomorphisms in the $*$-autonomous category $\CC$, the double isomorphism, and the self-duality of $\lseq$ via the $\kappa$ natural isomorphism, one may now define a
natural transformation
\[ q'_{ABCD} \colon (A \lseq B) \ltens (C \lseq D) \xrightarrow{\cong \circ \lneg{q} \circ \cong} (A \ltens C) \lseq (B \ltens D) \]
which is therefore principally determined by $q$ via duality.
The importance of condition \ref{it:coherence} from the above definition is
that it may be understood as specifying a coherence property. Note that
coherence for compact closed categories is already known, so the faithfulness
of $U$ and the fact that $U$ preserves all the relevant structure provide us
with a simple criterion to check if a diagram in a strong \bvcat
commutes. Less formally, condition \ref{it:coherence} allows us to inherit many of the coherence
properties of the compact closed category $\DD$ (which are already known) into
$\CC$.

\begin{restatable}{theorem}{strongBVcategory}\label{thm:strong-bv-category}
  Let $\tuple{\CC,\vlte,\vlse,\vlun,\lneg\cdot,U}$ be a strong \bvcat. Then $\tuple{\CC,\vlte,\vlse,\vlun,\lneg\cdot,U}$ is also a \bvcat (with negation) in the sense of \cite{blu:pan:slav:deep} and \cite{bv-cats2}.
\end{restatable}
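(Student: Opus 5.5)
The plan is to reduce every axiom of a \bvcat (with negation) in the sense of \cite{blu:pan:slav:deep} to the commutativity of a diagram in $\CC$, and then to check that diagram after applying $U$. Condition~\ref{it:coherence} of Definition~\ref{def:bv-model} is what makes this work: $U$ is faithful, so two parallel morphisms of $\CC$ are equal as soon as their images in $\DD$ are equal.

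First, the structural data required by \cite{blu:pan:slav:deep} is already present in Definition~\ref{def:bv-model}. We have a $*$-autonomous isomix category, which is in particular a linearly distributive category with negation. We have a second monoidal structure $\tuple{\vlse,\vlun}$ sharing the unit, the self-duality isomorphism $\kappa$, and the medial-like map $q$. The dual map $q'$ is obtained as in the text following Definition~\ref{def:bv-model}. Anything else the cited definition asks for will be defined as a composite of these maps, again up to the De Morgan isomorphisms. This includes, for instance, the maps relating $\vlse$ with $\vlte$ and $\vlpa$ obtained from $q$ by inserting units, the unit maps $\vlun\vlse A\cong A$, and the mix maps.

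Second, we list the coherence axioms of \cite{blu:pan:slav:deep} (see Appendix~\ref{app:BVcat}). These are of several kinds: compatibility of $q$ with the associators and unitors of $\vlse$ and of $\vlpa$, naturality, compatibility of $\kappa$ with the monoidal structure of $\vlse$ and with double negation, and the requirement that $q'$ be the dual of $q$. For each axiom, both sides are built from structural isomorphisms of $\CC$, from $\kappa$, from $q$ and from $q'$. Applying $U$, every structural isomorphism becomes an identity or a symmetry, because $\DD$ is strict compact closed and $U$ is strict $*$-autonomous and strict monoidal for $\vlse$. The map $U\kappa$ is an identity, and $Uq$ is $\id\boxtimes\swap\boxtimes\id$ by \eqref{eq:q-coherence}. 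For $q'$ we compute $Uq'$: since $U$ preserves negation strictly and $\lneg{(\id\boxtimes\swap\boxtimes\id)}=\id\boxtimes\swap\boxtimes\id$ in a strict compact closed category, $Uq'$ is again the middle symmetry. Both sides of each axiom are therefore sent to permutations of the same list of objects in $\DD$. The coherence theorem for symmetric monoidal and compact closed categories \cite{kelly:laplaza:80} then says that two such morphisms are equal exactly when they induce the same permutation. We check this by tracking the underlying strings of objects, in the same way as for atomic flows, and faithfulness of $U$ gives equality in $\CC$. Since \cite{bv-cats2} is equivalent to \cite{blu:pan:slav:deep}, the second part of the claim follows; alternatively the same argument can be run directly on the $*$-autonomous presentation of \cite{bv-cats2}.

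The main obstacle is bookkeeping rather than any conceptual difficulty. We have to make sure that every axiom in the cited definitions is expressed purely with maps whose images under $U$ are known. We also have to handle the subtlety that permutation coherence in $\DD$ requires the objects to be treated as formal words: if $UA$ and $UB$ happen to coincide, a symmetry $\swap_{UA,UA}$ need not be the identity. To deal with this, we will compare both sides as instances of a single formal diagram in the free strict compact closed category on the relevant variables, and then interpret that diagram in $\DD$, rather than comparing raw morphisms in $\DD$.
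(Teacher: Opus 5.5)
Your proposal is correct and follows essentially the same route as the paper. Faithfulness of $U$ reduces each axiom of a \bvcat to its image in the strict compact closed category $\DD$. There $U\kappa=\id$ and $Uq=Uq'=\id\boxtimes\swap\boxtimes\id$, so each diagram becomes an equality of permutations that the coherence theorem settles; the paper carries this out only for diagram \eqref{eq:assoc-w-seq} and leaves the rest as similar. Your extra step of comparing formal diagrams rather than raw morphisms of $\DD$, so that accidental coincidences such as $UA=UB$ cause no trouble, is a sound refinement that the paper leaves implicit.
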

\begin{proof}
  (Sketch.)
  The proof follows easily from the fact that $U$ is faithful and strictly
  preserves all the relevant data which allows us to reduce the commutativity
  of all the relevant diagrams in $\CC$ to their commutativity in $\DD$ under
  $U$. The latter is very easy to verify using the string diagram language (or
  coherence theorem) for strict compact closed categories.

  To illustrate this, let us show why diagram  \eqref{eq:assoc-w-seq} from Appendix \ref{app:BVcat} commutes in $\CC.$
  Since $U$ is faithful and strictly preserves the two monoidal structures $\ltens$ and $\lseq$, it follows that \eqref{eq:assoc-w-seq}, i.e.,
  \[
    \begin{tikzcd}
      ((A\lseq B) \otimes (C \lseq D)) \otimes (E \lseq F) \arrow[r, "\alpha"] \arrow[d, "q' \otimes \mathrm{id}"']             & (A\lseq B) \otimes ((C \lseq D) \otimes (E \lseq F)) \arrow[d, "\mathrm{id} \otimes q'"]      \\
      ((A \otimes C) \lseq (B\otimes D)) \otimes (E \lseq F) \arrow[d, "q'"']                                 & (A\lseq B) \otimes ((C \otimes E) \lseq (D \otimes F)) \arrow[d, "q'"] \\
      ((A \otimes C) \otimes E) \lseq ((B \otimes D) \otimes F)  \arrow[r, "\alpha \lseq \alpha"'] & (A \otimes (C \otimes E)) \lseq (B \otimes (D \otimes F))
      \end{tikzcd}
    \]
    commutes in $\CC$, iff the diagram
  \[
    \begin{tikzcd}
      UA\boxtimes UB \boxtimes UC \boxtimes UD \boxtimes UE \boxtimes UF \arrow[r, "\id"] \arrow[d, "\id \boxtimes \sigma \boxtimes \id \boxtimes \mathrm{id}"']             & UA\boxtimes UB \boxtimes UC \boxtimes UD \boxtimes UE \boxtimes UF \arrow[d, "\mathrm{id} \boxtimes \id \boxtimes \sigma \boxtimes \id"]      \\
      UA \boxtimes UC \boxtimes UB\boxtimes UD \boxtimes UE \boxtimes UF \arrow[d, "\id \boxtimes \sigma \boxtimes \id"']                                 & UA\boxtimes UB \boxtimes UC \boxtimes UE \boxtimes UD \boxtimes UF \arrow[d, "\id \boxtimes \sigma \boxtimes \id"] \\
      UA \boxtimes UC \boxtimes UE \boxtimes UB \boxtimes UD \boxtimes UF  \arrow[r, "\id"'] & UA \boxtimes UC \boxtimes UE \boxtimes UB \boxtimes UD \boxtimes UF
      \end{tikzcd}
    \]
    commutes in $\DD$ (which we recall is \emph{strict} compact closed) and this follows immediately from the coherence theorem of symmetric monoidal categories.
    
    The remaining diagrams can be checked in a similar way.
\end{proof}

Note that the definition of strong \bvcat requires checking considerably fewer coherence
diagrams compared to that of \bvcat (see Appendix \ref{app:BVcat}). This is largely thanks
to condition \ref{it:coherence}, which is indeed a strong assumption.

\section{Denotational Semantics}\label{sec:semantics}

We now describe the denotational semantics of $\BV$ in an arbitrary, but fixed,
strong \bvcat with $U\colon\CC \hookrightarrow \DD$ the forgetful
inclusion. Because of Definition~\ref{def:bv-model}.\ref{it:units}, we may
choose $\lunit$ to be the monoidal unit for the $\lpar$ monoidal structure as
well and then we can define $\lneg{(\cdot)} = (\cdot)\limp \vlun$.
We do so in the sequel for simplicity. A $\BV$-formula $A$ is interpreted as an object $\sem{A} \in \mathrm{Ob}(\CC)$ as follows:
\begin{equation}\label{eq:semForms}
  \hfill
  \sem{\lunit} = \lunit
\qquad
  \sem{a} = c_a
\qquad
  \sem{\lneg a} = \lneg{\sem{a}}
\qquad
  \sem{A \odot B} = \sem A \odot \sem B
  \hfill
\end{equation}
where $c_a$ is some chosen object of $\CC$, and $\odot \in \set{ \lpar, \lseq, \ltens }$.
A $\BV$-formula $A$ also admits an interpretation $\afsem{A} \in \mathrm{Ob}(\DD)$ which can be defined by interpreting all tensors as $\boxtimes$:
\begin{equation}\label{eq:afsemForms}
  \hfill
  \afsem{\lunit} = J
\qquad
  \afsem{a} = Uc_a
\qquad
  \afsem{\lneg a} = \lneg{\afsem{a}}
\qquad
  \afsem{A \odot B} =\afsem A \boxtimes \afsem B
  \hfill
\end{equation}
where again $\odot \in \set{ \lpar, \lseq, \ltens }$.
It now immediately follows that $U \sem{A} = \afsem{A},$ because the functor $U$ strictly preserves all this data.
The interpretation of formula equivalence $A \feq B$ is given by a (symmetric) monoidal natural isomorphism $\sem{A \feq B} \colon \sem A \cong \sem B$ in $\CC$
defined in the obvious way. Likewise, formula equivalence can be interpreted in $\DD$ as a symmetric monoidal natural isomorphism $\afsem{A \feq B} \colon \afsem A \cong \afsem B$ in $\DD$. Note that
in $\DD$, all the isomorphisms are identities, except for the two symmetries of $\lpar$ and $\ltens.$

\begin{definition}
  \label{def:bv-semantics}
  Let $\dD=\nosmash{\odv{A}{\dD}{B}{\SBV}}$ be a derivation in $\SBV$ with premise $A$ and conclusion $B$.
  The interpretation of $\dD$ in $\CC$ is the morphism $\sem{\dD} \colon \sem{A} \to \sem{B}$ defined as follows:
  $$
  \sem{\Phi \odot \Psi} \coloneqq \sem{\Phi} \odot \sem{\Psi}
  \qquad 
  \sem{\vldownsmash{\odn{\Phi}{\rR}{\Psi}{}}} \coloneqq \sem{\Psi} \circ \sem{\rR} \circ \sem{\Phi}
  \qquad
  \sem{\nodt{\;\;A\;\;}{\feq}{B}{}} \coloneqq \sem A \xrightarrow{\sem{A \feq B}} \sem B
  $$
  where  $\sem{\rR}$ is defined as follows
      \begin{equation}
    \adjustbox{max width=.94
      \textwidth}{$\hskip-.8em
    \begin{array}{l}
      \sem{\nodn{\lunit}{\aidr}{a\lpar \lneg a}{}} 
    \;\coloneqq\;
      \lunit 
    \xrightarrow{\Lambda(\mathrm{ev} \circ \swap \circ l)} 
      \lneg{\Big(\lneg{\sem a} \otimes \sem a^{\perp\perp}\Big)}  
      \quad\qquad 
      \left(\text{where $\lneg{\Big(\lneg{\sem a} \otimes \sem a^{\perp\perp}\Big)} = \sem{a\lpar \lneg a} $ by definition}\right)
    \\[15pt]
      \sem{\nodn{a\ltens \lneg a}{\aiur}{\lunit}{}} 
    \;\coloneqq\;
      \Big(\sem a \otimes \lneg{\sem a} \Big)
    \xrightarrow{\swap} 
      \Big(\lneg{\sem a} \otimes \sem a \Big)
    \xrightarrow{\mathrm{ev}} 
      \lunit 
    \\[15pt]
      \sem{{\nodn{A\ltens (B\lpar C)}{\swir}{(A\ltens B)\lpar C}{}}} 
    \;\coloneqq\;
      \Big(\sem A\ltens (\sem B\lpar \sem C) \Big)
    \xrightarrow{\Lambda(\mathrm{ev} \circ (\id \otimes \mathrm{ev}) \circ \cong)}
      \Big(\lneg{(\sem A \ltens \sem B)} \multimap \sem C \Big)
    \xrightarrow{\cong} 
      \Big((\sem A\ltens \sem B)\lpar \sem C\Big)
    \\[15pt]
      \sem{\nodn{(A\lpar B)\lseq (C\lpar D)}{\qdr}{(A\lseq C)\lpar(B\lseq D)}{}} 
    \;\coloneqq\;
      \Big((\sem A\lpar \sem B)\lseq (\sem C\lpar \sem D) \Big)
    \xrightarrow{q} 
      \Big((\sem A\lseq \sem C)\lpar(\sem B\lseq \sem D)\Big)
    \\[15pt]
      \sem{\nodn{(A\lseq B)\ltens(C\lseq D)}{\qur}{(A\ltens C)\lseq(B\ltens D)}{}}
    \;\coloneqq\;
      \Big((\sem A\lseq \sem B)\ltens( \sem C\lseq \sem D) \Big)
    \xrightarrow{q'} 
      \Big((\sem A \ltens \sem C) \lseq ( \sem B \ltens \sem D) \Big) 
    \end{array}$}
  \end{equation}
  where $\odot \in \set{ \lpar, \lseq, \ltens }$;
  $\Lambda_{ABC} \colon \CC(A \otimes B, C) \cong \CC(A, B \multimap C)$ is the usual natural bijection;
  $\mathrm{ev}_{AB} \colon (A \multimap B) \otimes A \to B$ is the evaluation morphism in $\CC$;
  $l_A \colon \lunit \otimes A \cong A$ is the left unitor;
  the left unnamed isomorphism in the $\swir$ rule has type $(\lneg{\sem B} \multimap \sem C) \ltens (\sem A \multimap \lneg{\sem B}) \ltens \sem A \cong \sem A \ltens (\sem B \lpar \sem C) \ltens
  \lneg{(\sem A \ltens \sem B)}$ 
  and it is constructed using the obvious isomorphisms from the $*$-autonomous structure.

  The interpretation of $\dD$ in $\DD$ is the morphism $\afsem{\Phi} \colon \afsem A \to \afsem{B}$ defined as follows:
  \begin{equation}
    \begin{array}{l}
      \afsem{\Phi \odot \Psi} 
    \coloneqq
      \afsem{\Phi} \boxtimes \afsem{\Psi}
    \qquad
      \Afsem{\vldownsmash{\odn{\Phi}{\rR}{\Psi}{}}} 
    \coloneqq
      \afsem{\Psi} \circ \afsem{r} \circ \afsem{\Phi}
    \qquad
      \Afsem{\nodt{\;\;A\;\;}{\feq}{B}{}} 
    \coloneqq
      \afsem A \xrightarrow{\afsem{A \feq B}} \afsem B
    \\[15pt]
      \Afsem{\nodn{\lunit}{\aidr}{a\lpar \lneg a}{}}
    \coloneqq
      \ccunit \xrightarrow{\eta_{\afsem a^\perp}} \afsem{a} \boxtimes \lneg{\afsem a} 
    \qquad
      \Afsem{\nodn{a\ltens \lneg a}{\aiur}{\lunit}{}} 
    \coloneqq
      \afsem a \boxtimes \lneg{\afsem a} \xrightarrow{\epsilon_{\afsem a}} \ccunit
    \\[15pt]
      \Afsem{\nodn{A\ltens (B\lpar C)}{\swir}{(A\ltens B)\lpar C}{}} 
    \coloneqq
      \afsem A \boxtimes \afsem B \boxtimes \afsem C \xrightarrow{\id} \afsem A \boxtimes \afsem B \boxtimes \afsem C
    \\[15pt]
      \Afsem{\nodn{(A\lpar B)\lseq (C\lpar D)}{\qdr}{(A\lseq C)\lpar(B\lseq D)}{}} 
    \coloneqq
      \afsem A \boxtimes \afsem B \boxtimes \afsem C \boxtimes \afsem D 
    \xrightarrow{\id \boxtimes \swap \boxtimes \id} 
      \Afsem A \boxtimes \afsem C \boxtimes \afsem B \boxtimes \afsem D
    \\[15pt]
      \Afsem{\nodn{(A\lseq B)\ltens(C\lseq D)}{\qur}{(A\ltens C)\lseq(B\ltens D)}{}} 
    \coloneqq
      \afsem A \boxtimes \afsem B \boxtimes \afsem C \boxtimes \afsem D 
    \xrightarrow{\id \boxtimes \swap \boxtimes \id} 
      \afsem A \boxtimes \afsem C \boxtimes \afsem B \boxtimes \afsem D
    \\[15pt]
    \end{array}
  \end{equation}
  where $\odot \in \set{ \lpar, \lseq, \ltens }$, and where $\eta_A \colon \ccunit \to A^\perp \boxtimes A$ and $\epsilon_A \colon A \boxtimes A^\perp \to \ccunit$ are the unit and counit of the compact closed structure.
\end{definition}

\begin{remark}\label{rem:af-semantics}
  The definition of $\afsem{\dD}$  can be seen as directly interpreting the atomic flow of $\dD$ as a string diagram representing the morphism in a strict compact closed category $\DD$,
  where
  $\begin{AFz}{\naf{0,1}{\nafhdx{}{}{}{}{.5}}}\end{AFz}$ represents the morphism $\eta$ and
  $\begin{AFz}{\naf{0,0}{\nafhux{}{}{}{}{.5}}}\end{AFz}$ represents the morphism $\epsilon$, which are usually drawn as cups $\begin{AFz}{\naf{0,0}{\nafcapshort1}}\end{AFz}$  and caps $\begin{AFz}{\naf{0,1}{\nafcupshort1}}\end{AFz}$, respectively, in string diagrams (see, e.g., \cite{string:diagrams:RTII}).
\end{remark}

The relationship between the two semantic interpretations is given by the following lemma.
\begin{lemma}\label{lem:interpretations}
  For any 
  $\SBV$-derivation
  $\dD=\vldownsmash{\odv{A}{\dD}{B}{\SBV}}$, the following diagram commutes in $\DD$.
  \begin{equation}\label{eq:semantic-lemma}
    \hfill
    \begin{tikzcd}
      U\sem A \arrow[d, "U\sem{\Phi}"'] \arrow[equal, r] 
    & 
      \afsem{A} \arrow[d, "\afsem{\Phi}"]
    \\
      U\sem{B} \arrow[equal, r] 
    & 
      \afsem{B}
    \end{tikzcd}
    \hfill
  \end{equation}
\end{lemma}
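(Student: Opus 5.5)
We argue by structural induction on the derivation $\dD$, following the three clauses of Definition~\ref{def:bv-semantics}. Throughout we use that $U$ is a strict $*$-autonomous functor, strict monoidal for $\vlse$, so $U\sem{A}=\afsem{A}$ for every formula $A$; this makes the horizontal equalities in \eqref{eq:semantic-lemma} meaningful. For a formula $A$ (identity derivation) both sides are identities. For $\dD\odot\dDb$ with $\odot\in\set{\vlpa,\vlse,\vlte}$, strictness of $U$ gives $U(\sem{\dD}\odot\sem{\dDb})=U\sem{\dD}\boxtimes U\sem{\dDb}$, and the induction hypothesis closes the case. For $\vldownsmash{\odn{\dD}{\rR}{\dDb}{}}$, functoriality of $U$ gives $U\sem{\dDb}\circ U\sem{\rR}\circ U\sem{\dD}$, so by induction it suffices to check each rule instance separately.

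Next we handle the structural rules. For $\feq$, $\sem{A\feq B}$ is built from associators, unitors and symmetries of $\ltens$, $\lpar$ and $\lseq$. A strict monoidal functor sends associators and unitors to identities, and since $U$ is strict symmetric monoidal for $\ltens$, it sends the $\ltens$-symmetry to $\swap$. The $\lpar$-symmetry is defined by duality, as $\lneg{(\swap_{\lneg A,\lneg B})}$ conjugated by double-negation isomorphisms. Because $U$ strictly preserves negation and $\DD$ is strict (De Morgan and double dual are identities), its image is again $\swap$ in $\DD$. This matches $\afsem{A\feq B}$. For $\qD$, the required equation is exactly condition \eqref{eq:q-coherence}. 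For $\qU$, $q'$ is $\lneg q$ conjugated by De Morgan isomorphisms, double-negation isomorphisms and $\kappa$. Under $U$ the isomorphisms become identities (using $U\kappa=\id$), and $\lneg{(\id\boxtimes\swap\boxtimes\id)}=\id\boxtimes\swap\boxtimes\id$ in a strict compact closed category.

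The remaining rules are the atomic ones and switch, which involve the closed structure. For these we use that $U$ strictly preserves $\multimap$, $\Lambda$, $\mathrm{ev}$ and the unit. In $\DD$ we have $X\multimap Y=\lneg X\boxtimes Y$, with $\mathrm{ev}$ given by a counit on the $X$-wire and $\Lambda$ given by bending with a unit $\eta$. For $\aiU$, $U(\mathrm{ev}\circ\swap)$ becomes $\epsilon$ precomposed with the symmetry, then with the symmetry again; the symmetries cancel, giving $\epsilon_{\afsem a}$. For $\aiD$, $U\Lambda(\mathrm{ev}\circ\swap\circ l)$ is a cup followed by a cap–cup pattern. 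By yanking (the snake equations) in $\DD$ this reduces to $\eta_{\afsem a^\perp}$. For $\swir$, every component is a canonical $*$-autonomous morphism, so its image in $\DD$ is a string diagram with only wires, cups and caps. Its underlying connectivity connects $A$, $B$ and $C$ straight through, so by the coherence theorem for compact closed categories \cite{kelly:laplaza:80} it equals $\id$.

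\textbf{Main obstacle.} The step I expect to be hardest is $\swir$, and secondarily $\aiD$. Here we must justify carefully that the unnamed canonical isomorphism and $\Lambda$ map to the expected string diagrams under a strict $*$-autonomous functor into a compact closed category. The concern is that no loops or spurious permutations appear, so that Kelly–Laplaza coherence (which has no loop-free caveat in these cases) applies.
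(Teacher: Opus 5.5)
Your proposal is correct and follows the same route as the paper. Both argue by induction on $\dD$ down to single rule instances: $\qD$ comes directly from \eqref{eq:q-coherence}; $\qU$ comes from dualising $q$, since $U$ sends the conjugating isomorphisms (including $\kappa$) to identities and $(\id\boxtimes\swap\boxtimes\id)^\perp=\id\boxtimes\swap\boxtimes\id$; and $\aiU$ comes from cancelling the two symmetries in $\epsilon\circ\swap\circ\swap$. The paper only calls the $\feq$, $\aiD$ and $\swir$ cases ``straightforward'', and your use of the snake equations and Kelly--Laplaza coherence is a reasonable way to fill in those details.
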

\begin{proof}
  This follows by induction on $\Phi$. The two most interesting cases are $\qdr$ and $\qur$. The case for $\qdr$ follows immediately from the definition of a strong \bvcat.
  The case for $\qur$ follows easily, because $U \sem{\qur} = U(\cong) \circ U(\sem{q}^\perp) \circ U(\cong) = \id \circ (\id \boxtimes \swap^\perp \boxtimes \id) \circ \id = \id \boxtimes \swap \boxtimes \id
  = \afsem{\qur}$
  where we used the fact that $U$ strictly preserves all the relevant structure and maps the two unnamed isomorphisms to identities in $\DD$.
  The remaining cases are straightforward and follow easily from standard results about $*$-autonomous and compact closed categories. More details can be found in Appendix~\ref{app:lem21}.
\end{proof}

\begin{proposition}
  \label{prop:yanking-semantics}
  $\Phi$ and $\Phi'$ be two $\SBV$-derivations
  such that $\AFof{\Phi'} = \Yof{\AFof{\Phi}}.$
  Then $\afsem{\Phi} = \afsem{\Phi'}.$
\end{proposition}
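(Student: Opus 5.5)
The plan is to factor $\afsem{\cdot}$ through atomic flows. I will show that $\afsem{\Phi}$ depends only on $\AFof{\Phi}$, and that this dependence is invariant under yanking.

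The first step is to define an interpretation $\llbracket\phi\rrbracket_\DD$ of every atomic flow $\phi\colon p\to q$ as a morphism of $\DD$. On strings, $\emptystring$ goes to $\ccunit$, a literal $a$ to $Uc_a$, $\lneg a$ to $\lneg{(Uc_a)}$, and concatenation to $\boxtimes$. On generators, the identity goes to $\id$, the crossing to $\swap$, the cap/cup $\lunit\to a\lneg a$ to $\eta$, and the cup $a\lneg a\to\lunit$ to $\epsilon$, with $\lneg{\lneg{}}$ strictly the identity. Vertical composition goes to $\circ$ and horizontal composition to $\boxtimes$. Well-definedness on the quotient has to be checked against each generating equation:
\begin{itemize}
\item the interchange law is functoriality of $\boxtimes$;
\item the double crossing being the identity, and the Yang--Baxter equation, are the symmetric monoidal axioms;
\item the equations moving a cap or cup through a crossing (the two remaining columns of \Cref{fig:af}) are naturality of $\swap$ combined with the standard fact that in a compact closed category $\swap\circ\eta_{X}$ is the unit for the dual pairing, and dually for $\epsilon$.
\end{itemize}
In the strict setting these are all instances of the coherence theorem for compact closed categories~\cite{kelly:laplaza:80}.

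The second step is to prove, by induction on the derivation, that $\afsem{\Phi}=\llbracket\AFof{\Phi}\rrbracket_\DD$. The cases are:
\begin{itemize}
\item for formulas and for the connectives $\odot$, both sides are identities and $\boxtimes$ respectively, so they agree;
\item for rule composition, both sides are composites;
\item for $\aiD$, $\aiU$, $\swir$, $\qD$ and $\qU$, one compares Definition~\ref{def:bv-semantics} with \eqref{eq:basicflows} and finds literally $\eta$, $\epsilon$, $\id$ and $\id\boxtimes\swap\boxtimes\id$;
\item for $\feq$, $\afsem{A\feq B}$ is an identity except for commutativity, where it is the symmetry between $\afsem A$ and $\afsem B$; this equals the interpretation of the block crossing abbreviation by coherence of symmetric monoidal categories.
\end{itemize}
This step is essentially Remark~\ref{rem:af-semantics} made precise.

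The third step is to show that a single step $\phi\yanksto\psi$ gives $\llbracket\phi\rrbracket_\DD=\llbracket\psi\rrbracket_\DD$. Yanking is a rewrite applied inside a context built by $\vcomp$ and $\hcomp$, and the interpretation is compositional, so it suffices to treat the two redexes \eqref{eq:yanking}. These become $(\id\boxtimes\epsilon)\circ(\eta\boxtimes\id)=\id$ and its mirror, which are the triangle (snake) identities of compact closed categories, up to the strict identifications.

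The conclusion then follows by iterating the third step along the finite sequence $\AFof{\Phi}\yanksto\cdots\yanksto\Yof{\AFof{\Phi}}=\AFof{\Phi'}$ and using the second step at both ends: $\afsem{\Phi}=\llbracket\AFof{\Phi}\rrbracket_\DD=\llbracket\AFof{\Phi'}\rrbracket_\DD=\afsem{\Phi'}$.

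The main obstacle I expect is the bookkeeping of orientations. With $\eta_A\colon\ccunit\to A^\perp\boxtimes A$, the cap labelled $a\,\lneg a$ is really $\eta_{\lneg a}$, and the snake identity has to be checked with these exact conventions, possibly inserting a $\swap$. I would handle this by invoking the coherence theorem for strict compact closed categories: any two diagrams built from the same generators with the same underlying connectivity (the same pairing of endpoints) denote equal morphisms. Both yanking redexes have the connectivity of a single identity wire, so they are equal to it. The same argument also discharges the well-definedness checks in the first step uniformly.
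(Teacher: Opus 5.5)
Your proposal is correct and follows the paper's route. The paper's proof just invokes Remark~\ref{rem:af-semantics}, namely that $\afsem{\Phi}$ is the string-diagram reading of $\AFof{\Phi}$ in the strict compact closed category $\DD$, together with invariance of that reading under yanking; your three steps (well-definedness on the quotient, $\afsem{\Phi}$ equals the interpretation of $\AFof{\Phi}$, and the snake identities for the two redexes in \eqref{eq:yanking}) make exactly these facts explicit, and with the paper's conventions $\eta_A\colon\ccunit\to A^\perp\boxtimes A$, $\epsilon_A\colon A\boxtimes A^\perp\to\ccunit$ and $A^{\perp\perp}=A$ the two redexes are precisely the two triangle identities, with no extra $\swap$ needed.
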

\begin{proof}
  This follows easily from \cref{rem:af-semantics}, because the semantic interpretation in the strict compact closed category $\DD$ is sound (i.e., invariant) with respect to yanking.
\end{proof}

\begin{theorem}[Soundness]\label{thm:soundness}
  Let $\Phi$ and $\Phi'$ be derivations with $\Phi \CEto \Phi'$. Then $\sem{\Phi} = \sem{\Phi'}$.
\end{theorem}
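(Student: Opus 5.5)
The argument transfers the problem from $\CC$ to $\DD$ and then uses the faithfulness of $U$. Let $\Phi$ have premise $A$ and conclusion $B$. Since $\Phi\CEto\Phi'$ only eliminates up-rules and leaves premise and conclusion unchanged, $\Phi'$ also has premise $A$ and conclusion $B$. Hence $\sem{\Phi}$ and $\sem{\Phi'}$ are parallel morphisms $\sem A\to\sem B$ in $\CC$. Because $U$ is faithful, it suffices to prove $U\sem{\Phi}=U\sem{\Phi'}$ in $\DD$.

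\textbf{Step 1 (pass to $\DD$).} Apply \Cref{lem:interpretations} to both $\Phi$ and $\Phi'$. Using the equalities $U\sem A=\afsem A$ and $U\sem B=\afsem B$, we get
$$U\sem{\Phi}=\afsem{\Phi}\qquad\text{and}\qquad U\sem{\Phi'}=\afsem{\Phi'}.$$
The goal therefore reduces to showing $\afsem{\Phi}=\afsem{\Phi'}$.

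\textbf{Step 2 (use cut elimination is yanking).} By \Cref{thm:YAF}, the hypothesis $\Phi\CEto\Phi'$ gives $\AFof{\Phi'}=\Yof{\AFof{\Phi}}$. This is exactly the hypothesis of \Cref{prop:yanking-semantics}, which yields $\afsem{\Phi}=\afsem{\Phi'}$.

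\textbf{Step 3 (conclude).} Combining the two steps gives $U\sem{\Phi}=U\sem{\Phi'}$, and faithfulness of $U$ then gives $\sem{\Phi}=\sem{\Phi'}$.

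\textbf{Main obstacle.} The load-bearing step is \Cref{prop:yanking-semantics}, and specifically \Cref{rem:af-semantics}: $\afsem{\cdot}$ must depend only on the atomic flow. I would make this precise by checking three things:
\begin{itemize}
\item the generators of \Cref{fig:af} are mapped to $\id$, the symmetry $\swap$, $\eta$ and $\epsilon$;
\item every equation of \Cref{fig:af} holds in a strict symmetric monoidal category with duals, namely naturality of $\swap$ with respect to cups and caps, and the hexagon/Yang--Baxter equation for $\swap$;
\item each yanking rule \eqref{eq:yanking} is one of the snake (triangle) equations for $\eta,\epsilon$.
\end{itemize}
One subtlety is that \Cref{prop:yanking-semantics} as stated presupposes that $\afsem{\Phi}$ factors through $\AFof{\Phi}$. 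This needs an induction on derivations showing that the rule-by-rule assignment in \Cref{def:bv-semantics} agrees with interpreting \eqref{eq:basicflows} as string diagrams. For example, $\afsem{\feq}$ for commutativity is $\swap$ on blocks, matching the block crossing abbreviation of \Cref{fig:afAbb}. Once this is in place, the theorem follows by the three steps above.
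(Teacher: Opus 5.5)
Your proposal is correct and follows essentially the same route as the paper: faithfulness of $U$ reduces the claim to $U\sem{\Phi}=U\sem{\Phi'}$, \Cref{lem:interpretations} turns this into $\afsem{\Phi}=\afsem{\Phi'}$, and \Cref{thm:YAF} together with \Cref{prop:yanking-semantics} concludes. Your extra discussion of why $\afsem{\cdot}$ depends only on the atomic flow spells out what the paper leaves to \Cref{rem:af-semantics}, so it elaborates the same argument rather than changing it.
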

\begin{proof}
  Since $U$ is a faithful functor, it suffices to prove that $U\sem{\Phi} = U\sem{\Phi'}.$
  By Lemma \ref{lem:interpretations}, it suffices to prove that $\afsem{\Phi} = \afsem{\Phi'}$.
  Using Theorem \ref{thm:YAF}, we know that $\AFof{\dDp}=\Yof{\AFof{\dD}}$ and Proposition \ref{prop:yanking-semantics}
  concludes the proof.
\end{proof}

\begin{corollary}\label{cor:proof-equivalence}
  If $\Phi \preq \Psi$, then $\sem{\Phi} = \sem{\Psi},$ i.e., the semantic interpretation is invariant with respect to proof equivalence.
\end{corollary}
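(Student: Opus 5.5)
The corollary follows by unfolding the definition of $\preq$ and chaining soundness with the atomic-flow interpretation in $\DD$. Suppose $\Phi \preq \Psi$. By definition, there are derivations $\Phi'$ and $\Psi'$ with $\Phi \CEto \Phi'$, $\Psi \CEto \Psi'$ and $\AFof{\Phi'} = \AFof{\Psi'}$. By \Cref{thm:soundness}, applied twice, we get $\sem{\Phi} = \sem{\Phi'}$ and $\sem{\Psi} = \sem{\Psi'}$. It therefore remains to show $\sem{\Phi'} = \sem{\Psi'}$, i.e.\ that two $\BV$-derivations with the same atomic flow have the same interpretation in $\CC$.

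For this last step we use faithfulness of $U$: it suffices to show $U\sem{\Phi'} = U\sem{\Psi'}$. By \Cref{lem:interpretations}, this reduces to $\afsem{\Phi'} = \afsem{\Psi'}$, since $U\sem{A} = \afsem{A}$ strictly on both premise and conclusion. By \Cref{rem:af-semantics}, $\afsem{\cdot}$ is obtained by reading the atomic flow as a string diagram in the strict compact closed category $\DD$. The key point is that $\afsem{\Phi'}$ depends only on $\AFof{\Phi'}$. Indeed, the identifications imposed on atomic flows (the interchange law and the equations on the right of \Cref{fig:af}: naturality of crossings and caps/cups with respect to symmetry, involutivity of the symmetry, and the braid relation) all hold in any strict symmetric monoidal category with chosen duals. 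So $\afsem{\cdot}$ factors through the quotient defining atomic flows, and equal flows give equal morphisms in $\DD$.

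The only delicate point is checking that this factorization is well defined. One must verify that the inductive clauses of $\AFof{\cdot}$ and of $\afsem{\cdot}$ match generator by generator: identities for $\swir$ and $\feq$ up to symmetries, $\id\boxtimes\swap\boxtimes\id$ for $\qD$ and $\qU$, and $\eta,\epsilon$ for $\aiD$ and $\aiU$. One must also check that the symmetry equation involving a cap labelled $\lneg a$ versus $a$ agrees with the compact closed convention $\eta_{A}\colon \ccunit\to A^\perp\boxtimes A$ and the strict identity $A^{\perp\perp}=A$. This is exactly the content already used in \Cref{prop:yanking-semantics} (there with the additional yanking equations). Once that match is granted, the chain $\sem{\Phi}=\sem{\Phi'}=\sem{\Psi'}=\sem{\Psi}$ concludes the proof.
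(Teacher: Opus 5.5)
Your proof is correct and follows essentially the same route as the paper. You reduce to the normalized derivations $\Phi'$ and $\Psi'$ via \Cref{thm:soundness}, then use faithfulness of $U$ and \Cref{lem:interpretations} to move to $\DD$, and conclude because $\afsem{\cdot}$ depends only on the atomic flow. The paper closes that last step by citing \Cref{prop:yanking-semantics}, using that $\AFof{\Phi'}$ has no yanking redex, so $\Yof{\AFof{\Phi'}}=\AFof{\Phi'}=\AFof{\Psi'}$. You instead spell out the well-definedness of $\afsem{\cdot}$ on the quotient of atomic flows, including the double-dual compatibility for the label-swapping equations, which that proposition relies on implicitly.
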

\begin{proof}
  By the previous theorem, it suffices to prove this for normalized proofs. Since $U$ is faithful, it suffices to prove
  $U\sem{\Phi} = \afsem{\Phi} = \afsem{\Psi} = U \sem{\Psi}.$ This now follows trivially from Proposition \ref{prop:yanking-semantics}.
\end{proof}

\section{Concrete Models of BV}\label{sec:concrete}

In this section we describe several categories that have the structure of a
strong \bvcat, thus providing concrete models of the logic $\BV$. All the
concrete models that we consider are based on existing categories that have
already been studied in prior work. Because of the strictness requirement of
Definition \ref{def:bv-model}, a common pattern in three of the models that we
consider is that we take the \emph{skeleton} subcategory of a category that has
already been considered in the literature. This allows us to easily satisfy the
strictness requirement. However, we conjecture
that the strictness requirement is not necessary -- see Section
\ref{sec:conclusion} for more discussion related to this.

The most obvious, but rather degenerate, class of models is given by strict
compact closed categories (Subsection \ref{sub:strict}) where one can interpret
all three multiplicative connectives of $\BV$ using the same monoidal tensor. The
most mathematically interesting and natural example is based on
finite-dimensional operator spaces (Subsection \ref{subsec:OOS}), with strong
links to quantum theory, where mathematicians have identified three different
operator space tensor products that allow us to interpret $\BV$. Our two remaining
examples are based on gluing and orthogonality \cite{HS}: the Caus[-]
construction on a category of completely positive maps (Subsection
\ref{sub:caus}) and finite-dimensional probabilistic coherence spaces
(Subsection \ref{sub:pcs}). The former has strong links to quantum
theory whereas the latter is relevant to classical probabilistic computation.

\subsection{Strict Compact Closed Categories}
\label{sub:strict}
Every strict compact closed category can be seen as a strong \bvcat by taking $\CC = \DD$ and $U$ to be the identity functor.

A relevant example is the \emph{skeleton subcategory of finite-dimensional vector spaces} $\FdVectsk$ whose 
objects are vector spaces of the form $\mathbb C^n$ and 
morphisms $\FdVectsk(\mathbb C^n , \mathbb C^m)$ are $m\times n$ complex matrices, 
where the multiplication of matrices serves as composition of morphisms, 
the tensor product $\boxtimes$ is the Kronecker product of matrices (with unit $\mathbb C$), and 
where duality is defined as transposition of matrices, denoted $\transpose$.
Note that $\FdVectsk \simeq \FdVect$, because every $m \times n$ matrix
determines a linear map $\mathbb C^n \to \mathbb C^m$ (with respect to the
standard basis), so we may view the morphisms as linear maps as well.

\begin{theorem}
  The tuple $\tuple{\FdVectsk,\boxtimes,\boxtimes,\mathbb C,\transpose,\catid}$ is a strong \bvcat.
\end{theorem}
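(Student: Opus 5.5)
The plan is to reduce the statement to the remark made at the start of \Cref{sub:strict}: any strict compact closed category $\DD$ is a strong \bvcat via $\CC=\DD$, $\vlse=\vlte=\boxtimes$, and $U=\catid$. So there are two steps. First, show that $\FdVectsk$ with Kronecker product, unit $\mathbb C$ and transposition is strict compact closed in the sense fixed before \Cref{def:bv-model}. Second, check that the identity functor satisfies all five clauses of \Cref{def:bv-model}.

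For the first step we check strictness directly on matrices.
\begin{itemize}
\item The Kronecker product is associative on the nose: $\mathbb C^n\boxtimes(\mathbb C^m\boxtimes\mathbb C^k)=\mathbb C^{nmk}=(\mathbb C^n\boxtimes\mathbb C^m)\boxtimes\mathbb C^k$, and $(M\boxtimes N)\boxtimes P=M\boxtimes(N\boxtimes P)$ as matrices.
\item $\mathbb C=\mathbb C^1$ is a strict unit, since $1\times1$ identity Kronecker products change nothing.
\item The symmetry $\sigma_{n,m}$ is the commutation (perfect shuffle) permutation matrix. It is natural and satisfies the hexagon and $\sigma\sigma=\id$, by the standard index computation.
\item On objects, duality is $(\mathbb C^n)^\perp=\mathbb C^n$, and on morphisms it is transposition. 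Hence $(A\boxtimes B)^\perp=A^\perp\boxtimes B^\perp$ holds on objects, and on morphisms it follows from $(M\boxtimes N)^{\mathrm t}=M^{\mathrm t}\boxtimes N^{\mathrm t}$. Also $A^{\perp\perp}=A$ and $M^{\mathrm{tt}}=M$.
\item For the compact structure, take $\eta_n\colon\mathbb C\to\mathbb C^n\boxtimes\mathbb C^n$ to be the vector $\sum_i e_i\otimes e_i$, and take $\epsilon_n=\eta_n^{\mathrm t}$.
\end{itemize}
The snake equations $(\epsilon\boxtimes\id)(\id\boxtimes\eta)=\id$ reduce to $\sum_i e_i e_i^{\mathrm t}=I_n$.

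One coherence point needs care. The contravariant duality functor on morphisms must agree with the mate $f\mapsto(\epsilon\boxtimes\id)(\id\boxtimes f\boxtimes\id)(\id\boxtimes\eta)$ up to the symmetric conventions. A direct entry computation shows this mate equals the transpose, so the chosen $\transpose$ is indeed the dual coming from the compact structure. I expect this to be the main obstacle, together with checking that all the structural isomorphisms (De Morgan, double dual, and the unit isomorphism $\lneg\vlun\cong\vlun$) are literally identities rather than merely canonical isomorphisms.

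For the second step, let $\CC=\DD=\FdVectsk$ and interpret both $\vlte$ and $\vlse$ as $\boxtimes$, so that $\vlpa=\boxtimes$ too. The clauses then go as follows.
\begin{itemize}
\item Clause~\ref{it:units}: a compact closed category is $*$-autonomous, and the isomix map $\mathbb C^\perp=\mathbb C$ is the identity.
\item Clause 2: $\tuple{\FdVectsk,\boxtimes,\mathbb C}$ is monoidal.
\item Clause 3: set $\kappa_{AB}=\id$, which is valid by strict De Morgan.
\item Clause 4: set $q_{ABCD}=\id\boxtimes\sigma\boxtimes\id$, which is natural because $\sigma$ is.
\item Clause~\ref{it:coherence}: the identity functor is faithful and strict for all the structure, it maps $\kappa$ to an identity, and diagram~\eqref{eq:q-coherence} commutes by the very definition of $q$.
\end{itemize}
This completes the plan, so the theorem follows.
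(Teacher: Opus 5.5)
Your proposal is correct and follows the same route as the paper. The paper also reduces the claim to the remark that any strict compact closed category is a strong \bvcat with $\CC=\DD$, both $\vlte$ and $\vlse$ read as $\boxtimes$, and $U$ the identity, but it simply cites \cite{cqm} for the strict compact closure of $\FdVectsk$, whereas you verify it directly on matrices (Kronecker associativity, $\eta_n=\sum_i e_i\otimes e_i$, the mate of $M$ being $M^{\mathrm t}$, and the De Morgan and double-dual isomorphisms being identities), all of which checks out.
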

\begin{proof}
  It is well-known that $\FdVectsk$ is \emph{strict} compact closed. See for example \cite{cqm}. 
\end{proof}

\subsection{Finite-dimensional Operator Spaces}\label{subsec:OOS}
\def\nmatof#1{\mathbf M_{#1}}
\def\smatof#1#2{\mathbb M_{#2}(#1)}
\def\normm#1#2#3{\norm{#1}_{\smatof{#2}{#3}}}
\def\oss{\mathcal O}

A \emph{finite-dimensional operator space} \cite{er2000operator} is a pair $\left(X, \oss \right)$ consisting of
a finite-dimensional vector space $X$ and
an \emph{operator space structure} $\oss= \Set{\normm{\cdot}Xn \mid n \in \mathbb N }$ given by a sequence of norms $\normm{\cdot}Xn \colon \smatof Xn \to \mathbb R_+$ on the vector space of $n\times n$ matrices with entries in $X$ that satisfy specific axioms (omitted here). 
If $(X, \oss)$ and $(Y, \oss')$ are two operator spaces, 
we say that a linear map $f \colon X \to Y$ is a \emph{complete contraction} \cite{er2000operator}
if $\normm{f (A)}Yn \leq \normm{A}Xn$ for every $n \in \mathbb N$ and $A \in \smatof Xn$, where by $f(A)$ we denote the matrix obtained by applying $f$ to each entry of $A$.

The category $\FdOS$ of finite-dimensional operator spaces and (linear) complete contractions was recently studied in \cite{qcs}, where it was shown that it is a \bvcat.
Duals (in the categorical/logical sense) coincide with the operator space duals \cite{er2000operator}, written~$\cdot^*$ in $\FdOS$. The \emph{projective tensor} $\ptimes$ \cite[\S 7.1]{er2000operator} serves as the multiplicative conjunction $\ltens$,
the \emph{injective tensor} $\itimes$ {\cite[(1.5.1)]{blecher-merdy}} can be identified with the multiplicative disjunction $\lpar,$
and the \emph{Haagerup tensor} $\hagertimes$ \cite[\S 9]{er2000operator}, \cite[pp. 30--34]{blecher-merdy}, \cite[\S 5]{Pisier_2003} as the seq~$\lseq$ of~$\BV$. By \cite[Theorem~6.1]{er-shuffle} (see also \cite{qcs}), we can define a complete contraction 
  \[
  \hfill
  q_{ABCD} \colon (A \itimes B) \hagertimes (C \itimes D) 
  \to
  (A \hagertimes C) \itimes (B \hagertimes D)
  \hfill
  \]
which gives us the $q$ natural transformation. It is well-known that the Haagerup tensor is not symmetric and it is self-dual in $\FdOS$ \cite[\S 9]{er2000operator},
thus giving us $\kappa_{AB} \colon A^* \hagertimes B^* \cong (A \hagertimes B)^*$.

The \defn{skeleton subcategory} $\FdOSsk$ is the subcategory of $\FdOS$ whose objects are operator spaces of the form $(\mathbb C^n, \oss)$ for some operator space structure $\oss$ on $\mathbb C^n$
and whose morphisms are the complex matrices that represent linear complete contractions between such spaces. All of the aforementioned constructions can be easily adapted to $\FdOSsk$ and the obvious
forgetful functor $U \colon \FdOSsk \to \FdVectsk$ strictly preserves all this data.

\begin{theorem}
  The tuple $\tuple{\FdOSsk,\ptimes,\hagertimes, \mathbb C, \cdot^*, U}$ is a strong \bvcat.
\end{theorem}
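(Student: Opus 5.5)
We verify the five conditions of Definition~\ref{def:bv-model} in turn, transporting the known structure of $\FdOS$ along the equivalence $\FdOSsk \simeq \FdOS$ and then checking the strictness requirements against $U \colon \FdOSsk \to \FdVectsk$. For conditions~\ref{it:units}--4 we use the results cited above. By \cite{qcs}, $\tuple{\FdOS,\ptimes,\mathbb C,\cdot^*}$ is $*$-autonomous, with $\itimes$ as the dual tensor, and the unit $\mathbb C$ is self-dual, which gives the isomix isomorphism. The Haagerup tensor $\hagertimes$ is an associative, non-symmetric monoidal product. The self-duality $\kappa_{AB}\colon A^*\hagertimes B^*\cong(A\hagertimes B)^*$ holds by \cite[\S 9]{er2000operator}, and the shuffle map $q_{ABCD}$ holds by \cite[Theorem~6.1]{er-shuffle}. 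Since $\FdOSsk$ is a full subcategory equivalent to $\FdOS$, all of this structure restricts. The one point of care is that the tensors are defined on $\mathbb C^n\otimes\mathbb C^m$, which we identify with $\mathbb C^{nm}$ via the Kronecker (lexicographic) basis. The operator space structure is transported along this identification, so each tensor of skeleton objects is again a skeleton object.

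The substance lies in condition~\ref{it:coherence}. First, $U$ is faithful, since morphisms of $\FdOSsk$ are particular complex matrices and $U$ forgets the norms. Second, on objects we get $U(X\ptimes Y)=U(X\itimes Y)=U(X\hagertimes Y)=\mathbb C^{nm}=UX\boxtimes UY$. On morphisms, each of the three tensors of complete contractions $f\otimes g$ is, as a linear map, the algebraic tensor $f\otimes g$, i.e.\ the Kronecker product $Uf\boxtimes Ug$, because all three operator space tensor products are completions/normings of the same algebraic tensor product, which is already complete in finite dimension. Third, the dual $X^*$ is $\mathbb C^n$ with the dual operator space structure read off the dual basis, and the dual of a map is its transpose, so $U(f^*)=(Uf)^{\mathrm t}$.

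Next we check that the structural morphisms go to identities or symmetries. The associators and unitors of all three tensors are identity matrices, since with the lexicographic basis the Kronecker product is strictly associative with unit $\mathbb C$. The symmetry of $\ptimes$ is the swap permutation matrix $\swap$. The evaluation and coevaluation maps are the canonical pairing and copairing, i.e.\ $\epsilon$ and $\eta$ of $\FdVectsk$. The De Morgan map $(X\ptimes Y)^*\cong X^*\itimes Y^*$ and the double-dual map are identity matrices in the dual bases. The map $\kappa$ is also the identity matrix, as required, because the Haagerup self-duality is implemented by the canonical identification of $(X\otimes Y)^*$ with $X^*\otimes Y^*$. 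Finally, $q$ is by construction the linear shuffle $a\otimes b\otimes c\otimes d\mapsto a\otimes c\otimes b\otimes d$, which as a matrix is exactly $\id\boxtimes\swap\boxtimes\id$, giving diagram~\eqref{eq:q-coherence}.

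The main obstacle is strictness. We must make sure the chosen lexicographic identifications make every canonical isomorphism of the $*$-autonomous and Haagerup structures literally an identity matrix, and not merely a basis change. We also need the transported structure to be well defined, so that, for example, $(X\ptimes Y)\ptimes Z$ and $X\ptimes(Y\ptimes Z)$ carry the same operator space structure on $\mathbb C^{nmk}$, and so that the dual of $X\ptimes Y$ coincides as an operator space with $X^*\itimes Y^*$ under the identity matrix. We would handle this by citing, for each tensor, that the canonical linear associativity and duality maps are complete isometries \cite{er2000operator,blecher-merdy}. It then follows that the norms agree after the strict identification, so the skeletal category is strict and $U$ preserves everything on the nose.
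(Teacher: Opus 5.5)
Your proposal is correct and follows the same route as the paper. The paper's proof only says that the result is a straightforward verification using the results of \cite{qcs} adapted to the skeletal category. Your write-up carries out that verification as expected: the lexicographic (Kronecker) identification $\mathbb C^n\otimes\mathbb C^m=\mathbb C^{nm}$, together with the standard complete isometries, makes the associators, unitors, De Morgan map, double-dual map and $\kappa$ literally identity matrices, and $q$ becomes $\id \boxtimes \swap \boxtimes \id$.
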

\begin{proof}
  Straightforward verification by using (the adapted) results from \cite{qcs}.
\end{proof}

The theory of operator spaces has strong links to quantum theory. Von Neumann 
algebras (e.g., $B(H)$ -- the algebra of bounded operators on a Hilbert space 
$H$) are operator spaces and so are their preduals (e.g., $T(H)$ -- the trace 
class operators on a Hilbert space $H$). The former can be used to formulate 
quantum computation in the Heisenberg picture of quantum theory, whereas the 
latter can be used to formulate quantum computation in the Schrödinger picture. 
The authors in \cite{qcs} use the category $\FdOS$ to construct a model of MALL 
(Multiplicative Additive Linear Logic) in which the polarised linear logic 
duality coincides with the Heisenberg-Schrödinger duality of quantum theory. The 
Haagerup tensor, which is used to interpret the seq connective $\lseq$ of $\BV$, 
plays an essential role in this development. The work in \cite{qcs} builds on
prior work \cite{os-lics} on (infinite-dimensional) operator spaces, linear logic
and the Heisenberg-Schrödinger duality where the authors also consider the Haagerup tensor
and discuss its relation to the logic $\BV$ and some of its other properties relevant to quantum computation. 

\subsection{Completely Positive Maps and the Caus[-] Construction}
\label{sub:caus}
\newcommand{\CPs}{\ensuremath{\mathbf{CP}^*}}

In \cite{sim:kiss:BV}, the authors describe a categorical construction, called
$\Caus[-]$, that produces a new $*$-autonomous category $\Caus[\CC]$ with some
additional structure (in particular an extra tensor $<$) from a given compact closed category $\CC$
satisfying specific conditions. This allows them to construct a \bvcat
where the tensor $<$ represents the connective $\lseq$ of $\BV$.
There is an evident forgetful functor $U \colon \Caus[\CC] \to \CC$ which strictly preserves
all of the relevant structure.
The main concrete example is the category
$\Caus[\CPs[\mathbf{FHilb}]]$, where $\mathbf{FHilb}$ is the category
of finite-dimensional Hilbert spaces with linear maps as morphisms and $\CPs$ is another categorical
construction which gives us a category of completely-positive maps in this case.

If we consider the skeletal subcategory $\mathbf{FHilb_{sk}}$ whose objects are the Hilbert spaces $\mathbb C^n$ and the morphisms are the complex matrices,
then we can construct a strong \bvcat.

\begin{restatable}{theorem}{thmCaus}\label{thm:caus}
  The tuple $\tuple{\Caus[\CPs[\mathbf{FHilb_{sk}}],\otimes, <, \lunit, \cdot^\ast, U}$
  is a strong \bvcat.
\end{restatable}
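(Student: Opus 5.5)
We check the five clauses of Definition~\ref{def:bv-model} for $\Caus[\CPs[\mathbf{FHilb_{sk}}]]$, with $\DD=\CPs[\mathbf{FHilb_{sk}}]$ and $U$ the forgetful functor. Most of the structure comes from \cite{sim:kiss:BV}; what is new is strictness and the coherence clauses.

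\textbf{Step 1: $\DD$ is strict compact closed.} $\mathbf{FHilb_{sk}}$ is strict compact closed. Its monoidal product is the Kronecker product on the objects $\mathbb C^n$, so $\mathbb C^n\otimes\mathbb C^m=\mathbb C^{nm}$ on the nose. Its dual is $\mathbb C^n$ itself with transposition on morphisms, so the De Morgan and double-dual isomorphisms are identities. The $\CPs$ construction keeps the objects and the tensor on objects, and restricts morphisms to completely positive ones, built from those of $\mathbf{FHilb_{sk}}$ of the form $(f\otimes \bar f)$ composed with discards. So $\CPs[\mathbf{FHilb_{sk}}]$ inherits strictness; caps and cups are images of those of $\mathbf{FHilb_{sk}}$ under doubling. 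I expect this step to be routine but worth stating.

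\textbf{Step 2: $*$-autonomy, seq and isomix.} By \cite{sim:kiss:BV}, $\Caus[\DD]$ has the following structure:
\begin{itemize}
\item objects: pairs $(A,c_A)$ with $c_A\subseteq \DD(J,A)$ closed under double orthogonality;
\item morphisms: maps of $\DD$ preserving these sets;
\item $\otimes$ and $<$: both act as $\boxtimes$ on underlying objects, with the causal state sets generated appropriately;
\item duality: $(A,c_A)^*=(A^*,c_A^{\perp})$.
\end{itemize}
Since $U$ is the identity on underlying objects and morphisms, $U$ is faithful, and $U$ preserves $\otimes$, $<$, $*$, unit and symmetry. All structure maps are the underlying maps of $\DD$, so strictness in $\DD$ makes $U$ strict. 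We then need $\lneg\lunit\cong\lunit$, where $\lunit=(\mathbb C,\{1\}^{\perp\perp})$. Its orthogonal is again generated by $1$, so the identity works; this should be in \cite{sim:kiss:BV}. Associativity and unitality of $<$ follow from \cite{sim:kiss:BV}, with structure maps identities in $\DD$, hence identities in $\Caus$ once we know they are morphisms there.

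\textbf{Step 3: $\kappa$ and $q$, and the coherence clauses.} \cite{sim:kiss:BV} shows $<$ is self-dual: $A^*<B^*=(A<B)^*$, realised by the identity of $\DD$, so $U\kappa=\id$. The seq map
\[
q\colon (A\vlpa B)<(C\vlpa D)\to(A<C)\vlpa(B<D)
\]
is given in \cite{sim:kiss:BV} by the underlying symmetry $\id\boxtimes\swap\boxtimes\id$. This makes \eqref{eq:q-coherence} hold by definition. Naturality of $q$ and $\kappa$ follows from naturality in $\DD$ and faithfulness of $U$.

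\textbf{Main obstacle.} I expect the main difficulty to be confirming that skeletalization does not break the $\Caus$ hypotheses. These are the existence of a discarding map and a suitable flatness condition for $\CPs$, which must still hold in the skeleton. They should hold because the skeleton is equivalent to the original category and those properties are invariant under equivalence. The other point to check carefully is that \cite{sim:kiss:BV}'s morphisms $q$ and $\kappa$ really are carried by identities or symmetries in $\DD$, rather than merely isomorphic to them. Strictness of $\DD$ is what lets us turn ``canonical iso'' into equality here.
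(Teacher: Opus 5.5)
Your overall route is the paper's. You establish that $\mathbf{D}=\mathbf{CP}^*[\mathbf{FHilb_{sk}}]$ is strict compact closed, and transfer the results of \cite{sim:kiss:BV} along the equivalence $\mathbf{FHilb_{sk}}\simeq\mathbf{FHilb}$ to get clauses (1)--(4). You use that \cite{sim:kiss:BV} already shows $\kappa$ is the identity, and you read off clause (5) from the fact that all structure maps of $\mathbf{Caus}$ are the underlying maps of $\mathbf{D}$. Your Steps 2 and 3, and your ``main obstacle'', match what the paper does, only spelled out in more detail.

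The gap is in Step 1, where you describe the wrong construction. What you wrote is Selinger's $\mathrm{CPM}$ construction: the same objects and tensor on objects, morphisms built from $f\otimes\bar f$ followed by discarding, and cups and caps obtained by doubling. In $\mathbf{CP}^*[\mathbf{C}]$ the objects are special symmetric dagger Frobenius structures $(A,m)$ in $\mathbf{C}$, and complete positivity is defined relative to these Frobenius structures. For $\mathbf{FHilb}$ these objects correspond to the finite-dimensional C*-algebras, as the paper remarks, including classical and hybrid ones that $\mathrm{CPM}$ does not contain. So ``inherits strictness'' does not follow from your argument. What actually has to be checked is the following.
\begin{itemize}
\item The tensor of Frobenius structures must be associative and unital on the nose. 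Its multiplication $(m_A\otimes m_B)\circ(\mathrm{id}\otimes\sigma\otimes\mathrm{id})$ involves a symmetry. It is strict: both bracketings of $A\otimes B\otimes C$ give $m_A\otimes m_B\otimes m_C$ precomposed with the same permutation, by coherence in the strict symmetric monoidal $\mathbf{FHilb_{sk}}$.
\item $\mathbf{CP}^*$ of a dagger compact category must again be (dagger) compact.
\item With the duals of $\mathbf{CP}^*$, the double-dual isomorphism must be the identity, and $(A\boxtimes B)^\perp=A^\perp\boxtimes B^\perp$ must hold as an equality of Frobenius structures.
\end{itemize}
The paper obtains these from the corresponding results in \cite{cqm}: preservation of dagger compactness, the strict symmetric monoidal structure, and the description of duals. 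With Step 1 repaired in this way, the rest of your argument goes through.
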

\begin{proof}
  Straightforward verification using the results from \cite{sim:kiss:BV} and \cite{cqm} (see Appendix~\ref{app:thmCaus} for more details).
\end{proof}

This example also has strong links to quantum theory. The category
$\CPs[\mathbf{FHilb}]$ is equivalent to the category of finite-dimensional von Neumann algebras (equivalently finite-dimensional C*-algebras) and completely positive maps
between them. The category $\Caus[\CPs[\mathbf{FHilb}]]$ then gives a model of higher-order quantum computation (for finite-dimensional systems).

\subsection{Probabilistic Coherence Spaces}
\label{sub:pcs}
\def\pcsbracket#1#2{\langle #1 , #2 \rangle}
\def\webof#1{\left|#1\right|}
\def\Pcoh{\mathbf{Pcoh}}
\def\skPcoh{\mathbf{P_{sk}}}
\def\funof#1{P_#1}
\newcommand\FdPcoh{\mathbf{P_{sk}}}

Our next example is based on a category of finite-dimensional probabilistic coherence spaces \cite{gir:quantum}.
The category is described in \cite{blu:pan:slav:deep} and it is $*$-autonomous. The authors in \cite{blu:pan:slav:deep} also define a third tensor product that we use.
We adapt the results from \cite{blu:pan:slav:deep} by considering the skeletal versions
of these constructions.
Let $\FdPcoh$ be the category whose objects are pairs $A=(\mathbb R_+^n,\funof A)$, where $n \in \mathbb N$ and $\funof A\subseteq \mathbb R^{n}_+$,
where $\mathbb R_+$ is the set of non-negative reals, such that:
\begin{itemize}
  \item $\funof A=\funof A^{\lbot\lbot}$, where $\funof A^\lbot=\set{w \in \mathbb R_+^{n} \mid \pcsbracket wv\coloneqq\sum_{i=1}^n w_i v_i\leq 1 \text{ for all } v \in \funof A}$;
  \item and certain other conditions hold, but we omit them here for brevity.
\end{itemize}

The homsets $\skPcoh\big((\mathbb R_+^n,\funof A), (\mathbb R_+^m,\funof B)\big)$ are given by the
$m\times n$ matrices with coefficients in $\mathbb R_+$, such that for all $v\in \funof A$
we have that $Mv\in \funof B$, where $Mv$ stands for the multiplication of the matrix $M$ with the vector $v.$
We write $\vlte$ for the tensor product in $\FdPcoh$ which can be adapted in an obvious way from the usual definition in \cite{dan:ehr:prob,blu:pan:slav:deep,gir:quantum}.
We write $\oslash$ for the (again obvious) adapation of the tensor defined in \cite{blu:pan:slav:deep}
to $\FdPcoh$. We note that the action of $\otimes$ and $\oslash$ on morphisms coincides with the Kronecker product of matrices.
Duals are given by $(\mathbb R_+^n, \funof A)^\perp = (\mathbb R_+^n, \funof A^\perp)$ and their action on morphisms is given by transposition of matrices.
Let $\mathbf{Mat}_{\mathbb R_+}$ be the category whose objects are given by $\mathbb R_+^n$, for $n \in \mathbb N$, and whose homsets
$\mathbf{Mat}_{\mathbb R_+}(\mathbb R^n_+, \mathbb R^m_+)$ are given by the $m\times n$ matrices with coefficients in $\mathbb R_+.$ This category
is strict compact closed with all the constructions fully analogous to $\FdVectsk.$ We
write $U \colon \FdPcoh \hookrightarrow \mathbf{Mat}_{\mathbb R^+}$ for the obvious forgetful functor.

\begin{theorem}
  The tuple $\tuple{\skPcoh,\vlte,\oslash,\mathbb R^+,\lneg\cdot,U}$
  is a strong \bvcat.
\end{theorem}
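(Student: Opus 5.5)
Our plan is to verify the five clauses of Definition~\ref{def:bv-model} in turn. We transport the structure established for the non-skeletal category of finite-dimensional probabilistic coherence spaces in \cite{blu:pan:slav:deep} to the skeletal category $\skPcoh$. On the skeletal side every object already carries the canonical basis of $\mathbb R_+^n$, so all structural maps become explicit $0/1$ matrices.

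We first fix the conventions that make the structure strict. For $A=(\mathbb R_+^n,\funof A)$ and $B=(\mathbb R_+^m,\funof B)$, the carriers of $A\vlte B$ and $A\oslash B$ are $\mathbb R_+^{nm}$, with the Kronecker (lexicographic) indexing. The unit is $\mathbb R_+=\mathbb R_+^1$ with $\funof{}=[0,1]$, and $\lneg A=(\mathbb R_+^n,\funof A^\lbot)$.

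With these choices, the following structural maps are identity matrices:
\begin{itemize}
\item the associators and unitors of $\vlte$ and $\oslash$;
\item the isomix map $\lneg\vlun\cong\vlun$, since $[0,1]^\lbot=[0,1]$;
\item the double-dual map, since $\funof A^{\lbot\lbot}=\funof A$.
\end{itemize}
The symmetry of $\vlte$ is the evident permutation matrix. For each of these we only need to check that the matrix is a morphism, i.e.\ that it maps the relevant $\funof{}$-set into the target one. This is the skeletal restatement of the corresponding facts in \cite{blu:pan:slav:deep,gir:quantum}, provided that the $\funof{}$-sets of the tensors are defined as the bi-orthogonal closure of the products $\set{v\otimes w}$, taken in the same index order.

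The $*$-autonomous structure is the standard one for probabilistic coherence spaces: $\lneg A\lpar\lneg B$ is $\lneg{(A\vlte B)}$, and $A\multimap B$ is $\lneg{(A\vlte\lneg B)}$ with carrier $\mathbb R_+^{nm}$. The evaluation and coevaluation are the usual trace-type matrices, as in $\mathbf{Mat}_{\mathbb R_+}$.

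For the remaining data we adapt \cite{blu:pan:slav:deep} directly. The natural isomorphism $\kappa$ is the identity matrix, because $\lneg{(A\oslash B)}=\lneg A\oslash\lneg B$ as sets; this self-duality is proved there and only needs the index order checked. The map $q_{ABCD}$ is the permutation matrix $\id\otimes\swap\otimes\id$, and the fact that it is a morphism $(A\lpar B)\oslash(C\lpar D)\to(A\oslash C)\lpar(B\oslash D)$ is exactly the medial-type inclusion shown in \cite{blu:pan:slav:deep}.

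Next we verify the clause about $U$. The functor $U$ is the identity on matrices, so it is faithful. It is strict, since on carriers it sends $\vlte$, $\lpar$ and $\oslash$ to the Kronecker product $\boxtimes$ of $\mathbf{Mat}_{\mathbb R_+}$ and sends duals to transposes. By the choices above, $U\kappa=\id$, and $Uq=\id\boxtimes\swap\boxtimes\id$, which gives \eqref{eq:q-coherence} on the nose. Naturality of $\kappa$ and $q$, and the monoidal axioms for $\oslash$, then follow from faithfulness of $U$ together with the corresponding identities in $\mathbf{Mat}_{\mathbb R_+}$.

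The main obstacle is the strictness claim: we need actual equalities $\funof{A\vlte(B\vlte C)}=\funof{(A\vlte B)\vlte C}$, and likewise for $\oslash$ and for the De Morgan law, not merely isomorphisms. We would prove these by showing that both sides equal the bi-orthogonal closure of $\set{u\otimes v\otimes w}$, using the fact that $(X^{\lbot\lbot}\otimes Y)^{\lbot\lbot}=(X\otimes Y)^{\lbot\lbot}$. For $\oslash$ we would first check that the definition in \cite{blu:pan:slav:deep} has this form. If it does not, we would instead argue with the characterization given there, which is stable under reindexing.
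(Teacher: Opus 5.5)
Your proposal is correct and follows the paper's route: the paper's proof is simply a direct verification of Definition~\ref{def:bv-model} by adapting \cite{blu:pan:slav:deep} to the skeletal setting. Your explicit choices (all structural isomorphisms and $\kappa$ as identity matrices, $q$ as the permutation $\id\otimes\swap\otimes\id$, $U$ the identity on matrices) are exactly what make clause~\ref{it:coherence} hold on the nose. One point in your final paragraph needs tidying: the $P$-set of $A\oslash B$ cannot be the bi-orthogonal closure of simple tensors, since that would force $\oslash=\vlte$, which is not self-dual. So the strict associativity and unitality of $\oslash$ must be read off from its explicit description in \cite{blu:pan:slav:deep}; for instance, for the conditional form $w_{ab}=x_a y^{(a)}_b$, both bracketings give $x_a y^{(a)}_b z^{(a,b)}_c$.
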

\begin{proof}
  Straightforward verification using the (adapted) results from \cite{blu:pan:slav:deep}.
\end{proof}

Probabilistic coherence spaces have been studied in the context of (classical) probabilistic computation \cite{pcbv-full-abstraction,ppcf-full-abstraction}.
Note that the examples in the previous two subsections also support probabilistic effects.

\section{Conclusion and Future Work}\label{sec:conclusion}

In this paper we proposed a notion of \emph{proof identity} for $\BV$, and for this we introduced \emph{atomic flows} for $\BV$. We have shown that the global operation of cut elimination in $\BV$ via \emph{splitting} coincides on atomic flows with the local operation of \emph{yanking}. This allowed us to provide a refined notion of \emph{strong \bvcat} that is sound for the logic $\BV$, and we provided some concrete non-trivial examples of such categories.

This work brings together two communities: those who study the proof theory of $\BV$ and those who study its categorical semantics. This opens up new directions of research on which we elaborate below.

\subparagraph*{Relation to Pomset Logic.}
An independent justification for our choice of proof identity for $\BV$ is that it coincides with the one induced by the proof nets of pomset logic. It has been shown in~\cite{tito:lutz:csl22,tito:str:SIS-III}, that every $\BV$-proof can be translated into a correct pomset logic proof net. In the cut-free case, this is just the formula in the conclusion together with the axiom links. But the atomic flow of a cut-free $\BV$-derivation with premise $\vlun$ is also just the axiom links on the conclusion. Therefore, two $\BV$ proofs have the same atomic flow if and only if they have the same pomset logic proof net. This raises some questions for future research: What is the precise relation between cut elimination in pomset logic proof nets and yanking in atomic flows? And can we have a correctness criterion for $\BV$ atomic flows? As shown in~\cite{tito:lutz:csl22,tito:str:SIS-III}, the correctness criterion for pomset logic is not suitable for $\BV$.

\subparagraph*{Strong \bvcats vs \bvcats.}
We proved that every strong \bvcat is a \bvcat in the
sense of~\cite{blu:pan:slav:deep,bv-cats2}. We conjecture that the 
converse is not true, i.e. there exists a \bvcat which cannot be 
embedded into a compact closed one in the way that we require. The 
embedding of a strong \bvcat into a strict compact closed category ensures that 
we get all the coherence properties for the proof of soundness 
(Theorem \ref{thm:soundness}) and invariance with respect to proof 
equivalence (Corollary \ref{cor:proof-equivalence}), for which we 
also use results based on atomic flows.  
Even though the original definition of a \bvcat gives us many coherence diagrams (see Appendix \ref{app:BVcat}), we do not know if it gives us \emph{all} the ones we need, because the splitting lemma (used for cut elimination / proof normalisation) is difficult to work with categorically, and we doubt that \bvcats give us enough coherence diagrams for atomic flows in relation to Corollary \ref{cor:proof-equivalence}. 

We would also like to formulate a more general notion of proof equivalence 
compared to the one that we used here (which is based on atomic flows) and then check if the original definition of \bvcat gives us enough coherence diagrams for this notion of proof equivalence.
If not, then it would be important to identify the right set of coherence laws for \bvcats and modify the definition of a \bvcat accordingly. Another open problem related to this is whether
one can formulate a coherence theorem for \bvcats (in a similar way to Mac Lane's coherence theorem for monoidal categories \cite{mac-categories}).
The massive complexity discrepancy between pomset logic and $\BV$ suggests that this is a non-trivial question.

\subparagraph*{Yanking without Cut Elimination.} 
Our main result (\Cref{thm:YAF}) only speaks about derivations with premise $\vlun$. However, we conjecture that this can be generalized as follows:
\begin{conjecture}\label{con:YAF}
  Given a derivation $\vldownsmash{\odv A\dD B\SBV}$, then there is a derivation  $\vldownsmash{\odv A{\dDp} B\SBV}$, such that $\AFof{\dDp}=\Yof{\AFof{\dD}}$.
\end{conjecture}
This would then also strengthen our second result~\Cref{thm:soundness}. We also conjecture that the methods that we need to develop to prove~\Cref{thm:YAF} would also help to make progress towards a coherence theorem for \bvcats as discussed above.

\subparagraph*{Strictness.} 
\Cref{def:bv-model} has a strictness condition on the compact closed category
$\catD$ and on the forgetful functor $U\colon\catC\to\catD$. One reason for
this is that in atomic flows the associativity and unitality of the tensor is ``on the nose'' and it is not
immediately clear how to relax the strictness condition. It would certainly
require checking many coherence conditions and possibly modifying the proofs in \Cref{sec:splitting}, but we conjecture that \Cref{thm:soundness} can also be extended to the
non-strict case. In particular, we conjecture that the non-strict versions of the concrete categorical models in \Cref{sec:concrete} are sound.
Another reason for the strictness assumption is that it is currently unclear (to us) what is a good notion of a (non-strict) $*$-autonomous functor (compare \cite{star-aut-functor} and \cite{bv-cats2}).
We conjecture that the strictness assumption on the functor $U$ can be removed, but this requires performing a more careful analysis of all of the coherence properties that are required to hold in relation to it. 

\subparagraph*{Models of Intuitionistic $\BV$.}
Another direction for future work is to investigate suitable categorical models
for Intuitionistic $\BV$ \cite{acc:str:IBV,acc:str:IBVext}. It would be
interesting to see if natural models can be found by considering categories of
(possibly infinite-dimensional) operator spaces \cite{os-lics,category-os}
where the seq connective $\lseq$ of $\BV$ is interpreted by variants of the
Haagerup tensor (see \cite{er-shuffle}).

\bibliography{biblio}

\clearpage
\appendix

\section{Omitted Proofs for Section~\ref{sec:splitting}}\label{app:1}

\lemSplit*
\begin{proof}
  We recall that the standard proof of the splitting lemma proceeds by a case analysis on the last rule of the derivation $\dD$, and applying the induction hypothesis on the derivation with conclusion the premise of the last rule.
  The additional observation here is that each transformation preserves the atomic flow.
  This follows from the fact that the only rule in $\BV$ whose corresponding atomic flow is not an identity flow is the $\aidr$ rule, and the only case in which it is involved is in the third item above, where the atomic flow is the same modulo the identity over flows given by the fact that string diagrams are considered up-to the equivalences in the right of \Cref{fig:af}, in particular, the ones allowing us to freely `slide' boxes over and under wires. 

  \begin{enumerate}
    \item If there is a derivation $\odr{\dD}{(A\vlte B)\vlpa K}{\BV}$, then we have three cases:
    \begin{itemize}
      \item the bottom-most rule is applied to $K$, that is $\dD=\odr{\dD'}{(A\vlte B)\vlpa \left(\odn{K'}{\rrule}{K}{}\right)}{}$, in which case we can conclude by applying the induction hypothesis to $\dD'$, and then applying the same rule instance $\rrule$ to the resulting derivation, obtaining derivations 
      $$
      \dD[K]=\odP{\odi{\odd{\odh{K_A\vlpa K_B}}{\dD[K']}{K'}{}}{\rrule}{K}{}}
      \quand
      \odr{\dD[A]}{A\vlpa K_A}{}
      \quand
      \odr{\dD[B]}{B\vlpa K_B}{}
      $$
      We conclude since the atomic flow of $\swir$ is an identity flow, therefore we have
      \begin{equation}\label{eq:split:tens:deep}
        \hskip-1em
        \AFof{
          \odN{
            \odrP{\dD[A]}{A\vlpa K_A}{} \vlte \odrP{\dD[B]}{B\vlpa K_B}{}
            }{\swir}{
              (A\vlte B)
            \vlpa
              \odP{\odi{\odd{\odh{K_A\vlpa K_B}}{\dD[K']}{K'}{}}{\rrule}{K}{}}
            }{}
        }
      =
        \begin{AF}
          \naf{0,2}{\nafdiagramd{\AFof{\dD[A]}}{}{}21{}{}}
          \naf{5,2}{\nafdiagram{\AFof{\dD[B]}}{}{}21}
          \naf{5,-2}{\nafdiagramd{\AFof{\dD[K]'}}{}{,.,}21{}{}}
          \naf{5,-4.5}{\nafdiagramd{\AFof{\rrule}}{}{,.,}21{}{}}
          \naf{6,0}{\nafV{1}}
          \naf{-1,-2.5}{\nafV{3.5}}
          \naf{2.5,0}{\nafJr{1.5}1}
          \naf{1,-3.5}{\nafV{2.5}}
          \naf{2,0}{\nafJl{1.5}1}
        \end{AF}
      =
        \begin{AF}
          \naf{0,0}{\nafdiagram{\AFof{\dDp}}{}{}{2}1}
          \naf{1,-2.5}{\nafdiagram{\AFof{\rrule}}{,.,}{,.,}{1}1}
          \naf{-1,-2.5}{\nafV{1.5}}
          \naf{-1.5,-2.5}{\nafV{1.5}}
        \end{AF}
      =
        \AFof{\dD}
      \end{equation}

      \item the bottom-most rule is applied to $A$ or $B$, in which case we can conclude similarly to the previous case by applying the induction hypothesis to the derivation with conclusion the premise of the last rule, and then applying the same rule instance to the resulting derivation.

      \item the bottom-most rule is a $\feq$ applied to $A\ltens B$.
      In this case, we have that 
      $A=C_{i_1}\vlte \cdots\vlte C_{i_k}$ and $B=C_{j_1}\vlte \cdots\vlte C_{j_m}$ for some $i_1, \ldots, i_k, j_1, \ldots, j_m$ such that $\set{i_1, \ldots, i_k,j_1, \ldots, j_m}=\intset1n$, 
      and that
      $$
      \dD=\odr{\dD'}{\left(
          \od{\odo{\odh{
            \odP{\odo{\odh{C_1'}}{\feq}{C_1}{}}\vlte \cdots\vlte \odP{\odo{\odh{C_n'}}{\feq}{C_n}{}}
          }}{\feq}{
            \odP{\odo{\odh{C_{i_1}\vlte \cdots\vlte C_{i_k}}}{=}{A}{}}
            \vlte 
            \odP{\odo{\odh{C_{j_1}\vlte \cdots\vlte C_{j_m}}}{=}{B}{}}
          }{}}
        \right)\vlpa K}{}
      $$
      for some $C_k'\feq C_k$ for all $k\in\intset1n$.
      In this case, we repetively apply splitting to the conclusion of $\dD'$, until we get derivations
      $$
      \odv{K_1\vlpa \cdots \vlpa K_n}{\dD[K]}{K}{}
      \qquand
      \odr{\dD[C_k]}{C'_k\vlpa K_k}{}
      \text{ for all }k\in\intset1n
      $$
      And conclude since we can reconstruct 
      $$
      \dD[A]=\od{\odI{\odh{
        \odrP{\dD[C_{i_1}]}{C'_{i_1}\vlpa K_{i_1}}{}
        \vlte \cdots \vlte
        \odr{\dD[C_{i_k}]}{C'_{i_k}\vlpa K_{i_k}}{}
      }}{\swir}{
        \odP{\odo{\odh{C'_{i_1}\vlte \cdots\vlte C'_{i_k}}}{\feq}{A}{}}
      \vlpa
        K_{i_1}\vlpa \cdots \vlpa K_{i_k}
      }{}
      }
      $$
      and
      $$
      \dD[B]=\od{\odI{\odh{
        \odrP{\dD[C_{j_1}]}{C'_{j_1}\vlpa K_{j_1}}{}
        \vlte \cdots \vlte
        \odrP{\dD[C_{j_m}]}{C'_{j_m}\vlpa K_{j_m}}{}
      }}{\swir}{
        \odP{\odo{\odh{C'_{j_1}\vlte \cdots\vlte C'_{j_m}}}{\feq}{B}{}}
      \vlpa
        K_{j_1}\vlpa \cdots \vlpa K_{j_m}}{}
      }
      $$
      because we have that
      $$
      \AFof{
        \odN{
          \odrP{\dD[A]}{A\vlpa K_A}{} \vlte \odrP{\dD[B]}{B\vlpa K_B}{}
          }{\swir}{
            (A\vlte B)\vlpa\odvP{K_A\vlpa K_B}{\dD[K]}{K}{}
          }{}
      }
      =\qquad
      \begin{AF}
        \naf{4,2}{\nafdiagram{\AFof{\dD[C_1]\vlte \cdots\vlte \dD[C_n]}}{}{}{4.5}1}
        \naf{6.5,-2}{\nafdiagramd{\AFof{\dD[K]}}{}{}21{}{}}
        \naf{1.5,-2}{\nafdiagramd{\AFof{\feq}}{}{}21{}{}}
        \naflabel{5,.75}{\cdots}
        \naf{8,0}{\nafVu{1}{}{K_{C_n}}}
        \naf{0,0}{\nafVu{1}{C'_1}{}}
        \naf{5,0}{\nafJr{2}1}
        \naf{3,0}{\nafJl{2}1}
        \naf{0,-4}{\nafVu{1}{C_1}{}}
        \naf{3,-4}{\nafVu{1}{}{C_n}}
        \naf{7,-4}{\nafVu{1}{}{K}}
        \naflabel{1.2,0}{K_{C_1}}
        \naflabel{7,0}{C'_n}
        \naflabel{6.5,-.5}{\ldots}
        \naflabel{1.5,-.5}{\ldots}
        \naflabel{1.5,-4}{\ldots}
      \end{AF}
      \qquad=
      \AFof{\dD}
      $$
      where $K_A=K_{i_1}\vlpa\cdots\vlpa K_{i_k}$ and $K_B=K_{j_1}\vlpa\cdots\vlpa K_{j_m}$;

      \item or the bottom-most rule in $\dD$ is a $\swir$, that is 
      $\dD=\od{\odi{
        \odp{\dD'}{A\vlte (B\vlpa K)}{}
      }{\swir}{(A\vlte B) \vlpa K}{}}$, 
      in which case we can apply inductive hypothesis to $\dD'$, obtaining derivations
      $$
      \odv{K_B}{\dD[K]}{K}{}
      \qquand
      \odr{\dD[A]}{A}{}
      \qquand
      \odr{\dD[B]}{B\vlpa K_B}{}
      $$
      and conclude immediately since
      $$
      \AFof{
        \odn{
          \odrP{\dD[A]}{A}{} \vlte \odrP{\dD[B]}{B\vlpa K_B}{}
          }{\swir}{
            (A\vlte B)\vlpa\odvP{K_B}{\dD[K]}{K}{}
          }{}
      }
      =
      \AFof{
        \odn{
          \odrP{\dD[A]}{A}{} \vlte \odrP{\dD[B]}{B\vlpa \odvP{K_B}{\dD[K]}{K}{}}{}
          }{\swir}{
            (A\vlte B)\vlpa K
          }{}
      }
      =
      \AFof{\dD}
      $$
      because the atomic flow of $\swir$ is an identity flow.
    \end{itemize}
    
    \item If there is a derivation $\odr{\dD}{(A\vlse B)\vlpa K}{\BV}$, then we conclude similarly to the previous case.
    The only difference is that in this case we have to consider the rule $\qdr$ instead of the rule $\swir$, whose atomic flow is not an identity flow, but it is constructed by identities and crossings.
    
    \item If there is a derivation $\odr{\dD}{a\vlpa K}{\BV}$, then we have two cases:
    \begin{itemize}

      \item either the last rule is applied to $K$, that is $\dD=\odr{\dD'}{a\vlpa \left(\odn{K'}{\rrule}{K}{}\right)}{}$, in which case we can conclude by applying the induction hypothesis to $\dD'$, and then applying the same rule instance $\rrule$ to the resulting derivation, obtaining a derivation $\dD[a]=\odP{\odi{\odd{\odh{\lneg a}}{\dD[a]'}{K'}{}}{\rrule}{K}{}}$.
      We conclude since 
      $$
      \AFof{
        \odn{\vlun}{\aidr}{a\vlpa
        \odP{\odi{\odd{\odh{\lneg a}}{\dD[a]'}{K'}{}}{\rrule}{K}{}}
        }{}
      }
      =
      \AFof{
        \od{\odo{\odi{\odh{\vlun}}{\aidr}{a\vlpa
        \odP{\odd{\odh{\lneg a}}{\dD[a]'}{K'}{}}
        }{}}{}{a\vlpa \odnP{K'}{\rrule}{K}{}}{}
        }
      }
      =
      \begin{AF}
        \naf{0,0}{\nafdiagram{\AFof{\dDp}}{}{,,.,}{1.5}1}
        \naf{.5,-2.5}{\nafdiagram{\AFof{\rrule}}{}{,.,}{1}1}
        \naf{-1.5,-2.5}{\nafv{1.5}}
      \end{AF}
      =
      \AFof{\dD}
      $$

      \item or the last rule is applied to $a$, in which case it must be a $\aidr$ and 
      $\dD=
      \od{\odo{\odp{\dD'}{K'}{}}{}{\left(\odn{\vlun}{\rrule}{a\lpar \lneg a}{}\right) \vlpa K'}{}}
      $
      In this case, we conclude by letting $\dD[a]=\od{\odo{\odh{\lneg a \vlpa \odrP{\dD'}{K'}{}}}{}{K}{}}$ 
      since 
      $$
      \AFof{
        \odn{\vlun}{\aidr}{a\vlpa\odvP{\lneg a}{\dD[a]}{K}{}}{}
      }
      =
      \begin{AF}
        \naf{0,.75}{\nafhdx{a}{}{}{\lneg a}{1}}
        \naf{4,0}{\nafdiagram{\AFof{\dDp}}{}{,.,}{1.5}1}
      \end{AF}
      =
      \AFof{\dD}
      $$
      
    \end{itemize}
    
  \end{enumerate}
\end{proof}

\lemQuElim*
\begin{proof}
  We start by applying context reduction to $\dDp$, obtaining a formula $K$ and derivations
  \begin{equation}\label{eq:qUelimAF0}
    \hfill
    \begin{array}{c}
      \odr{\dDp[1]}{\left((A\vlte C)\vlse(B\vlte D)\right)\vlpa K}{\BV}
      \quand
      \odv{X\vlpa K}{\dDp[{\cC[X]}]}{\cC[X]}{\BV}
      \text{ for every formula } X
    \\[20pt]
      \text{such that}
      \qquad\qquad
      \begin{AF}
        \naf{0,0}{\nafdiagram{\AFof{\dDp}}{}{,.,}{1.5}1}
      \end{AF}
      =
      \begin{AF}
        \naf{0,2.5}{\nafdiagram{\AFof{\dDp[1]}}{}{,.,}41}
        \naf{0,0}{\nafdiagram{\AFof{\dDp[{\cC[(A\vlte C)\vlse(B\vlte D)]}]}}{,.,}{,.,}41}
      \end{AF}
    \end{array}
    \hfill
  \end{equation}
  We then apply the splitting lemma to $\oDp[1]$, obtaining formulas $K_l$ and $K_r$ and derivations 
  \begin{equation}\label{eq:qUelimAF1}
    \hfill
    \begin{array}{c}
      \odv{K_l\vlpa K_r}{\dD[K]}{K}{\BV}
      \text{ and }
      \odr{\dD[A\vlse B]}{(A\vlse B )\vlpa K_l}{\BV}
      \text{ and }
      \odr{\dD[C\vlse D]}{(C\vlse D) \vlpa K_r}{\BV}
      \quad\text{such that}
    \\[10pt]
      \qquad
      \AFof{\oDp[1]}
      =
      \AFof{
        \odN{
          \oD[A\vlse B]\vlte \oD[C\vlse D]
          }{\swir}{
            ((A\vlse B)\vlte(C\vlse D))\vlpa \oD[K]
          }{}
      }
      =
      \begin{AF}
        \naf{0,2}{\nafdiagramd{\AFof{\dD[A\vlse B]}}{}{}21{}{}}
        \naf{5,2}{\nafdiagram{\AFof{\dD[C\vlse D]}}{}{}21}
        \naf{5,-2}{\nafdiagramd{\AFof{\dD[K]}}{}{,.,}21{}{}}
        \naf{6,0}{\nafV1}
        \naf{-1,-1}{\nafV{2}}
        \naf{2.5,0}{\nafEXl{1.5}1}
        \naf{1,-2}{\nafV{1}}
      \end{AF}
    \end{array}
    \hfill
  \end{equation}
  Then, we can apply splitting on both $\oD[A\vlse B]$ and $\oD[C\vlse D]$, obtaining formulas $K_A$, $K_B$, $K_C$, and $K_D$ and derivations
  \begin{equation}\label{eq:qUelimAF2}
    \hfill
    \begin{array}{c}
      \odv{K_A\vlse K_B}{\dD[l]}{K_{l}}{\BV}
      \qomma
      \odv{K_C\vlse K_D}{\dD[r]}{K_r}{\BV}
      \qomma
      \odr{\dD[A]}{A\vlpa K_A}{\BV}
      \qomma
      \odr{\dD[B]}{B\vlpa K_B}{\BV}
      \qomma
      \odr{\dD[C]}{C\vlpa K_C}{\BV}
      \quand
      \odr{\dD[D]}{D\vlpa K_D}{\BV}
      \;\; 
      \text{such that}
    \\
      \begin{AF}
        \naf{0,0}{\nafdiagram{\AFof{\dD[A\vlse B]}}{}{,.,}21}
      \end{AF}
      =
      \AFof{\oD[A\vlse B]}
      =
      \AFof{
        \odN{\oD[A] \vlse \oD[B]}{\qdr}{(A\vlse B)\vlpa \oD[l]}{}
      }
      =
      \begin{AF}
        \naf{0,2}{\nafdiagramd{\AFof{\dD[A]}}{}{}21{}{}}
        \naf{5,2}{\nafdiagram{\AFof{\dD[B]}}{}{}21}
        \naf{5,-2}{\nafdiagramd{\AFof{\dD[l]}}{}{,.,}21{}{}}
        \naf{6,0}{\nafV1}
        \naf{-1,-1}{\nafV{2}}
        \naf{2.5,0}{\nafEXl{1.5}1}
        \naf{1,-2}{\nafV{1}}
      \end{AF}
    \\
      \begin{AF}
        \naf{0,0}{\nafdiagram{\AFof{\dD[C\vlse D]}}{}{,.,}21}
      \end{AF}
      =
      \AFof{\oD[C\vlse D]}
      =
      \AFof{
        \odN{\oD[C] \vlse \oD[D]}{\qdr}{(C\vlse D)\vlpa \oD[r]}{}
      }
      =
      \begin{AF}
        \naf{0,2}{\nafdiagramd{\AFof{\dD[C]}}{}{}21{}{}}
        \naf{5,2}{\nafdiagram{\AFof{\dD[D]}}{}{}21}
        \naf{5,-2}{\nafdiagramd{\AFof{\dD[r]}}{}{,.,}21{}{}}
        \naf{6,0}{\nafV1}
        \naf{-1,-1}{\nafV{2}}
        \naf{2.5,0}{\nafjr{1.5}1}
        \naf{2.5,0}{\nafEXl{1.5}1}
        \naf{1,-2}{\nafV{1}}
      \end{AF}
    \end{array}
  \end{equation}
  We can now build the derivation $\dDast$ as shown below, whose atomic flow below is such that $\AFof{\dDast}=\AFof{\dD}$ by the equations on atomic flows in \Cref{eq:qUelimAF1,eq:qUelimAF2} and \Cref{fig:af}.
  $$
  \adjustbox{max width=\textwidth}{$
    \odv{
      \odN{
        \odNP{\oD[A]\vlte \oD[C]}{\swir}{(A\vlte C)\vlpa K_A\vlpa K_C }{}
        \vlse
        \odNP{\oD[B]\vlte \oD[D]}{\swir}{(B\vlte D)\vlpa K_B\vlpa K_D }{}
      }{\qdr}{
        \left((A\vlte C)\vlse(B\vlte D)\right) \vlpa \odP{\odo{\odh{\oD[l]\vlpa \oD[r]}}{}{\oD[K]}{}}
      }{}
    }{\dDp[{\cC[(A\vlte C)\vlse(B\vlte D)]}]}{
      \cC[(A\vlte C)\vlse(B\vlte D)]
    }{}
  \qquad
    \begin{AF}
      \naf{0,2}{\nafdiagramd{\AFof{\dD[A]}}{}{}21{}{}}
      \naf{5,2}{\nafdiagramd{\AFof{\dD[C]}}{}{}21{}{}}
      \naf{10,2}{\nafdiagramd{\AFof{\dD[B]}}{}{}21{}{}}
      \naf{15,2}{\nafdiagram{\AFof{\dD[D]}}{}{}21}
      \naf{5,-1}{\nafJl42}
      \naf{10,-1}{\nafJl42}
      \naf{7,-1}{\nafJr{2}2}
      \naf{10.5,-1}{\nafJr{3.5}2}
      \naf{11.5,-1}{\nafV2}
      \naf{16,-1}{\nafV{2}}
      \naf{-1.5,-3.5}{\nafV{4.5}}
      \naf{4,-3.5}{\nafV{4.5}}
      \naf{5,-5.5}{\nafV{2.5}}
      \naf{7,-5.5}{\nafV{2.5}}
      \naf{10,-4}{\nafdiagramd{\AFof{\dD[l]}}{}{,.,}21{}{}}
      \naf{15,-4}{\nafdiagramd{\AFof{\dD[r]}}{}{,.,}21{}{}}
      \naf{12.5,-6.5}{\nafdiagram{\AFof{\dD[K]}}{}{,.,}{4.5}1}
      \naf{7.5,-9}{\nafdiagram{\AFof{\dD[(A\vlse B)\vlte(C\vlse D)]}}{}{,.,}{9.5}1}
    \end{AF}
  $}
  $$
\end{proof}

\section{Definition of \bvcat}\label{app:BVcat}

In this appendix we give the definition of a \bvcat with negation based on the
approach taken in \cite{blu:pan:slav:deep} with some small adaptations related to presentation.
The presentation and source code for most of the
diagrams is taken from the appendix in \cite{qcs} (arxiv version) with some
adaptations on our end.

\begin{definition}[Normal Duoidal Structure]
  Let $\mathbf{C}$ be a category with a monoidal structure
  $(\mathbf{C}, I, \otimes)$ and another monoidal structure $(\mathbf{C}, J, \vlse)$.
  We say that
  $(J, \vlse)$ is \emph{normal duoidal} to $(I, \otimes)$ if
  there exists a natural transformation:
  \begin{equation*}
    w: (A\vlse B)\otimes (C\vlse D) \to
    (A \otimes C) \vlse (B\otimes D)
  \end{equation*}
  called \emph{weak interchange}, together with morphisms
  \begin{equation*}
    i:I \cong J \qquad \delta_I : I \to I \vlse I \qquad \mu_J : J \otimes J \to J
  \end{equation*}
  which satisfy the following coherence properties:

  the weak interchange respects associativity in the sense that the following diagrams
  \begin{equation}
    \label{eq:assoc-w-seq}
    \begin{tikzcd}
      ((A\vlse B) \otimes (C \vlse D)) \otimes (E \vlse F) \arrow[r, "\alpha"] \arrow[d, "w \otimes \mathrm{id}"']             & (A\vlse B) \otimes ((C \vlse D) \otimes (E \vlse F)) \arrow[d, "\mathrm{id} \otimes w"]      \\
      ((A \otimes C) \vlse (B\otimes D)) \otimes (E \vlse F) \arrow[d, "w"']                                 & (A\vlse B) \otimes ((C \otimes E) \vlse (D \otimes F)) \arrow[d, "w"] \\
      ((A \otimes C) \otimes E) \vlse ((B \otimes D) \otimes F)  \arrow[r, "\alpha \vlse \alpha"'] & (A \otimes (C \otimes E)) \vlse (B \otimes (D \otimes F))
      \end{tikzcd}
  \end{equation}
  \bigskip

  \begin{equation}
    \label{eq:assoc-w-tens}
    \begin{tikzcd}
      ((A \vlse C) \vlse E) \otimes ((B \vlse D) \vlse F)  \arrow[r, "\alpha \otimes \alpha"] \arrow[d, "w"']   & (A \vlse (C \vlse E)) \otimes (B \vlse (D \vlse F)) \arrow[d, "w"]  \\
      ((A \vlse C) \otimes (B \vlse D)) \vlse (E \otimes F) \arrow[d, "w \vlse \mathrm{id}"']   & (A\otimes B) \vlse ((C \vlse E) \otimes (D \vlse F)) \arrow[d, "\mathrm{id} \vlse w"] \\
      ((A\otimes B) \vlse (C \otimes D)) \vlse (E \otimes F) \arrow[r, "\alpha"']           & (A\otimes B) \vlse ((C \otimes D) \vlse (E \otimes F))     \\
    \end{tikzcd}
  \end{equation}

  commute; the weak interchange respects unitality in the sense that the following diagrams

  \begin{equation}
    \label{eq:unital-tens}
    \begin{tikzcd}
      I \otimes (A \vlse B) \arrow[r, "\delta_I \otimes \text{id} "]  & (I \vlse I) \otimes (A \vlse B) \arrow[d, "w"] \\
      A \vlse B \arrow[u, "\lambda_{A\vlse B}"] \arrow[r, "\lambda_A \vlse \lambda_B"'] & (I \otimes A) \vlse (I \otimes B)
    \end{tikzcd}
    \qquad
    \begin{tikzcd}
     (A \vlse B) \otimes I \arrow[r, "\text{id} \otimes \delta_I "]  &  (A \vlse B) \otimes (I \vlse I) \arrow[d, "w"] \\
      A \vlse B \arrow[u, "\rho_{A\vlse B}"] \arrow[r, "\rho_A \vlse \rho_B"'] & (A \otimes I) \vlse (B \otimes I)
      \end{tikzcd}
  \end{equation}
  \bigskip

  \begin{equation}
    \label{eq:unital-seq}
    \begin{tikzcd}
      J \vlse (A \otimes B) & \arrow[l, "\mu_J \vlse \text{id}"']  (J \otimes J) \vlse (A \otimes B) \\
      A \otimes B \arrow[u, "\lambda_{A\otimes B}"] \arrow[r, "\lambda_A \otimes \lambda_B"'] & (J \vlse A) \otimes (J \vlse B) \arrow[u, "w"']
    \end{tikzcd}
    \qquad
    \begin{tikzcd}
      (A \otimes B) \vlse J & \arrow[l, "\text{id} \vlse \mu_J"']  (A \otimes B) \vlse (J \otimes J) \\
      A \otimes B \arrow[u, "\rho_{A\otimes B}"] \arrow[r, "\rho_A \otimes \rho_B"'] & (A \vlse J) \otimes (B \vlse J) \arrow[u, "w"']
    \end{tikzcd}
  \end{equation}
  commute;  we furthermore require that the isomorphism $i$ is an isomix map, i.e. the following diagram
  \[
    \begin{tikzcd}
      J \ltens J \arrow[d, "i^{-1} \ltens \id"] \arrow[r, "\id \otimes i^{-1}"] 
    & 
      J \ltens I \arrow[d, "\cong"]
    \\
      I \ltens J \arrow[r, "\cong"] 
    & 
      J
    \end{tikzcd}
  \]
  commutes;
  we also require that $(I, i, \delta_I)$ is a comonoid object with respect to the $(\mathcal{C}, J, \lseq)$ monoidal structure
  and we require that $(J, i^{-1}, \mu_J)$ is a monoid object with respect to the $(\mathcal{C}, I, \otimes)$ monoidal structure.
\end{definition}

\begin{definition}
  A \emph{\bvcat with negation} is a $*$-autonomous category
  ${(\mathbf{C}, I, \otimes, \multimap)}$ with an additional monoidal
  structure
  $(\mathbf{C}, J, \vlse)$, such that
  \begin{itemize}
    \item $(\mathbf{C}, J, \vlse)$ is \emph{normal duoidal} to $(\mathbf{C}, I, \otimes)$ and we write
      \[ w \colon (A\vlse B)\otimes (C\vlse D) \to   (A \otimes C) \vlse (B\otimes D) \]
      for the weak interchange natural transformation; we also require that
      $w$ commutes with the symmetry of $\otimes$ in the sense that the following diagram commutes
  \begin{equation}
    \label{def:w-comp-sym}
    \begin{tikzcd}
      (A\vlse B)\otimes (C\vlse D) \arrow[d, "w"'] \arrow[r, "\cong"] & (C\vlse D) \otimes (A\vlse B) \arrow[d, "w"]\\
      (A \otimes C) \vlse (B\otimes D) \arrow[r, "\cong"'] & (C \otimes A) \vlse (D \otimes B)
    \end{tikzcd}
  \end{equation}
    \item $(\mathbf{C}, \perp, \lpar)$ is \emph{normal duoidal} to $(\mathbf{C}, J, \vlse)$ and we write
      \[ w' \colon (A \lpar B) \vlse (C \lpar D) \to (A \vlse C) \lpar (B \vlse D) \]
      for the weak interchange natural transformation; we also require that
      $w'$ commutes with the symmetry of $\lpar$ in the sense that the following diagram commutes
  \begin{equation}
    \label{def:w-comp-sym-par}
    \begin{tikzcd}
      (A\lpar B)\vlse (C\lpar D) \arrow[d, "w'"'] \arrow[r, "\cong"] & (B \lpar A) \vlse (D \lpar C) \arrow[d, "w'"]\\
      (A \vlse C) \lpar (B \vlse D) \arrow[r, "\cong"'] & (B \vlse D) \lpar (A \vlse C)
    \end{tikzcd}
  \end{equation}
  \item We also require that the following diagram
  \begin{equation}
    \begin{tikzcd}
      (A \lseq B) \ltens ((C \lpar E) \lseq (D \lpar F))  \arrow[r, "w"] \arrow[d, "\id \otimes w'"']    & (A \ltens (C \lpar E) ) \lseq (B \ltens (D \lpar F)) \arrow[d, "a \lseq a"]      \\
      (A \lseq B) \ltens ((C \lseq D) \lpar (E \lseq F)) \arrow[d, "a"']                                 & ((A \ltens C) \lpar E) \lseq ((B \ltens D) \lpar F) \arrow[d, "w'"] \\
      ((A \lseq B) \ltens (C \lseq D)) \lpar (E \lseq F)  \arrow[r, "w \lpar \id"'] & ((A \ltens C) \lseq (B \ltens D)) \lpar (E \lseq F)
      \end{tikzcd}
  \end{equation}
  and its symmetric dual commute. Here we write
  \[ a_{ABC} \colon A \ltens (B \lpar C) \to  (A \ltens B) \lpar C  \]
  for the canonical morphism that may be defined via the $*$-autonomous structure.
  \end{itemize}
\end{definition}

This definition implies that the $*$-autonomous structure of $\CC$ is isomix.

\section{Proof of Lemma \ref{lem:interpretations}}\label{app:lem21}

\setcounter{theorem}{20}
\begin{lemma}
  For a 
  $\SBV$-derivation
  $\dD=\nosmash{\odv{A}{\dD}{B}{\SBV}}$, the following diagram commutes (in $\DD$).
  \begin{equation}
    \hfill
    \begin{tikzcd}
      U\sem A \arrow[d, "U\sem{\Phi}"'] \arrow[equal, r] 
    & 
      \afsem{A} \arrow[d, "\afsem{\Phi}"]
    \\
      U\sem{B} \arrow[equal, r] 
    & 
      \afsem{B}
    \end{tikzcd}
    \hfill
  \end{equation}
\end{lemma}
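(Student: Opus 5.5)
We argue by structural induction on the derivation $\dD$, following the inductive definitions of $\sem{\cdot}$ and $\afsem{\cdot}$ in \Cref{def:bv-semantics}. First, $U\sem{A}=\afsem{A}$ for every formula $A$: by induction on $A$, since $U$ is strict $*$-autonomous and strict monoidal for $\lseq$, we get $U(\sem A\odot\sem B)=U\sem A\boxtimes U\sem B$ for every $\odot\in\set{\lpar,\lseq,\ltens}$. For $\lpar$ this uses that $\DD$ is compact closed with strict De Morgan, so $\lpar$ is sent to $\boxtimes$. We also have $U\lunit=\ccunit$ and $U\lneg{c_a}=\lneg{(Uc_a)}$. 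Hence the horizontal equalities in the square are meaningful.

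The structural cases are then immediate by functoriality. For a formula $A$ (identity derivation), both sides are identities. For $\dD\odot\dDb$, we have $U(\sem\dD\odot\sem\dDb)=U\sem\dD\boxtimes U\sem\dDb$ because $U$ strictly preserves each of the three tensors on morphisms; we then apply the induction hypothesis. For vertical composition, $U$ preserves composition, so the claim reduces to the induction hypothesis on $\dD$ and $\dDb$ together with the single-rule cases.

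The single-rule cases are handled one at a time.
\begin{itemize}
\item For $\feq$: $\sem{A\feq B}$ is built from associators, unitors and symmetries of $\ltens$, $\lpar$, $\lseq$ in $\CC$. Under a strict functor into a strict monoidal $\DD$, associators and unitors go to identities and symmetries go to $\swap$. This matches $\afsem{A\feq B}$, which is the corresponding composite in $\DD$. Here we need that $U$ preserves the symmetry of $\lpar$, which follows since $\lpar$ is defined by duality from $\ltens$ and $U$ preserves $\lneg\cdot$ strictly.
\item For $\qD$: this is exactly diagram \eqref{eq:q-coherence}.
\item For $\qU$: use $U(\kappa)=\id$, the strictness of the De Morgan and double-dual isomorphisms, and the fact that $\lneg{\swap}=\swap$ in a strict compact closed category.
\end{itemize}

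The main obstacle is the rules $\aiD$, $\aiU$ and $\swir$, whose $\CC$-interpretations are defined through $\Lambda$, $\mathrm{ev}$ and unnamed $*$-autonomous isomorphisms. These must be shown to map to $\eta$, $\epsilon$ and $\id$ respectively. The approach is as follows. Since $U$ is a strict $*$-autonomous functor, it preserves internal homs, $\mathrm{ev}$ and currying: $U(\Lambda f)=\Lambda(Uf)$ and $U\mathrm{ev}=\mathrm{ev}$, where in $\DD$ we have $X\multimap Y=\lneg X\boxtimes Y$. In a strict compact closed category, $\mathrm{ev}_{XY}$ is $\epsilon$ composed with a symmetry, and $\Lambda$ is precomposition with $\eta$ followed by a symmetry.
\begin{itemize}
\item For $\aiU$, $U(\mathrm{ev}\circ\swap)$ becomes $\epsilon_{\afsem a}$ once the two symmetries cancel.
\item For $\aiD$, $U\Lambda(\mathrm{ev}\circ\swap\circ l)$ becomes $\eta$ by the snake equations.
\item For $\swir$, the composite of currying, evaluations and canonical isomorphisms becomes a string diagram built from cups, caps and crossings on the wires $\afsem A,\afsem B,\afsem C$. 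By the coherence/string-diagram calculus for strict compact closed categories \cite{kelly:laplaza:80}, it yanks to the identity, since it is the canonical linear distribution map and is connected as a permutation that is the identity.
\end{itemize}
I would verify each of these three by drawing the string diagram and applying yanking, rather than by equational manipulation.
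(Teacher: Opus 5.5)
Your proposal is correct and follows essentially the same route as the paper. Both argue by induction on $\Phi$: the $\qdr$ case is given directly by diagram~\eqref{eq:q-coherence}, the $\qur$ case follows from $U(\kappa)=\id$, the strict De Morgan and double-dual isomorphisms and $\swap^\perp=\swap$, and the $\aiur$ case follows by writing $U(\mathrm{ev})$ as $\epsilon$ composed with a symmetry so that the two symmetries cancel. You go slightly further by sketching the $\aidr$, $\swir$ and $\feq$ cases via the snake equations and compact closed coherence, which the paper only calls straightforward consequences of standard results.
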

\begin{proof}
  This follows by induction on $\Phi$. The two most interesting cases are $\qdr$ and $\qur$. The case for $\qdr$ follows immediately from the definition of a strong \bvcat.
  The case for $\qur$ follows easily, because $U \sem{\qur} = U(\cong) \circ U(\sem{q}^\perp) \circ U(\cong) = \id \circ (\id \boxtimes \swap^\perp \boxtimes \id) \circ \id = \id \boxtimes \swap \boxtimes \id
  = \afsem{\qur}$
  where we used the fact that $U$ strictly preserves all the relevant structure and maps the two unnamed isomorphisms to identities in $\DD$.
  
  The remaining cases are straightforward and follow easily from standard results about $*$-autonomous and compact closed categories. We illustrate this by considering the
  $\aiur$ rule.

  In order to prove this, recall that a compact closed category $\DD$ is symmetric monoidal closed with internal hom $A \multimap B = B \otimes \lneg{A}$ and evaluation morphism
  $ \mathrm{ev}'_{A,B} \colon (A \multimap B) \otimes A \to B $
  defined by
  \[ 
    \hfill
    \mathrm{ev}'_{A,B} = \left( (A \multimap B) \otimes A = B \otimes \lneg A \otimes A \xrightarrow{\id \otimes \sigma} B \otimes A \otimes \lneg A \xrightarrow{\id \otimes \epsilon_{A}} B \right) 
    \qquad, 
    \hfill
  \]
  where $\epsilon_A \colon A \otimes \lneg A \to \lunit$ is the counit of the compact closed structure.
  By definition, we have that
  \[
    \hfill
      \sem{\nodn{a\ltens \lneg a}{\aiur}{\lunit}{}} = \sem a \otimes \lneg{\sem a} \xrightarrow{\swap} \lneg{\sem a} \otimes \sem a \xrightarrow{\mathrm{ev}} \lunit 
    \qquad.
    \hfill
  \] 
  Applying $U$ to this definition, we get
  \begin{align*}
    U(\mathrm{ev}) \circ U(\sigma) 
    &= \mathrm{ev}' \circ \sigma & \text{($U$ strictly preserves $*$-autonomous structure)} \\
    &= (\id_{\lunit} \otimes \epsilon) \circ (\id_{\lunit} \otimes \sigma) \circ \sigma  & \text{(Definition)} \\
    &= \epsilon \circ \sigma \circ \sigma & \text{($\DD$ is strict monoidal)}  \\
    &= \epsilon & \text{(Property of symmetry)} \\
    &= \Afsem{\nodn{a\ltens \lneg a}{\aiur}{\lunit}{}} & \text{(Definition)}
  \end{align*}
\end{proof}

\section{Proof of Theorem \ref{thm:caus}}\label{app:thmCaus}

\thmCaus*
\begin{proof}
  The category $\mathbf{FHilb_{sk}}$ is compact closed, because it is categorically equivalent to $\mathbf{Mat}_{\mathbb C}$ from \cite{cqm}
  which is shown to be compact closed in \cite{cqm}. Moreover, it is easy to see that it is \emph{strict} compact closed by examining the
  relevant data in \cite{cqm}.
  Furthermore, it is straightforward to see that it is \emph{dagger} compact closed (with dagger given by the conjugate transpose of matrices)
  because it is the skeletal subcategory of $\mathbf{FHilb}$
  which is well-known to be dagger compact closed \cite{cqm}.
  The $\CPs[-]$ construction (see \cite[Chapter 7]{cqm}) preserves dagger compact closure \cite[Proposition 7.26]{cqm}, so it follows that
  $\CPs[\mathbf{FHilb_{sk}}]$ is also dagger compact closed. Furthermore, $\CPs[\mathbf{FHilb_{sk}}]$ is a strict symmetric monoidal category
  \cite[Proposition 7.22]{cqm}. The double dual isomorphism  in  $\CPs[\mathbf{FHilb_{sk}}]$ is the identity, which can be easily proven using \cite[Proposition 7.24]{cqm}.
  Moreover, we also have that $(A \boxtimes B)^\perp = A^\perp \boxtimes B^\perp$ in $\CPs[\mathbf{FHilb_{sk}}]$ which can be seen by using \cite[Proposition 7.24]{cqm} and \cite[Proposition 7.22]{cqm}.
  It follows that $\CPs[\mathbf{FHilb_{sk}}]$ is a strict compact closed category. Note that since $\mathbf{FHilb_{sk}} \simeq \mathbf{FHilb}$ are categorically equivalent (the former is the skeleton of the latter),
  the results from \cite{sim:kiss:BV} may be readily reused for the category $\mathbf{FHilb_{sk}}$ as well. It follows that $\Caus[\CPs[\mathbf{FHilb_{sk}}]]$ is a \bvcat with negation (in the original sense)
  which gives us Definition \ref{def:bv-model} (1.) -- (4.). Note that the authors in \cite{sim:kiss:BV}
  have proven that the $\kappa$ isomorphism in this category is the identity, i.e. $A^\perp \lseq B^\perp = (A \lseq B)^\perp$.
  Examining the constructions in \cite{sim:kiss:BV} we see that the forgetful functor $U \colon \Caus[\CPs[\mathbf{FHilb_{sk}}]] \to \CPs[\mathbf{FHilb_{sk}}]$ preserves all the required structure strictly
  and satisfies Definition \ref{def:bv-model} (5.) thus giving us a strong \bvcat.
\end{proof}

\end{document}

%% file: macros.tex
\def\set#1{\{#1\}}
\def\Set#1{\left\{#1\right\}}

\def\tuple#1{\langle#1\rangle}
\usepackage{scalerel}

\def\intset#1#2{\set{#1,\ldots,#2}}

\renewcommand{\emptyset}{\varnothing}

\def\defn#1{{\itshape\bfseries\boldmath #1}}

\def\resp#1{(resp.~{#1})\xspace}

\mathchardef\mhyphen="2D

\newcommand\qqquad{\qquad\quad}
\newcommand{\quand}{\quad\mbox{and}\quad}
\newcommand{\qquand}{\qquad\mbox{and}\qquad}
\newcommand{\qqquand}{\qqquad\mbox{and}\qqquad}

\newcommand{\qomma}{\;,\;}

\def\bvcat{$\BV$-category\xspace}
\def\bvcats{$\BV$-categories\xspace}

\newcommand{\atomset}{\mathcal A}

\def\feq{\equiv}

\def\preq{\sim}

\newcommand{\ctx}[1][\chole]{[#1]}
\newcommand{\Ctx}[1][]{\left[{#1}\right]}

\newcommand{\conhole}{\ctx}

\newcommand{\cC}[1][\chole]{\mathcal C\!\Ctx[#1]}

\newcommand{\proves}[1][]{\mathord{\vdash_{#1}\,}}

\newcommand{\Atms}{\mathcal{A}}
\newcommand{\Fms}{\mathcal{F}}
\newcommand{\Deris}{\mathcal{D}}
\newcommand{\AFTypes}{\mathcal{Q}}

\newcommand{\dD}[1][]{\Phi_{#1}}
\newcommand{\dDb}[1][]{\Psi_{#1}}
\newcommand{\dDc}[1][]{\Theta_{#1}}
\newcommand{\dDp}[1][]{\Phi'_{#1}}
\newcommand{\dDs}[1][]{\Phi''_{#1}}
\newcommand{\dDast}[1][]{\Phi^\ast_{#1}}
\newcommand{\oD}[1][]{\odf{\dD[{#1}]}}
\newcommand{\oDp}[1][]{\odf{\dDp[{#1}]}}

\newcommandx{\genoD}[2][1=,2=]{\odv{A_{#1}}{\Phi_{#1}}{B_{#1}}{#2}}
\newcommand{\noD}[1][]{\nodf{\Phi_{#1}}}
\newcommand{\noDb}[1][]{\nodf{\Psi_{#1}}}

\def\MLL{\mathsf{MLL}}

\def\BV{\mathsf{BV}}

\def\SBV{\mathsf{SBV}}

\def\airule{\mathsf{ai}}
\def\irule{\mathsf{i}}
\def\swir{\mathsf{s}}

\def\qrule{\mathsf{q}}
\def\rrule{\mathsf{r}}

\def\idr{\irule\mathord{\downarrow}}
\def\iur{\irule\mathord{\uparrow}}
\def\aidr{\airule\mathord{\downarrow}}
\def\aiur{\airule\mathord{\uparrow}}
\def\qdr{\qrule\mathord{\downarrow}}
\def\qur{\qrule\mathord{\uparrow}}

\def\aiD{\airule\mathord{\downarrow}}
\def\aiU{\airule\mathord{\uparrow}}

\def\rR{\rrule}
\def\rU{\rrule\mathord{\uparrow}}
\def\rD{\rrule\mathord{\downarrow}}

\newcommand{\prem}[1]{\mathsf{pr}(#1)}
\newcommand{\conc}[1]{\mathsf{cn}(#1)}
\newcommand{\premf}{\mathsf{pr}}
\newcommand{\concf}{\mathsf{cn}}

\newcommand{\nosmash}[1]{#1}

\newcommand{\nodn}[4]{\vlinf{#2}{#4}{#3}{#1}}
\newcommand{\nodt}[4]{\vlidf{#2}{#4}{#3}{#1}}

\def\nodf#1{{#1}}
\def\odf#1{\od{\odh{#1}}}
\def\odP#1{\left(\od{#1}\right)}
\def\odnP#1#2#3#4{\left(\odn{#1}{#2}{#3}{#4}\right)}
\def\odNP#1#2#3#4{\left(\odN{#1}{#2}{#3}{#4}\right)}
\def\odvP#1#2#3#4{\left(\odv{#1}{#2}{#3}{#4}\right)}

\def\odrP#1#2#3{\left(\odr{#1}{#2}{#3}\right)}

\def\sysX{\mathsf{X}}

\def\scomp{\boxtimes}

\def\hcomp{\boxtimes}
\def\vcomp{\fatsemi}

\newcommand{\idF}[1][]{\mathsf{id}_{#1}}

\newcommand{\emptystring}{\varepsilon}
\newcommand{\emptyflow}{\emptyset}
\newcommand{\stringof}[1]{\mathopen{\vert\mkern-2.5mu\vert}#1\mathclose{\vert\mkern-2.5mu\vert}}
\newcommand{\stringcat}{\scomp}

\def\diaxoff{.2}
\def\diayoff{.25}

\newcommand{\MAF}[1]{\begin{tikzpicture}[nafC]#1
\end{tikzpicture}}

\newenvironment{AF}{
  \begin{tikzpicture}[nafC]
}{
  \end{tikzpicture}
}

\tikzstyle{nafZ} = [x=\nafxunit,y=\nafyunit,font=\vphantom{Ag},anchor=center,baseline={(0,0)}]
\tikzstyle{dialab} = [font=\scriptsize]

\newenvironment{AFz}{
  \begin{tikzpicture}[nafZ]
}{
  \end{tikzpicture}
}

\newcommand{\nafdia}[4][nafcolor]{
  \path[draw=#1,line width=0.01pt] (-#3-\diaxoff,-#4) -- (-#3-\diaxoff,#4) -- (#3+\diaxoff,#4) -- (#3+\diaxoff,-#4) -- cycle;
  \node[dialab] (0,0) {$\strut#2$};
}

\newcommand{\nafdiaputs}[7][nafcolor]{%
    \foreach \item [count=\i] in {#2} {\global\let\numitems\i}%
    \foreach \item [count=\i] in {#2} {%
        \ifnum\numitems=1
            \pgfmathsetmacro{\x}{#4/2}%
        \else
            \pgfmathsetmacro{\x}{-#4+#4*2*(\i-1)/(\numitems-1)}%
        \fi
        \ifnum\pdfstrcmp{\item}{.}=0
                \node at (\x,#5+\diayoff) {$\cdots$};
        \else
                \draw[draw=nafcolor,nafsingle] (\x,#5) -- (\x,#5+2*\diayoff);
                \node[above, font=\tiny] at (\x,#5) {$\item$};
        \fi
    }%
    \foreach \item [count=\j] in {#3} {\global\let\numitems\j}%
    \foreach \item [count=\j] in {#3} {%
        \ifnum\numitems=1
            \pgfmathsetmacro{\x}{#4/2}%
        \else
            \pgfmathsetmacro{\x}{-#4+#4*2*(\j-1)/(\numitems-1)}%
        \fi
        \ifnum\pdfstrcmp{\item}{.}=0
                \node at (\x,-#5-\diayoff) {$\cdots$};
        \else
                \draw[draw=nafcolor,nafsingle] (\x,-#5) -- (\x,-#5-2*\diayoff);
                \node[below, font=\tiny] at (\x,-#5-\diayoff) {$\item$};
        \fi
    }%
    \foreach \position in {#6}{
        \node at (\position,#5+2*\diayoff) {$\cdots$};
    }
    \foreach \position in {#7}{
        \node at (\position,-#5-2*\diayoff) {$\cdots$};
    }
}

\newcommand{\nafdiagram}[6][nafcolor]{%
  \naf{0,0}{\nafdia[#1]{#2}{#5}{#6}}%
  \nafdiaputs[#1]{#3}{#4}{#5}{#6}{}{}%
}

\newcommand{\nafdiagramd}[8][nafcolor]{%
  \naf{0,0}{\nafdia[#1]{#2}{#5}{#6}}%
  \nafdiaputs[#1]{#3}{#4}{#5}{#6}{#7}{#8}%
}

\newcommand{\nafcenterupshort}[3][nafcolor]{
  \draw [nafsingle,#1] (0,0) -- (0,1);
  \naflabel[#1]{-\nafcenterlabelx,.5*\nafcenterlabely}{#2}
  \naflabel[#1]{\nafcenterlabelx,.5*\nafcenterlabely}{#3}
}

\newcommand{\nafcenterdownshort}[3][nafcolor]{
  \draw [nafsingle,#1] (0,0) -- (0,-1);
  \naflabel[#1]{-\nafcenterlabelx,.5*-\nafcenterlabely}{#2}
  \naflabel[#1]{\nafcenterlabelx,.5*-\nafcenterlabely}{#3}
}

\newcommand{\nafhdshort}[5][1]{%
  \naf{-#1,0}{\nafcenterdownshort[nafcolor]{#2}{#3}}
  \naf{#1,0}{\nafcenterdownshort[nafcolor]{#4}{#5}}
  \pgfmathsetmacro{\xleft}{#1 + 0.2}
  \pgfmathsetmacro{\xright}{-#1 - 0.2}
  \path [draw=nafcolor,nafhill] (\xright,-.1) to[bend left=10] (\xleft,-.1);
}

\newcommand{\nafhushort}[5][1]{%
  \naf{-#1,0}{\nafcenterupshort[nafcolor]{#2}{#3}}
  \naf{#1,0}{\nafcenterupshort[nafcolor]{#4}{#5}}
  \pgfmathsetmacro{\xleft}{#1 + 0.2}
  \pgfmathsetmacro{\xright}{-#1 - 0.2}
  \path [draw=nafcolor,nafhill] (\xright,.1) to[bend right=10] (\xleft,.1);
}

\newcommand{\nafcapshort}[2][nafcolor]{%
  \draw[nafsingle,#1] (-#2,0) arc (180:0:#2);
}

\newcommand{\nafcupshort}[2][nafcolor]{%
  \draw[nafsingle,#1] (-#2,0) arc (180:360:#2);
}

%% file: biblio.bib
@article{category-os,
  author       = {Bert Lindenhovius and
                  Vladimir Zamdzhiev},
  title        = {The Category of Operator Spaces and Complete Contractions},
  journal      = {CoRR},
  volume       = {abs/2412.20999},
  year         = {2024},
  doi          = {10.48550/ARXIV.2412.20999},
  eprinttype    = {arXiv},
  eprint       = {2412.20999},
  bibsource    = {dblp computer science bibliography, https://dblp.org}
}

@article{star-aut-functor,
  title={Coherence of the double involution on *-autonomous categories},
  author={Cockett, J.R.B. and Hasegawa, M. and Seely, R.A.G.},
  journal={Theory and Applications of Categories},
  volume={17},
  number={2},
  pages={17--29},
  year={2006}
}

@book{mac-categories,
  title={Categories for the working mathematician},
  author={Saunders Mac Lane},
  year={1978},
  doi={10.1007/978-1-4757-4721-8},
  publisher={Springer New York, NY}
}

@article{pcbv-full-abstraction,
  author       = {Thomas Ehrhard and
                  Christine Tasson},
  title        = {Probabilistic call by push value},
  journal      = {Log. Methods Comput. Sci.},
  volume       = {15},
  number       = {1},
  year         = {2019},
  doi          = {10.23638/LMCS-15(1:3)2019},
  bibsource    = {dblp computer science bibliography, https://dblp.org}
}

@article{ppcf-full-abstraction,
  author       = {Thomas Ehrhard and
                  Michele Pagani and
                  Christine Tasson},
  title        = {Full Abstraction for Probabilistic {PCF}},
  journal      = {J. {ACM}},
  volume       = {65},
  number       = {4},
  pages        = {23:1--23:44},
  year         = {2018},
  doi          = {10.1145/3164540},
  bibsource    = {dblp computer science bibliography, https://dblp.org}
}

@inproceedings{os-lics,
  author       = {Bert Lindenhovius and
                  Vladimir Zamdzhiev},
  title        = {Operator Spaces, Linear Logic and the {H}eisenberg-{S}chr{\"{o}}dinger
                  Duality of Quantum Theory},
  booktitle    = {40th Annual {ACM/IEEE} Symposium on Logic in Computer Science, {LICS}
                  2025, Singapore, June 23-26, 2025},
  pages        = {870--883},
  publisher    = {{IEEE}},
  year         = {2025},
  url          = {https://doi.org/10.1109/LICS65433.2025.00071},
  doi          = {10.1109/LICS65433.2025.00071},
  bibsource    = {dblp computer science bibliography, https://dblp.org}
}

@article{HS,
title = {Glueing and orthogonality for models of linear logic},
journal = {Theoretical Computer Science},
volume = {294},
number = {1},
pages = {183-231},
year = {2003},
issn = {0304-3975},
doi = {https://doi.org/10.1016/S0304-3975(01)00241-9},
author = {Martin Hyland and Andrea Schalk},
}

@article{carbone:99,
  author    = {Alessandra Carbone},
  title     = {Turning Cycles into Spirals},
  journal   = {Ann. Pure Appl. Logic},
  volume    = {96},
  number    = {1-3},
  pages     = {57--73},
  year      = {1999},
}

@book{blecher-merdy,
    author = {Blecher, David P. and Le Merdy, Christian},
    title = "{Operator Algebras and Their Modules: An operator space approach}",
    publisher = {Oxford University Press},
    year = {2004},
    month = {10},
    isbn = {9780198526599},
    doi = {10.1093/acprof:oso/9780198526599.001.0001},
}

@book{er2000operator,
  title={Operator Spaces},
  author={Effros, E.G. and Ruan, Z.J.},
  isbn={9780198534822},
  lccn={lc00056685},
  series={London Mathematical Society monographs},
  url={https://books.google.fr/books?id=v7mj8Dy84k8C},
  year={2000},
  publisher={Clarendon Press}
}

@book{Pisier_2003,
  place={Cambridge},
  series={London Mathematical Society Lecture Note Series},
  title={Introduction to Operator Space Theory},
  publisher={Cambridge University Press},
  author={Pisier, Gilles},
  year={2003},
  collection={London Mathematical Society Lecture Note Series}
}

@book{cqm,
    author = {Heunen, Chris and Vicary, Jamie},
    title = {Categories for Quantum Theory: An Introduction},
    publisher = {Oxford University Press},
    year = {2019},
    month = {11},
    isbn = {9780198739623},
    doi = {10.1093/oso/9780198739623.001.0001},
    url = {https://doi.org/10.1093/oso/9780198739623.001.0001},
    eprint = {https://academic.oup.com/book/43710/book-pdf/50991591/9780191060069_web.pdf},
}

@article{bv-cats2,
  author       = {James Hefford and
                  Matthew Wilson},
  title        = {A {BV}-Category of Spacetime Interventions},
  journal      = {CoRR},
  volume       = {abs/2502.19022},
  year         = {2025},
  url          = {https://doi.org/10.48550/arXiv.2502.19022},
  doi          = {10.48550/ARXIV.2502.19022},
  eprinttype    = {arXiv},
  eprint       = {2502.19022},
  bibsource    = {dblp computer science bibliography, https://dblp.org}
}

@article{er-shuffle,
 ISSN = {03794024, 18417744},
 URL = {http://www.jstor.org/stable/24718935},
 author = {Edward G. Effros and Zhong-Jin Ruan},
 journal = {Journal of Operator Theory},
 number = {1},
 pages = {131--156},
 publisher = {Theta Foundation},
 title = {OPERATOR SPACE TENSOR PRODUCTS AND {H}OPF CONVOLUTION ALGEBRAS},
 urldate = {2025-09-04},
 volume = {50},
 year = {2003}
}

@InProceedings{gug:gun:par:RTA10,
  author =	{Guglielmi, Alessio and Gundersen, Tom and Parigot, Michel},
  title =	{{A Proof Calculus Which Reduces Syntactic Bureaucracy}},
  booktitle =	{Proceedings of the 21st International Conference on Rewriting Techniques and Applications},
  pages =	{135--150},
  series =	{Leibniz International Proceedings in Informatics (LIPIcs)},
  ISBN =	{978-3-939897-18-7},
  ISSN =	{1868-8969},
  year =	{2010},
  volume =	{6},
  editor =	{Lynch, Christopher},
  publisher =	{Schloss Dagstuhl -- Leibniz-Zentrum f{\"u}r Informatik},
  address =	{Dagstuhl, Germany},
  URN =		{urn:nbn:de:0030-drops-26490},
  doi =		{10.4230/LIPIcs.RTA.2010.135}
}

@misc{acc:str:IBVext,
	title={Intuitionistic {BV} (Extended version)}, 
	author={Matteo Acclavio and Lutz Strassburger},
	year={2025},
	eprint={2505.13284},
	archivePrefix={arXiv},
	primaryClass={cs.LO},
	url={https://arxiv.org/abs/2505.13284}, 
}

@InProceedings{acc:str:IBV,
	author="Acclavio, Matteo
	and Stra{\ss}burger, Lutz",
	editor="Pozzato, Gian Luca
	and Uustalu, Tarmo",
	title="Intuitionistic {BV}",
	booktitle="Automated Reasoning with Analytic Tableaux and Related Methods",
	year="2026",
	publisher="Springer Nature Switzerland",
	address="Cham",
	pages="414--432",
	isbn="978-3-032-06085-3"
}

@article{SIS-V,
  author       = {Alessio Guglielmi and
                  Lutz Stra{\ss}burger},
  title        = {A system of interaction and structure {V:} the exponentials and splitting},
  journal      = {Math. Struct. Comput. Sci.},
  volume       = {21},
  number       = {3},
  pages        = {563--584},
  year         = {2011},
}

@InProceedings{sim:kiss:BV,
  author =	{Simmons, Will and Kissinger, Aleks},
  title =	{Higher-Order Causal Theories Are Models of {BV}-Logic},
  booktitle =	{47th International Symposium on Mathematical Foundations of Computer Science (MFCS 2022)},
  pages =	{80:1--80:14},
  series =	{Leibniz International Proceedings in Informatics (LIPIcs)},
  ISBN =	{978-3-95977-256-3},
  ISSN =	{1868-8969},
  year =	{2022},
  volume =	{241},
  editor =	{Szeider, Stefan and Ganian, Robert and Silva, Alexandra},
  publisher =	{Schloss Dagstuhl -- Leibniz-Zentrum f{\"u}r Informatik},
  address =	{Dagstuhl, Germany},
  doi =		{10.4230/LIPIcs.MFCS.2022.80}
}

@article{dan:ehr:prob,
title = {Probabilistic coherence spaces as a model of higher-order probabilistic computation},
journal = {Information and Computation},
volume = {209},
number = {6},
pages = {966-991},
year = {2011},
issn = {0890-5401},
doi = {https://doi.org/10.1016/j.ic.2011.02.001},
url = {https://www.sciencedirect.com/science/article/pii/S0890540111000411},
author = {Vincent Danos and Thomas Ehrhard}
}

@inproceedings{girard:96:PN,
	author = {Jean-Yves Girard},
	title = {Proof-nets : the parallel syntax for proof-theory},
	booktitle = {Logic and Algebra},
	publisher = {Marcel Dekker, New York},
	editor = {Aldo Ursini and Paolo Agliano},
	year = {1996}
}

@article{lam:str:06:freestar,
   author =  {Fran\c{c}ois Lamarche and Lutz Stra{\ss}burger},
   title =   {From Proof Nets to the Free *-Autonomous Category},
   journal="Logical Methods in Computer Science",
   volume=2,
number="4:3",
   year =    2006,
pages="1--44",
url = "http://arxiv.org/abs/cs/0605054",
}

@Article{blute:93, 
  author = 	"Richard Blute", 
  title = 	"Linear Logic, Coherence and Dinaturality", 
  journal = 	tcs,
  year = 	1993,
  volume = 	115,
  pages = 	"3--41"
}

@unpublished{hughes:freestar,
author = "Dominic Hughes",
title = "Simple free star-autonomous categories and full coherence",
year = 2005,
note="Preprint", 
url={http://arxiv.org/abs/math.CT/0506521},
}

@article{hor:tiu:19, 
	title={Constructing weak simulations from linear implications for processes with private names}, 
	volume={29}, 
	DOI={10.1017/S0960129518000452}, 
	number={8}, 
	journal={Mathematical Structures in Computer Science}, 
	author={Horne, Ross and Tiu, Alwen}, 
	year={2019}, 
	pages={1275–1308}
}

@InProceedings{hor:tiu:ama:cio:private,
	author =	{Horne, Ross and Tiu, Alwen and Aman, Bogdan and Ciobanu, Gabriel},
	title =	{{Private Names in Non-Commutative Logic}},
	booktitle =	{27th International Conference on Concurrency Theory (CONCUR 2016)},
	pages =	{31:1--31:16},
	series =	{Leibniz International Proceedings in Informatics (LIPIcs)},
	ISBN =	{978-3-95977-017-0},
	ISSN =	{1868-8969},
	year =	{2016},
	volume =	{59},
	editor =	{Desharnais, Jos\'{e}e and Jagadeesan, Radha},
	publisher =	{Schloss Dagstuhl -- Leibniz-Zentrum f{\"u}r Informatik},
	address =	{Dagstuhl, Germany},
	doi =		{10.4230/LIPIcs.CONCUR.2016.31}
}

@inproceedings{bru:02,
	title={A purely logical account of sequentiality in proof search},
	author={Bruscoli, Paola},
	booktitle={International Conference on Logic Programming},
	pages={302--316},
	year={2002},
	organization={Springer}
}

@article{abramsky:jagadeesan:94,
   author =  {Samson Abramsky and Radha Jagadeesan},
   title =   {Games and Full Completeness for Multiplicative Linear Logic},
   journal = {Journal of Symbolic Logic},
   volume =  59,
   number =  2,
   pages =   {543--574},
   year =    1994,
}

@article{fle:ret:mix, 
	title={The mix rule}, 
	volume={4}, 
	DOI={10.1017/S0960129500000451}, 
	number={2}, 
	journal={Mathematical Structures in Computer Science}, 
	author={Fleury, Arnaud and Retoré, Christian}, 
	year={1994}, 
	pages={273–285}
}

@article{gir:ll,
	title = {Linear logic},
	journal = {Theoretical Computer Science},
	volume = {50},
	number = {1},
	pages = {1-101},
	year = {1987},
	issn = {0304-3975},
	doi = {10.1016/0304-3975(87)90045-4},
	author = {Jean-Yves Girard}
}

@article{lau:reg:89,
	author = {Danos, Vincent and Regnier, Laurent},
	date = {1989/10/01},
	doi = {10.1007/BF01622878},
	id = {Danos1989},
	isbn = {1432-0665},
	journal = {Archive for Mathematical Logic},
	number = {3},
	pages = {181--203},
	title = {The structure of multiplicatives},
	url = {https://doi.org/10.1007/BF01622878},
	volume = {28},
	year = {1989}
}

@article{blu:pan:slav:deep,
	author = {Blute, Richard and Panangaden, Prakash and Slavnov, Sergey},
	date = {2012/06/01},
	doi = {10.1007/s10485-010-9241-0},
	id = {Blute2012},
	isbn = {1572-9095},
	journal = {Applied Categorical Structures},
	number = {3},
	pages = {209--228},
	title = {Deep Inference and Probabilistic Coherence Spaces},
	volume = {20},
	year = {2012}
	}

@Incollection{blu:gug:iva:pan:str:quantum,
	author="Blute, Richard F.
	and Guglielmi, Alessio
	and Ivanov, Ivan T.
	and Panangaden, Prakash
	and Stra{\ss}burger, Lutz",
	editor="Casadio, Claudia
	and Coecke, Bob
	and Moortgat, Michael
	and Scott, Philip",
	title="A Logical Basis for Quantum Evolution and Entanglement",
	bookTitle="Categories and Types in Logic, Language, and Physics: Essays Dedicated to Jim Lambek on the Occasion of His 90th Birthday",
	year="2014",
	publisher="Springer Berlin Heidelberg",
	address="Berlin, Heidelberg",
	pages="90--107",
	isbn="978-3-642-54789-8",
	doi="10.1007/978-3-642-54789-8_6",
	url="https://doi.org/10.1007/978-3-642-54789-8_6"
}

@article{buss:91,
	title = {The undecidability of k-provability},
	journal = {Annals of Pure and Applied Logic},
	volume = {53},
	number = {1},
	pages = {75-102},
	year = {1991},
	issn = {0168-0072},
	doi = {https://doi.org/10.1016/0168-0072(91)90059-U},
	url = {https://www.sciencedirect.com/science/article/pii/016800729190059U},
	author = {Samuel {R. Buss}}
}

@phdthesis{retore:phd,
	author =  {Retor{\'e}, Christian},
	title =   "R{\'e}seaux et S{\'e}quents Ordonn{\'e}s",
	school =  {Universit{\'e} Paris VII},
	year =    1993,
}

@incollection{retore:99,
   author =    {Christian Retor\'e},
   title =     {Pomset Logic as a Calculus of Directed Cographs},
   editor =    {V. M. Abrusci and C. Casadio},
   booktitle = {Dynamic Perspectives in Logic and Linguistics},
   year =      1999,
   publisher = {Bulzoni},
   address =   {Roma},
   pages =     {221--247},
   note =      {Also available as INRIA Rapport de Recherche RR-3714},
}

@article{retore:97, 
	title={A semantic characterisation of the correctness of a proof net}, 
	volume={7}, 
	DOI={10.1017/S096012959700234X}, 
	number={5}, 
	journal={Mathematical Structures in Computer Science}, 
	author={Retor{\'e}, Christian}, 
	year={1997}, 
	pages={445–452}
}

@article{selinger:dagger,
title = {Dagger Compact Closed Categories and Completely Positive Maps: (Extended Abstract)},
journal = {Electronic Notes in Theoretical Computer Science},
volume = {170},
pages = {139-163},
year = {2007},
note = {Proceedings of the 3rd International Workshop on Quantum Programming Languages (QPL 2005)},
issn = {1571-0661},
doi = {https://doi.org/10.1016/j.entcs.2006.12.018},
url = {https://www.sciencedirect.com/science/article/pii/S1571066107000606},
author = {Peter Selinger}
}

@Article{kelly:maclane:71,
  author =       {Gregory Maxwell Kelly and Saunders {Mac~Lane}},
  title =        {Coherence in Closed Categories},
  journal =      "J.\ of Pure and Applied Algebra",
  year =         {1971},
  volume =       {1},
  pages =        {97--140},
}

@Article{kelly:laplaza:80,
  author =       {Gregory Maxwell Kelly and M.~L.~Laplaza},
  title =        {Coherence for compact closed categories},
  journal =      "J.\ of Pure and Applied Algebra",
  year =         {1980},
  volume =       {19},
  pages =        {193--213},
}

@article{ret:newPomset,
	title={Pomset Logic: The other approach to noncommutativity in logic},
	author={Retor{\'e}, Christian},
	journal={Joachim Lambek: The Interplay of Mathematics, Logic, and Linguistics},
	pages={299--345},
	year={2021},
	publisher={Springer}
}

@inproceedings{gug:str:01,
   author =      {Alessio Guglielmi and Lutz Stra{\ss}burger},
   title =       {Non-commutativity and {MELL} in the Calculus of Structures},
   year =        2001,
   booktitle =   {Computer Science Logic, CSL 2001},
   publisher =   {Springer-Verlag},
   volume =      2142,
   pages =       {54--68},
   editor =      {Laurent Fribourg},
   series =      {LNCS}
}

@article{gug:SIS,
author = {Guglielmi, Alessio},
title = {A system of interaction and structure},
year = {2007},
issue_date = {January 2007},
publisher = {Association for Computing Machinery},
address = {New York, NY, USA},
volume = {8},
number = {1},
issn = {1529-3785},
doi = {10.1145/1182613.1182614},
journal = {ACM Trans. Comput. Logic},
month = jan,
}

@book{polybook, 
	place={Cambridge}, 
	series={London Mathematical Society Lecture Note Series}, 
	title={Polygraphs: From Rewriting to Higher Categories}, 
	publisher={Cambridge University Press}, 
	author={Ara, Dimitri and Burroni, Albert and Guiraud, Yves and Malbos, Philippe and Métayer, François and Mimram, Samuel}, 
	year={2025}, 
	collection={London Mathematical Society Lecture Note Series}
}

@article{string:diagrams:RTI,
	author = {Bonchi, Filippo and Gadducci, Fabio and Kissinger, Aleks and Sobocinski, Pawel and Zanasi, Fabio},
	title = {String Diagram Rewrite Theory {I}: Rewriting with {F}robenius Structure},
	year = {2022},
	issue_date = {April 2022},
	publisher = {Association for Computing Machinery},
	address = {New York, NY, USA},
	volume = {69},
	number = {2},
	issn = {0004-5411},
	url = {https://doi.org/10.1145/3502719},
	doi = {10.1145/3502719},
	journal = {J. ACM},
	month = mar,
	articleno = {14},
	numpages = {58}
}

@article{lafont:boolean,
	title = {Towards an algebraic theory of Boolean circuits},
	journal = {Journal of Pure and Applied Algebra},
	volume = {184},
	number = {2},
	pages = {257-310},
	year = {2003},
	issn = {0022-4049},
	doi = {https://doi.org/10.1016/S0022-4049(03)00069-0},
	url = {https://www.sciencedirect.com/science/article/pii/S0022404903000690},
	author = {Yves Lafont}
}

@article{string:diagrams:RTII, 
	title={String diagram rewrite theory {II}: Rewriting with symmetric monoidal structure}, 
	volume={32}, 
	DOI={10.1017/S0960129522000317}, 
	number={4}, 
	journal={Mathematical Structures in Computer Science}, 
	author={Bonchi, Filippo and Gadducci, Fabio and Kissinger, Aleks and Sobocinski, Pawel and Zanasi, Fabio}, 
	year={2022}, 
	pages={511–541}
}

@phdthesis{tubella:phd,
author="Aler Tubella, Andrea",
title="A study of normalisation through subatomic logic",
year=2016,
school="University of Bath",
}

@article{SIS-IV,
	author = {Lutz Stra{\ss}burger and Guglielmi, Alessio},
	title = {A system of interaction and structure {IV}: The exponentials and decomposition},
	year = {2011},
	issue_date = {July 2011},
	publisher = {Association for Computing Machinery},
	address = {New York, NY, USA},
	volume = {12},
	number = {4},
	issn = {1529-3785},
	url = {https://doi.org/10.1145/1970398.1970399},
	doi = {10.1145/1970398.1970399},
	journal = {ACM Trans. Comput. Logic},
	month = jul,
	articleno = {23},
	numpages = {39}
}

@inproceedings{gug:gun:str:LICS10,
  author    = {Alessio Guglielmi and
               Tom Gundersen and
               Lutz Stra{\ss}burger},
  title     = {Breaking Paths in Atomic Flows for Classical Logic},
  booktitle = {Proceedings of the 25th Annual {IEEE} Symposium on Logic in Computer
               Science, {LICS} 2010, 11-14 July 2010, Edinburgh, United Kingdom},
  pages     = {284--293},
  year      = {2010},
  publisher = {{IEEE} Computer Society},
}

@article{gug:gun:flows,
    title      = {Normalisation Control in Deep Inference via Atomic Flows},
    author     = {Alessio Guglielmi and Tom Gundersen},
    url        = {https://lmcs.episciences.org/1081},
    doi        = {10.2168/LMCS-4(1:9)2008},
    journal    = {Logical Methods in Computer Science},
    issn       = {1860-5974},
    eid        = 9,
    year       = {2008},
    month      = {Mar},
numpages={36},
volume=4,
number="1:9"
}

@article{tiu:SIS-II,
	author = "Tiu, Alwen Fernanto",
	title = "A System of Interaction and Structure {II}:
	{T}he Need for Deep Inference",
	journal="Logical Methods in Computer Science",
	volume=2,
	number=2,
	pages="1--24",
	year = "2006",
	doi = "10.2168/LMCS-2(2:4)2006",
}

@article{tito:str:SIS-III,
  author       = {L{\^{e}} Th{\`{a}}nh Dung Nguy{\^{e}}n and
                  Lutz Stra{\ss}burger},
  title        = {A System of Interaction and Structure {III:} The Complexity of {BV}
                  and Pomset Logic},
  journal      = {Log. Methods Comput. Sci.},
  volume       = {19},
  number       = {4},
  year         = {2023},
  doi          = {10.46298/LMCS-19(4:25)2023},
}

@InProceedings{tito:lutz:csl22,
	author =	{Lê Thành Dũng Nguyên and Lutz Straßburger},
	title =	{{BV and Pomset Logic are not the same}},
	booktitle =	{30th EACSL Annual Conference on Computer Science Logic (CSL 2022)},
	pages =	{3:1--3:17},
	series =	{Leibniz International Proceedings in Informatics (LIPIcs)},
	ISBN =	{978-3-95977-218-1},
	ISSN =	{1868-8969},
	year =	{2022},
	volume =	{216},
	editor =	{Manea, Florin and Simpson, Alex},
	publisher =	{Schloss Dagstuhl -- Leibniz-Zentrum f{\"u}r Informatik},
	address =	{Dagstuhl, Germany},
	doi =		{10.4230/LIPIcs.CSL.2022.3},
}

@article{gir:quantum,
  title={Between Logic and Quantic: a Tract},
  author={Girard, Jean-Yves},
  journal={Linear Logic in Computer Science, London Mathematical Society Lecture Notes Series},
  volume={316},
  year={2004},
  publisher={Cambridge University Press}
}

@phdthesis{simmons:phd,
  edition = {},
  number = {},
  journal = {},
  booktitle = {},
  pages = {},
  publisher = {University of Oxford},
  school = {University of Oxford},
  title = {Causal types for higher-order quantum theory},
  volume = {},
  author = {Will Simmons},
  editor = {},
  year = {2024},
  series = {}
}

@misc{sim:kis:complete,
      title={A complete logic for causal consistency}, 
      author={Will Simmons and Aleks Kissinger},
      year={2024},
      eprint={2403.09297},
      archivePrefix={arXiv},
      primaryClass={cs.LO},
      url={https://arxiv.org/abs/2403.09297}, 
}

@inproceedings{qcs,
  author       = {Thea Li and
                  Vladimir Zamdzhiev},
  editor       = {Nathalie Bertrand and
                  Stefan Milius},
  title        = {Quantum Coherence Spaces Revisited: {A} von {N}eumann (Co)Algebraic
                  Approach},
  booktitle    = {Foundations of Software Science and Computation Structures - 29th
                  International Conference, FoSSaCS 2026, Held as Part of the International
                  Joint Conferences on Theory and Practice of Software, {ETAPS} 2026,
                  Turin, Italy, April 11-16, 2026, Proceedings},
  series       = {Lecture Notes in Computer Science},
  pages        = {418--439},
  publisher    = {Springer},
  year         = {2026},
  url          = {https://doi.org/10.1007/978-3-032-22730-0\_20},
  doi          = {10.1007/978-3-032-22730-0\_20},
  eprint       = {2601.15832},
  bibsource    = {dblp computer science bibliography, https://dblp.org}
}
